\definecolor{ForestGreen}{rgb}{0.1333,0.5451,0.1333}
\definecolor{DarkRed}{rgb}{0.8,0,0}
\definecolor{Red}{rgb}{1,0,0}
\newcommand{\R}{\mathbb{R}}
\newcommand{\I}{\mathbb{I}}
\renewcommand{\tilde}{\widetilde}
\renewcommand{\hat}{\widehat}
\newcommand{\diag}{\operatorname{diag}}
\newcommand{\new}{ {\operatorname{new}} }
\newcommand{\polylog}{\operatorname{polylog}}
\newcommand{\tdelta}{\tilde{\delta}}
\algnewcommand{\LineComment}[1]{\State \(\triangleright\) #1}
\algnewcommand{\algorithmicvariables}{\textbf{global variables}}
\declaretheorem[numberwithin=section]{theorem}
\declaretheorem[numberlike=theorem]{lemma}
\declaretheorem[numberlike=theorem]{proposition}
\declaretheorem[numberlike=theorem]{corollary}
\declaretheorem[numberlike=theorem]{Definition}
\declaretheorem[numberlike=theorem,style=remark]{Remark}
\declaretheorem[numberlike=theorem, refname={task,tasks}]{task}
\title{
A Deterministic Linear Program Solver \\
in Current Matrix Multiplication Time}
\author[1]{Jan van den Brand}
\affil[1]{KTH Royal Institute of Technology, Sweden}
\date{}
\begin{document}

\begin{titlepage}
	\maketitle
	\pagenumbering{roman}
	\ifdefined\ShowComment
	\begin{center}
		{\centering\huge\textcolor{red}{DEBUG VERSION}}
	\end{center}
	\fi
	
Interior point algorithms for solving linear programs 
have been studied extensively for a long time [e.g. 
Karmarkar 1984; 
Lee, Sidford FOCS'14; 
Cohen, Lee, Song STOC'19].
For linear programs of the form 
$\min_{Ax=b, x \ge 0} c^\top x$ 
with $n$ variables and $d$ constraints, 
the generic case $d = \Omega(n)$ has recently been settled 
by Cohen, Lee and Song [STOC'19].
Their algorithm can solve linear programs in 
$\tilde O(n^\omega \log(n/\delta))$ expected time%
\footnote{
Here $\tilde{O}$ hides $\polylog(n)$ factors 
and $O(n^\omega)$ is the time required to multiply two $n \times n$ matrices.
The stated $\tilde O(n^\omega \log(n/\delta))$ bound 
holds for the current bound on $\omega$ with $\omega \approx 2.38$ [V.Williams, STOC'12; Le Gall, ISSAC'14].
The upper bound for the solver will become larger than 
$\tilde O(n^\omega \log(n/\delta))$, if $\omega < 2+1/6$.%
}, where $\delta$ is the relative accuracy.
This is essentially optimal as all known linear system solvers 
require up to $O(n^{\omega})$ time for solving $Ax = b$.
However, for the case of \emph{deterministic} solvers, 
the best upper bound is Vaidya's 30 years old
$O(n^{2.5} \log(n/\delta))$ bound [FOCS'89].
In this paper we show that one can also settle the deterministic setting 
by derandomizing Cohen et al.'s $\tilde{O}(n^\omega \log(n/\delta))$ time algorithm. 
This allows for a strict $\tilde{O}(n^\omega \log(n/\delta))$ time bound, 
instead of an expected one, and a simplified analysis, 
reducing the length of their proof of their 
central path method by roughly half.
Derandomizing this algorithm was also an open question asked in Song's PhD Thesis.

The main tool to achieve our result is a new data-structure 
that can maintain the solution to a linear system in subquadratic time.
More accurately we are able to maintain 
$\sqrt{U}A^\top(AUA^\top)^{-1}A\sqrt{U}\:v$ 
in subquadratic time under $\ell_2$ multiplicative changes 
to the diagonal matrix $U$ and the vector $v$. 
This type of change is common for interior point algorithms.
Previous algorithms [e.g. Vaidya STOC'89; Lee, Sidford FOCS'15; Cohen, Lee, Song STOC'19] 
required $\Omega(n^2)$ time for this task.
In [Cohen, Lee, Song STOC'19] they managed to maintain the matrix 
$\sqrt{U}A^\top(AUA^\top)^{-1}A\sqrt{U}$ in subquadratic time, 
but multiplying it with a dense vector to solve the linear system
still required $\Omega(n^2)$ time.
To improve the complexity of their linear program solver, 
they restricted the solver to only multiply sparse vectors 
via a random sampling argument.
In comparison, our data-structure maintains the entire product 
$\sqrt{U}A^\top(AUA^\top)^{-1}A\sqrt{U}\:v$ 
additionally to just the matrix.
Interestingly, this can be viewed as a simple modification 
of Cohen et al.'s data-structure,
but it significantly simplifies their analysis 
of their central path method
and makes their whole algorithm deterministic.

	\newpage 
	
	\setcounter{tocdepth}{2}
	\tableofcontents
	
\end{titlepage}
\pagenumbering{arabic}


\section{Introduction}
\label{sec:introduction}

Fast algorithms for solving linear programs 
have a long history in computer science.
Solving linear programs was first proven 
to be in $P$ in 1979 by Khachiyan \cite{Khachiyan79};
and later Karmarkar \cite{Karmarkar84} 
found the first polynomial time algorithm 
that was feasible in practice. 
This initiated the long line of work 
of solving linear programs using interior point algorithms, 
motivated by the fact 
that many problems can be stated as linear programs 
and solved using efficient solvers. %
\cite{%
Renegar88,
Vaidya87,%
Vaidya89a,
Vaidya89b,
Megiddo89,
NesterovN89,
NesterovN91,
VaidyaA93,
Anstreicher96,
NesterovT97,
Anstreicher99,
LeeS14,
LeeS15
}

For linear programs of the form 
$\min_{Ax=b, x \ge 0} c^\top x$ 
with $n$ variables, $d$ constraints 
and $nnz(A)$ non-zero entries, 
the current fastest algorithms are
$\tilde{O}(\sqrt{d}(nnz(A) + d^2))$ \cite{LeeS14, LeeS15}%
\footnote{%
Here $\tilde{O}( \cdot )$ hides $\text{polylog}(n)$ and $\text{polylog}(1/\delta)$ terms.
}
and $\tilde{O}(n^\omega)$-time \cite{CohenLS19}%
\footnote{%
The algorithm of \cite{CohenLS19} runs in 
$O((n^\omega+n^{2.5-\alpha/2+o(1)}+n^{2+1/6})\log(n) \log(n/\delta))$ time, 
where $\delta$ is the relative accuracy 
and $\alpha$ is the dual matrix exponent.
The dual exponent $\alpha$ is the largest $a$ 
such that an $n \times n$ matrix can be multiplied 
with an $n \times n^a$ matrix 
in $n^{2+o(1)}$ arithmetic operations.
For current $\omega \approx 2.38$ 
and $\alpha \approx 0.31$
this time complexity is just 
$O(n^{\omega}\log (n) \log(n/\delta))$.
}, 
where the bound $O(n^\omega)$ is the number of arithmetic operations 
required to multiply two $n\times n$ matrices.\footnote{%
The parameter $\omega$ is also called \emph{matrix exponent}.
}
For the generic case $d = \Omega(n)$, 
the latter complexity is essentially optimal
as all known linear system solvers 
require up to $O(n^{\omega})$ time 
for solving $Ax = b$.
As the complexity is essentially optimal, but the algorithm is randomized,
a typical next step (e.g. \cite{%
KawarabayashiT19,%
Chazelle00a,%
PettieR02,%
MurtaghRSV17%
}) is to attempt to derandomize this algorithm.
Derandomizing algorithms has the benefit 
that the required analysis can lead to further understanding of the studied problem.
There have been precedences where derandomizing algorithms 
required developing new techniques,
which then allowed for improvements in other settings.
For example, in order to derandomize 
Karger's edge connectivity algorithm \cite{Karger00}, 
Kawarabayashi and Thorup \cite{KawarabayashiT19} 
had to develop new techniques, 
which then lead to new results 
in the distributed setting \cite{DagaHNS19}.

In the related area of linear program solvers in the \emph{real RAM model}
(i.e. when analyzing the complexity only in terms of the dimension, but not the bit-complexity of the input),
a lot of effort has been put in derandomization and finding fast deterministic algorithms 
(see e.g. \cite{%
ChazelleM96,%
BronnimannCM99,%
Chan16}).
Yet, there is still a wide gap between the best randomized and deterministic complexity bounds.%
\footnote{%
For an overview see \cite{Chan16}.
The fastest deterministic algorithm requires $O(n d^{d(1/2+o(1))})$ time \cite{Chan16}, 
while with randomized techniques an $O(nd^2 + \exp(O(\sqrt{d \log d})))$ time bound is possible (a combination of \cite{Clarkson95,Kalai92,MatousekSW96}).
}
The same observation can be made in our setting, 
when analyzing the complexity with respect to the bit-complexity of the input, 
where the best deterministic bounds are
$\tilde{O}(\sqrt{n} \cdot nnz(A) + nd^{1.38})$ \cite{Karmarkar84}%
\footnote{When using the $\tilde{O}(\sqrt{n})$-iterations short step method.},
$\tilde{O}(\sqrt{n} \cdot nnz(A) + n^{1.34} d^{1.15})$ \cite{Vaidya89a} and
$\tilde{O}(d \cdot nnz(A) + d^{\omega+1})$ \cite{Vaidya89b}.%
\footnote{%
For curious readers we recommend \cite{LeeS15}. 
They give a brief overview of these algorithms 
and offer a helpful graph that shows which algorithm is fastest 
for which range of $n,d,nnz(A)$.}
For $d = \Omega(n)$, all deterministic algorithms 
are stuck at $\Omega(n^{2.5})$ time.
Further, these bounds are at least 30 years old and all new algorithms, 
that have been able to improve upon these bounds, 
crucially use randomized techniques. 
This raises the question: 
\emph{Is there a deterministic algorithm that can close the gap between deterministic and randomized complexity bounds, or at least break the 30 years old $\Omega(n^{2.5})$ barrier?}

We are able to answer this question affirmatively
by derandomizing the algorithm of Cohen et al.~\cite{CohenLS19}. 
Our deterministic algorithm is not just able 
to break the 30 years old barrier, 
it even matches one of the fastest randomized bounds of $\tilde{O}(n^\omega)$.
This closes the complexity gap between randomized and deterministic algorithms for large $d$.
More formally, we prove the following result:

\begin{theorem}\label{thm:LP}
Let $\min_{Ax = b, x\ge 0} c^\top x$ be a linear program 
without redundant constraints.
Let $R$ be a bound on $\|x\|_1$ for all $x \ge 0$ with $Ax = b$.
Then for any $0 < \delta \le 1$ we can compute $x \ge 0$ such that
\ifdefined\SODAversion
\begin{align*}
c^\top x \le \min_{Ax=b,x\ge 0} c^\top x + \delta \|c\|_\infty R 
\hspace{10pt} \text{and} \\
\| Ax - b \|_1 \le \delta \left(
R \sum_{i,j} |A_{i,j}| + \|b\|_1
\right)
\end{align*}
\else
$$
c^\top x \le \min_{Ax=b,x\ge 0} c^\top x + \delta \|c\|_\infty R
\hspace{20pt}
\text{and}
\hspace{20pt}
\| Ax - b \|_1 \le \delta \left(
R \sum_{i,j} |A_{i,j}| + \|b\|_1
\right)
$$
\fi
in time
$O(n^\omega \log^2(n) \log(n/\delta)),$
for the current matrix multiplication time with $\omega \approx 2.38$ \cite{Williams12,Gall14}.
\end{theorem}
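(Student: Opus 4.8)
The plan is to run the interior point framework of Cohen, Lee and Song~\cite{CohenLS19}, but with their randomized projection-matrix data structure replaced by the new data structure of this paper, which maintains the full vector $\sqrt{U}A^\top(AUA^\top)^{-1}A\sqrt{U}\,v$ (not just the matrix) under $\ell_2$-multiplicative updates to the diagonal $U$ and to $v$. First I would reduce the LP to a canonical form (the no-redundant-constraints hypothesis gives $A$ full row rank, so $d \le n$), add the logarithmic barrier $-\sum_i \log x_i$, and set up the \emph{robust} central path: a path-following scheme maintaining a primal-dual pair $(x,s)$ whose centrality measure (a mixed $\ell_2/\ell_\infty$ norm of $xs/\mu - \mathbf 1$) stays below a small constant while $\mu$ is geometrically decreased. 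The one structural lemma I would extract is that each step needs only an \emph{approximate} Newton direction, i.e.\ an evaluation of the projection $\sqrt{XS^{-1}}A^\top(AXS^{-1}A^\top)^{-1}A\sqrt{XS^{-1}}\,h$ for an appropriate $h$ that is accurate in relative $\ell_2$-norm up to a factor $\operatorname{poly}(1/\log n)$; errors of this size are absorbed by the robustness of the path. This is precisely why the analysis becomes shorter: once the step direction is only needed $\ell_2$-approximately, the sampling and concentration arguments of~\cite{CohenLS19} are no longer necessary.

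Second, I would bound the number of steps and handle the two endpoints. Starting from a point near the central path at parameter $\mu_0 = \operatorname{poly}(n)$ — obtained by the standard device of augmenting the program with artificial variables so that an explicit central point is available — and decreasing $\mu$ by a $(1 - c/\sqrt n)$ factor per step, after $T = O(\sqrt n\,\log(n/\delta))$ steps we reach $\mu \le \delta^2/\operatorname{poly}(n)$, at which point the usual LP-rounding argument turns $(x,s)$ into a point $x \ge 0$ meeting the two stated guarantees on $c^\top x$ and $\|Ax-b\|_1$. All arithmetic is performed on numbers of bit-length $O(\log(n/\delta))$; the no-redundant-constraints assumption together with the bound $R$ on $\|x\|_1$ guarantee that the central path and all intermediate quantities are well-conditioned, so this truncation is legitimate and it is what contributes the $\log(n/\delta)$ factor.

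Third, I would implement the $T$ steps efficiently. Each step is a single query to the data structure with the updated diagonal $U = XS^{-1}$ and vector $v$; the centrality invariant bounds the $\ell_2$-multiplicative change of $U$ and of $v$ between consecutive steps (up to logarithmic factors), so the amortized guarantee of the data structure applies along the path. Running its amortized analysis against the path — over any window of $\sqrt n$ steps the total $\ell_2$-movement of $U$ is $O(\sqrt n)$, so the expensive matrix-multiplication-based rebuilds occur rarely enough for the cost to telescope — yields amortized $\tilde O(n^{\omega-1/2})$ per step, hence $\tilde O(n^\omega)$ over all $T$ steps. The remaining terms appearing in the running time of~\cite{CohenLS19}, namely $n^{2.5-\alpha/2+o(1)}$ and $n^{2+1/6}$, are dominated by $n^\omega$ for the current value $\omega \approx 2.38$, and keeping careful track of the logarithmic overheads collapses everything to $O(n^\omega \log^2 n \log(n/\delta))$.

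The step I expect to be the main obstacle is matching the two ``budgets'' at the interface between the data structure and the central path: one must verify that the $\ell_2$-approximate vector returned by the data structure is exactly the type of error that the robust central path tolerates, and, conversely, that the $\ell_2$-bounded movement of the central path iterates is exactly the update pattern under which the data structure achieves subquadratic amortized time — and that the logarithmic factors from both sides combine to no more than $\log^2 n$. The path-following, the artificial-variable initialization, and the final rounding are by now routine.
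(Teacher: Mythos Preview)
Your high-level plan matches the paper's: transform the LP to obtain an explicit near-central starting point (the paper uses the Ye--Todd--Mizuno construction, \Cref{lem:feasible_LP}), run a short-step method for $O(\sqrt n\,\varepsilon^{-1}\log(n/\delta))$ iterations with the $\cosh$-potential $\Phi_\lambda$ controlling $\|\mu/t-1\|_\infty$, and charge each iteration to the projection-maintenance data structure, whose amortized cost is $\tilde O(n^{\omega-1/2})$ because $u=x/s$, $\mu=xs$, and $\mu/t$ each move by $O(\varepsilon)$ in relative $\ell_2$-norm per step (\Cref{lem:change_mu_u}).

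There is, however, a real gap at exactly the point you flag as the main obstacle, and your framing of it is off in a way that matters. You describe the data structure as returning an ``$\ell_2$-approximate'' projection and say the robust central path absorbs such errors. That is not what the deterministic data structure of this paper guarantees: it returns \emph{exactly} $\tilde P\,f(\tilde v)$ for entrywise $(1\pm\varepsilon_{mp})$-approximations $\tilde u,\tilde v$ of $u,v$, with no control whatsoever on $\|\tilde P f(\tilde v)-P f(v)\|_2$. In the application $v=\mu/t$ and $f(\cdot)$ is essentially $\nabla\Phi_\lambda(\cdot-1)$, so whenever $\mu_i$ and $\tilde\mu_i$ straddle $t$ the $i$th coordinates of the true and approximate gradients have opposite signs; the ``approximate Newton direction'' can point \emph{against} the descent direction, and no $\ell_2$-robustness argument covers this. (An honest $\ell_2$-approximate projection is what~\cite{LeeSZ19} obtain via sketching, but that is randomized --- precisely what must be avoided here.)

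The paper resolves this with \Cref{lem:gradient_direction}: for the $\cosh$ potential one has
\[
\Bigl\langle \nabla\Phi_\lambda(\mu/t-1),\;\frac{\nabla\Phi_\lambda(\tilde\mu/t-1)}{\|\nabla\Phi_\lambda(\tilde\mu/t-1)\|_2}\Bigr\rangle
\;\ge\; 0.9\,\|\nabla\Phi_\lambda(\mu/t-1)\|_2 \;-\; 2.5\lambda^2\varepsilon_{mp}\sqrt n,
\]
so the approximate gradient is a descent direction whenever the potential exceeds a threshold of order $n$. This is the one non-routine ingredient; it is what lets the paper drop the sampling and concentration machinery of~\cite{CohenLS19} and still push the inductive bound $\Phi_\lambda(\mu/t-1)\le 2n$ through (\Cref{lem:inductive_step}). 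Replace your ``$\ell_2$-approximate step'' abstraction with this alignment lemma and the rest of your outline goes through essentially as the paper does it.
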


\begin{Remark}
The real complexity of \Cref{thm:LP} is 
$$O((n^\omega + n^{2.5-\alpha/2+o(1)} + n^{2+1/6+o(1)}) \log^2(n) \log(n/\delta)),$$
which can be simplified to 
$O(n^\omega \log^2(n) \log(n/\delta))$ 
for current values of 
$\omega \approx 2.38$ \cite{Williams12,Gall14},
$\alpha\approx 0.31$ \cite{GallU18}.
For integral $A,b,c$ the parameter $\delta = 2^{-O(L)}$ is enough 
to round the approximate solution of \Cref{thm:LP} to an exact solution. 
Here $L = \log ( 1 +\det_{\max} + \|c\|_\infty + \|b\|_\infty)$
is the bit-complexity, where $\det_{\max}$ is the largest determinant 
of any square submatrix of $A$. \cite{Renegar88,LeeS13}
\end{Remark}

Derandomizing the $\tilde{O}(n^\omega)$ algorithm of \cite{CohenLS19}
was stated by Song as an open question in \cite{Song19}.
In addition to answering this open question,
our techniques also allow us to simplify 
the analysis of the central path method used in \cite{CohenLS19}, 
reducing the length by roughly half.

\paragraph{Technical Ideas}

Interior point algorithms must typically repeatedly compute 
the projection of a certain vector $v$, 
i.e. they must compute $Pv$ where $P$ is a projection matrix. 
It suffices to use an approximation $\tilde{P}$ of $P$, 
and in each iteration the matrix $P$ changes only a bit,
which allowed previous results to maintain the approximation $\tilde{P}$ quickly
(See for example \cite{
Karmarkar84,%
NesterovN89,%
Vaidya89a,%
LeeS15%
}).
A natural barrier for improving linear program solvers 
is the fact that computing $\tilde{P}v$ requires $\Omega(n^2)$ for dense $v$.
This leads to the $\Omega(n^{2.5})$ barrier for linear program solvers,
because a total of $\Omega(\sqrt{n})$ projections must be computed.
Cohen et al.~were able to break this barrier in \cite{CohenLS19} 
by sparsifying $v$ to some approximate $\tilde{v}$ via random sampling
and computing $\tilde{P}\tilde{v}$ instead of $\tilde{P}v$.
Our new approach for breaking this barrier deterministically 
is to maintain the product $\tilde{P}\tilde{v}$ directly
for some approximation $\tilde{v}$ of $v$.
This is the key difference of our linear program solver
compared to previous results, which only maintained $\tilde{P}$. 

We will now outline the difference between 
our deterministic $\tilde{O}(n^\omega)$ solver for linear programs
and the randomized result of Cohen et al.~\cite{CohenLS19}.
They managed to obtain a fast solver for linear programs
by computing the projection $\tilde{P}\tilde{v}$ in subquadratic time 
using two clever tools:
\begin{enumerate}[label=({\arabic*})]
\item They created a data-structure to maintain $\tilde{P}$ in sub-quadratic time,
amortized over $\sqrt{n}$ iterations. \label{item:datastructure}
\item They created a novel stochastic central path method 
which can sparsify the vector $v$ to some approximate $\tilde v$ 
via random sampling. \label{item:centralpath} 
Thus the projection $\tilde{P}\tilde{v}$ could be computed in sub-quadratic worst-case time.
\end{enumerate}

Derandomizing this algorithm seems like a difficult task 
as it is not clear how to obtain a deterministic sparsification of $v$.
Recently \cite{LeeSZ19} derandomized 
the central path method \ref{item:centralpath}, 
so they could extend their linear program solver
to the problem of Empirical Risk Minimization. 
However, in order to achieve $\tilde{O}(n^\omega)$ total time, 
they had to reduce some dimension 
in the representation of $\tilde{P}$ via random sketching, 
which resulted in randomizing the data-structure \ref{item:datastructure}.

In this paper we show how to completely derandomize 
the algorithm of \cite{CohenLS19} via a data-structure
that can maintain the projection $\tilde{P}\tilde{v}$ directly 
for some \emph{dense} approximate $\tilde v \approx v$, 
instead of just maintaining the matrix $\tilde{P}$ as in \cite{CohenLS19}.
This result can be obtained in two different ways. 
One option is to use a dynamic linear system algorithm 
(e.g. \cite{Sankowski04,BrandNS19}) via a black-box reduction,
or alternatively one can interpret the resulting data-structure
as a surprisingly simple extension 
of the data-structure used in \cite{CohenLS19}.
Indeed the algorithmic description 
of the data-structure \ref{item:datastructure} of \cite{CohenLS19}
grows only by a few lines 
(see \Cref{alg:projection_maintenance} in \Cref{sec:projection_maintenance}).

The high level idea of our new data-structure is that the vector $v$ 
can be written as some function $v_i = f(w_i)$,
where the argument vector $w$ does not change much 
between two iterations of the central path method. 
By approximating $w$ by some $\tilde{w}$ 
we can re-use information of the previous iteration 
when computing the projection of $\tilde{v} = f(\tilde{w})$.
One difficulty, that we must overcome, 
is that $\tilde{v} := f(\tilde{w})$ 
is not an approximation of $v = f(w)$ in the classical sense
(i.e. we can not satisfy $\|v - \tilde{v}\| \le \varepsilon \|v\|$ or even $\tilde{v}_i \approx v_i$), 
even if $\tilde{w}$ is an approximation of $w$,
because for non-monotonous $f$, the vectors $v$ and $\tilde{v}$ could point in opposite directions.
This is a problem for the short step central path method,
because these algorithm can be interpreted as some gradient descent and
here $v$ depends on the gradient of some potential function.
So if $\tilde{v}$ points in the opposite direction, 
then the algorithm will actually increase the potential function instead of decreasing it.

To solve this issue, we must also perform some small modifications 
to the short step central path method, 
so it is able to handle our approach of ``approximating" $v$.
The main modification to the short step central path method is
that we measure our progress with respect to the $\ell_\infty$-norm,
similar to \cite{AnstreicherB95,CohenLS19}.
The proof for this will be based on \cite{CohenLS19},
where their random sampling \ref{item:centralpath} $\tilde{v}$ of $v$
can be interpreted as some approximation of $v$.
This allows us to adapt their proof 
to our non-standard way of approximating 
$v = f(w)$ via $\tilde{v} = f(\tilde{w})$. 
The removal of all randomized components from their proof
also allows us to reduce the length of their central path analysis by roughly half.
This reduction of the analysis, 
together with the simple extension 
of their data-structure \ref{item:datastructure} 
to maintain $\tilde{P}\tilde{v}$,
is an interesting difference to other derandomization results, 
where complicated tools need to be created for replacing the randomized components.

\section{Outline}

In this section we outline how our algorithm works 
and how we adapt existing ideas 
such as the short step central path method 
and the projection maintenance.
Readers only interested in verifying our algorithm can skip this overview, 
but reading it can help provide some intuition 
for how the different parts of our algorithm interact
and what difficulties must be solved in order for our algorithm to work.

We start the outline with a brief summary of the short step central path method,
which motivates why we must maintain a certain projection. 
Readers already familiar with the short step central path method
can skip ahead to the next subsection \ref{sub:projection_maintenance}.

In \Cref{sub:projection_maintenance} we describe the task of the projection maintenance,
and how we are able to perform this task quickly by using a certain notion of approximation
(details in \Cref{sec:projection_maintenance}).
The next \Cref{sec:outline:modified_central_path} of the outline explains the difficulties
that we encounter by using this type of approximation, 
and how we are able to solve these problems (details in \Cref{sec:central_path}).

Before outlining our linear program solver, 
we want to quickly define some important notation:
For two $n$ dimensional vectors $v,w$ 
we write $vw$ for the entry-wise product and $v/w$ for the entry-wise division, 
so $(vw)_i := v_i w_i$ and $(v/w)_i := v_i/w_i$.
For a scalar $s$ the product $sv$ is the typical entry-wise product 
and analogously we define $v - s$ as the entry-wise difference, 
so $(v-s)_i := v_i - s$. 
%
For two vectors $v,w,$ we write $v \le w$ 
if $v_i \le w_i$ for all $i=1,...,n$.
%
We write $\|v\|_p$ for the $\ell_p$-norm, 
so $\|v\|_p = (\sum_{i=1}^n |v_i|^p)^{1/p}$ for $0 < p < \infty$
and $\|v\|_\infty = \max_{i} |v_i|$.

\subsection{Short Step Central Path Method}
\label{sec:outline:central_path}

We first give a brief summary of the short step central path method. 
Readers familiar with these types of algorithms 
can skip ahead to the next subsection \ref{sub:projection_maintenance}.
\medskip

Consider the linear program
$
\displaystyle\min_{Ax=b,x \ge 0} c^\top x
$
and its dual program
$
\displaystyle\max_{A^\top y \le c} b^\top y
$.\\
Given a feasible dual solution $y$ (a vector $y$ s.t. $A^\top y \le c$), 
we can define the slack vector $s := c - A^\top y$. 
Based on the complementary slackness condition (see e.g. \cite{PapadimitriouS82})
we know a triple $(x,y,s)$ is optimal, 
if and only if
\begin{align*}
x_i s_i &= 0 \text{ for all } i,\\
Ax &= b,\\
A^\top y + s &= c,\\
x_i, s_i &\ge 0 \text{ for all } i.
\end{align*}
If only the last three conditions are satisfied, then we call the triple $(x,y,s)$ feasible. 
Given such a feasible triple, we define the vector $\mu$ such that $\mu_i := x_i s_i$ 
and the complementary slackness theorem motivates 
why we should try to minimize the entries of $\mu$.

It is known how to transform the LP in such a way, 
that we can easily construct a feasible solution triple 
$(x,y,s)$ with $x_is_i \approx 1$ for all $i=1,...,n$ (e.g. \Cref{lem:feasible_LP} \cite{YeTM94}).
Thus for $t := 1$ we have $\mu_i \approx t$.
The idea is to repeatedly decrease $t$ and to modify the solution
$x \leftarrow x + \delta_x, y \leftarrow y+\delta_y, s \leftarrow s+\delta_s$ 
in such a way, that the entries of $\mu$ stay close to $t$.
The change of $\mu$ is given by 
$\mu^\new_i = (x+\delta_x)_i(s+\delta_s)_i = \mu_i + x_i\delta_{s,i} + s_i \delta_{x,i} + \delta_{x,i} \delta_{s,i}$ 
and if $\delta_x,\delta_s$ are small enough, 
this can be approximated via $\mu^\new_i \approx \mu_i + x_i \delta_{s,i} + s_i \delta_{x,i}$.
Thus to change $\mu$ by (approximately) $\delta_\mu$, 
we can solve the following linear system
\begin{align}
X \delta_s + S \delta_x &= \delta_\mu, \label{eq:interior_point_system}\\
A \delta_x &= 0, \notag \\
A^\top \delta_y + \delta_s &= 0, \notag
\end{align}
where $X=\diag(x)$ and $S=\diag(s)$ are diagonal matrices 
with the entries of $x$ and $s$ on the diagonal respectively. 
The solution to this system is given by the following lemma:
\begin{lemma}[{\cite{CohenLS19}}]
\label{lem:solution_LP_system}
The solution for $\delta_x,\delta_s$ in \eqref{eq:interior_point_system} 
is given by
$$
\delta_x = \frac{X}{\sqrt{XS}} (I-P) \frac{1}{\sqrt{XS}} \delta_\mu
\text{ and }
\delta_s = \frac{S}{\sqrt{XS}} P \frac{1}{\sqrt{XS}}\delta_{\mu}.
$$
where
$$P := \sqrt{\frac{X}{S}}A^\top \left( A\frac{X}{S}A^\top \right)^{-1} A\sqrt{\frac{X}{S}}.$$
\end{lemma}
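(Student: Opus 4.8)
The plan is to solve the linear system \eqref{eq:interior_point_system} by straightforward Gaussian elimination and then massage the resulting closed form into the stated expressions. First I would use the third equation $A^\top \delta_y + \delta_s = 0$ to write $\delta_s = -A^\top \delta_y$, and substitute this into the first equation $X\delta_s + S\delta_x = \delta_\mu$ to get $\delta_x = S^{-1}(\delta_\mu + XA^\top\delta_y)$; here we use that $S = \diag(s)$ is invertible, which holds along the central path since $s > 0$. Applying $A$ to this identity and invoking the second equation $A\delta_x = 0$ gives $A S^{-1} X A^\top \delta_y = -A S^{-1}\delta_\mu$. Since the LP has no redundant constraints, $A$ has full row rank, and because $S^{-1}X = X/S$ is a positive diagonal matrix the matrix $A\frac{X}{S}A^\top$ is invertible, so $\delta_y = -\left(A\frac{X}{S}A^\top\right)^{-1} A S^{-1}\delta_\mu$.

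Substituting $\delta_y$ back yields the closed forms $\delta_s = A^\top\left(A\frac{X}{S}A^\top\right)^{-1} A S^{-1}\delta_\mu$ and $\delta_x = S^{-1}\delta_\mu - S^{-1}X A^\top\left(A\frac{X}{S}A^\top\right)^{-1} A S^{-1}\delta_\mu$. By construction this triple $(\delta_x,\delta_y,\delta_s)$ satisfies all three equations of \eqref{eq:interior_point_system}, and it is the unique such solution (invertibility of $S$ and of $A\frac{X}{S}A^\top$ forces the above choices), so it remains only to verify that these coincide with the formulas in the statement.

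For that last step I would expand $P = \sqrt{\frac{X}{S}}A^\top\left(A\frac{X}{S}A^\top\right)^{-1}A\sqrt{\frac{X}{S}}$ and simplify the diagonal prefactors entrywise. Since $\sqrt{XS}$ and $\sqrt{X/S}$ have entries $\sqrt{x_is_i}$ and $\sqrt{x_i/s_i}$ with $x_i, s_i > 0$, one checks directly that $\frac{S}{\sqrt{XS}}\sqrt{\frac{X}{S}} = I$, $\sqrt{\frac{X}{S}}\frac{1}{\sqrt{XS}} = S^{-1}$, $\frac{X}{\sqrt{XS}}\frac{1}{\sqrt{XS}} = S^{-1}$, and $\frac{X}{\sqrt{XS}}\sqrt{\frac{X}{S}} = \frac{X}{S}$ (all immediate, as these are diagonal matrices). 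Plugging these into $\frac{S}{\sqrt{XS}}P\frac{1}{\sqrt{XS}}\delta_\mu$ reproduces exactly the derived expression for $\delta_s$, and into $\frac{X}{\sqrt{XS}}(I-P)\frac{1}{\sqrt{XS}}\delta_\mu$ reproduces the derived expression for $\delta_x$, completing the proof. I do not expect a genuine obstacle here: the argument is elimination plus bookkeeping with commuting diagonal matrices, and the only point deserving an explicit word is the invertibility of $A\frac{X}{S}A^\top$, which follows from the full row rank of $A$.
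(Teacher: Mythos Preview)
Your derivation is correct and is exactly the standard elimination argument for this system; the paper itself does not give a proof but merely cites the lemma from \cite{CohenLS19}, so there is nothing to compare against beyond noting that your write-up is the expected one. The only minor remark is that the invertibility of $A\frac{X}{S}A^\top$ follows from $A$ having full row rank together with $X/S$ being positive diagonal (so $A\frac{X}{S}A^\top$ is in fact positive definite), which you already mention.
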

A typical choice for the decrement of $t$ is to multiply it by $1-O(\frac{1}{\sqrt{n}})$, 
which means it takes about $O(\sqrt{n} / \delta)$ iterations 
until $t$ reaches some desired accuracy parameter $\delta > 0$ \cite{Renegar88,Vaidya87}.

For the short step central path method the distance 
between $\mu$ and $t$ is typically measured by
$\sum_{i=1}^n (\mu_i - t)^2 = \| \mu - t \|_2^2$ 
and one tries to maintain $x,s$ in such a way that $\| \mu - t \|_2^2 \le O(t^2)$.
This can be modelled via the potential function $\Phi(x) = \| x \|_2^2$,
and then one tries to maintain $\mu$ such that $\Phi(\mu/t-1) = O(1)$, 
which is equivalent to $\| \mu - t \|_2^2 \le O(t^2)$.
Thus a good choice for $\delta_\mu$ would be a vector 
with the same direction as $-\nabla\Phi(\mu/t-1)$,
as this allows us to decrease the potential, 
which then means the distance between $\mu$ and $t$ is reduced.

\subsection{Projection Maintenance (Details in \Cref{sec:projection_maintenance})}
\label{sub:projection_maintenance}

In this subsection we outline one of the main results of this paper and sketch its proof.
As described in the previous section, 
we must repeatedly compute $Pv$ for 
$P := \sqrt{\frac{X}{S}}A^\top \left( A\frac{X}{S}A^\top \right)^{-1} A \sqrt{\frac{X}{S}}$ 
and $v := \frac{\delta_\mu}{\sqrt{XS}}$, 
where the matrix $A$ describes the constraints of the linear program, 
$X$ and $S$ are diagonal matrices 
that depend on some current feasible solution 
and $\delta_\mu$ is some vector.

Our main result is to maintain an approximation of $Pv$ deterministically 
in $\tilde O(n^{\omega-0.5} + n^{2.5-\alpha})$ amortized time, 
where $\omega$ is the current matrix multiplication exponent 
and $\alpha$ is the dual exponent.
This new data-structure is a simple extension 
of the data-structure presented in \cite{CohenLS19}, 
which was able to maintain an approximation of $P$ within the same time bound, 
but their data-structure required up to $O(n^2)$ time for computing $Pv$ for dense $v$.

The exact statement of our result involves various details, for example how $P$ and $v$ change over time. 
So we first want to describe the task of maintaining $Pv$ in more detail.

\paragraph{The task}

\ifdefined\SODAversion
The matrix $P$
\else
The matrix $P = \sqrt{\frac{X}{S}}A^\top \left( A\frac{X}{S}A^\top \right)^{-1} A \sqrt{\frac{X}{S}}$
\fi
shares a lot of structure between two iterations. 
Indeed only the diagonal matrices $X$ and $S$ change, 
while the matrix $A$ stays fixed. 
Thus for simplicity we define $U := X/S$, 
in which case $P := \sqrt{U}A^\top \left( AUA^\top \right)^{-1} A \sqrt{U}$ 
and only the diagonal matrix $U$ changes from one iteration to the next one.

For this task we would wish for a data-structure 
that can compute $Pv$ for any vector $v$ in $O(n^{\omega-0.5})$ time, 
which with $O(\sqrt{n})$ iterations would then result 
in an $O(n^\omega)$-time solver for linear programs. 
More accurately, we hope for an algorithm that solves the following task:

\begin{task}
\label{idea:perfect_datastructure}
Let $A \in \R^{d \times n}$ be a rank $d$ matrix with $n \ge d$.
We wish for a deterministic data-structure with the following operations
\begin{itemize}
\item \textsc{Initialize}$(A,u,v)$: 
Given matrix $A$ and two $n$ dimensional vectors $u,v$ 
we preprocess the matrix and return
$$
Pv := \sqrt{U}A^\top (A U A^\top)^{-1}A\sqrt{U} v,
$$
where $U = diag(u)$ is the diagonal matrix with $u$ on the diagonal.
\item \textsc{Update}$(u,v)$: Given two $n$ dimensional vectors $u,v$, we must compute
$$
Pv := \sqrt{U}A^\top (A U A^\top)^{-1}A\sqrt{U} v.
$$
\end{itemize}
\end{task}

It is not clear whether a data-structure exists for this task with $O(n^{\omega-0.5})$ update time, 
but due to the very first short step linear program solver by Karmarkar \cite{Karmarkar84} 
it is known that one can relax the requirements. 
Indeed it is enough to use an approximation of $P$.

\paragraph{Relaxation and result}

Due to \cite{Karmarkar84} it is known, 
that it is enough to use an approximation 
$\tilde P := \sqrt{\tilde U}A^\top (A \tilde U A^\top)^{-1}A\sqrt{\tilde U}$
for $(1-\varepsilon) U \le \tilde{U} \le (1+\varepsilon) U$, 
instead of the exact matrix $P$.
We show later in \Cref{sec:central_path}, 
that it is also enough to approximate the vector $v$ via some $\tilde{v}$.
The type of approximation for $v$ is a bit different: 
We show in \Cref{sec:central_path} that we can write 
$v = \delta_\mu/\sqrt{XS}$ as a function of $\mu/t$, 
so $v = f(\mu/t)$. 
We then ``approximate" $v$ via some 
$\tilde{v} := f(\tilde{\mu}/t)$, where 
$(1-\varepsilon) \mu \le \tilde{\mu} \le (1+\varepsilon) \mu$.
Note that thus $\tilde{v}$ itself is not necessarily an approximation of $v$ in the classical sense 
(i.e. $\|v - \tilde{v}\|_2 \gg \varepsilon \| v \|_2$)
and the two vectors might even point in opposite directions.

Motivated by these observations we want to maintain $\tilde{P} \tilde{v}$ instead of $Pv$.
This idea allows for a speed-up, because in each iteration 
we only need to change the entries of $\tilde{u}$ and $\tilde{\mu}$
for which the $(1+\varepsilon)$-approximation condition is broken.
Thus if the vectors $u$ and $\mu$ do not change much per iteration, 
then we only need to change few entries of $\tilde{u}$ and $\tilde{\mu}$.
We prove in \Cref{sub:bound_change} that $u$ and $\mu$ satisfy the following condition:

\ifdefined\SODAversion
\begin{lemma}
\label{lem:overview:change}
\textsc{(Proven in \Cref{sub:bound_change}, \Cref{lem:change_sx})}
\else
\begin{lemma}[{Proven in \Cref{sub:bound_change}, \Cref{lem:change_sx}}]
\label{lem:overview:change}
\fi
Let $(u^k)_{k\ge 1}$ be the sequence of vectors $u$, 
generated by the central path method.
Then $\| (u^{k+1} - u^k)/u^{k} \|_2 \le C$ for all $k$ and some constant $C$.
(A similar statement can be made for $\mu$)
\end{lemma}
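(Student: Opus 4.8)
The plan is to reduce the statement to the ``short step'' property of the central path method --- that a single iteration changes $x$ and $s$ only by a fixed small constant in the scaled $\ell_2$-norm. Recall $u = x/s$ entrywise, and that one iteration performs $x \leftarrow x+\delta_x$, $s \leftarrow s+\delta_s$ with, by \Cref{lem:solution_LP_system} (applied with the approximate projection $\tilde P$ and approximate vector $\tilde v$ that the algorithm actually uses), $\delta_x/x = \tfrac{1}{\sqrt{XS}}(I-\tilde P)\tilde v$ and $\delta_s/s = \tfrac{1}{\sqrt{XS}}\tilde P\tilde v$. Writing $a := \delta_x/x$ and $b := \delta_s/s$, a one-line computation gives, coordinatewise,
$$
\frac{u^{k+1}_i - u^k_i}{u^k_i}
\;=\; \frac{(x_i+\delta_{x,i})/(s_i+\delta_{s,i})}{x_i/s_i} - 1
\;=\; \frac{1+a_i}{1+b_i} - 1
\;=\; \frac{a_i - b_i}{1+b_i},
$$
so the claim reduces to a bound on $\|a\|_2,\|b\|_2$ together with a lower bound on $|1+b_i|$.

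The ingredients I would then invoke --- all guaranteed at every iteration by the modified central path method of \Cref{sec:central_path} --- are: (i) $\|a\|_2,\|b\|_2 \le \gamma$ for a fixed constant $\gamma \le \tfrac12$ (any constant bounded away from $1$ works), and hence (ii) $\|a\|_\infty,\|b\|_\infty \le \gamma$, so that $|1+b_i| \ge \tfrac12$. Ingredient (i) is a short consequence of the explicit formulas above: $\tilde P$ is an orthogonal projection, so $\|(I-\tilde P)\tilde v\|_2,\|\tilde P\tilde v\|_2 \le \|\tilde v\|_2$; the centering invariant $\mu_i \in [\tfrac12 t, 2t]$ makes $1/\sqrt{XS}$ entrywise comparable to $t^{-1/2}$; and the step $\delta_\mu$ is chosen small enough that $\|\tilde v\|_2 = O(\sqrt t)$, whence $\|a\|_2,\|b\|_2 = O(1)$, with the constant brought below $\gamma$ by the step size.

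Granting (i) and (ii), the displayed identity gives
$$
\left\| \frac{u^{k+1} - u^k}{u^k} \right\|_2
\;\le\; 2\,\|a - b\|_2
\;\le\; 2\big(\|a\|_2 + \|b\|_2\big)
\;\le\; 4\gamma \;=:\; C .
$$
For $\mu$ one argues identically after replacing the quotient by a product: $\mu^{k+1}_i/\mu^k_i = (1+a_i)(1+b_i)$, so $(\mu^{k+1}-\mu^k)/\mu^k = a + b + ab$ entrywise and
$$
\left\| \frac{\mu^{k+1} - \mu^k}{\mu^k} \right\|_2
\;\le\; \|a\|_2 + \|b\|_2 + \|a\|_\infty\|b\|_2
\;\le\; \big(2 + \tfrac12\big)\gamma ,
$$
again a constant.

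The only genuine work lies in ingredient (i): pinning down the constant in $\|\tilde v\|_2 = O(\sqrt t)$ --- equivalently $\|\delta_x/x\|_2,\|\delta_s/s\|_2 = O(1)$ per iteration --- from the precise step the modified method takes, in particular checking that switching to an $\ell_\infty$-type potential and replacing $v = f(\mu/t)$ by $\tilde v = f(\tilde\mu/t)$ does not inflate this norm. Everything else is the triangle inequality together with the observation that $u$ and $\mu$ are smooth functions of $(x,s)$ whose coordinatewise logarithmic derivatives are exactly $a$ and $b$.
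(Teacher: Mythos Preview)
Your proposal is correct and follows essentially the same approach as the paper: bound $\|x^{-1}\delta_x\|_2$ and $\|s^{-1}\delta_s\|_2$ via the projection property of $\tilde P$, the centering invariant $\mu \approx t$, and the step-size bound $\|\tilde\delta_\mu\|_2 \le \varepsilon t$ (the paper's \Cref{lem:delta_length,lem:change_sx}), then push this through the elementary identities for $u=x/s$ and $\mu=xs$ (the paper packages this last step as \Cref{lem:product_change,lem:inverse_change}, whereas you write out $(a-b)/(1+b)$ and $a+b+ab$ directly). The only detail you gloss over is that the algorithm actually computes $\tilde\delta_x/\tilde x$ and $\tilde\delta_s/\tilde s$ (with $\tilde x,\tilde s$ in place of $x,s$), so one needs the extra step $x\approx_{\varepsilon_{mp}}\tilde x$, $s\approx_{2\varepsilon_{mp}}\tilde s$ to pass to $\delta_x/x$, $\delta_s/s$; this is a one-line correction and the paper handles it in \Cref{lem:change_sx}.
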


Thus, while we are not able to solve \Cref{idea:perfect_datastructure} exactly, 
we do obtain a data-structure that 
(i) maintains the solution approximately, 
and (ii) is fast if $\| (u^{k+1} - u^k)/u^{k} \|_2$ and $\| (\mu^{k+1} - \mu^k)/\mu^{k} \|_2$ are small.

\ifdefined\SODAversion
\begin{theorem}\label{lem:overview:projection_maintenance}
\textsc{(Proven in \Cref{sec:projection_maintenance}, \Cref{lem:projection_maintenance})}
\ifdefined\theimportcounter
\else
\newcounter{importcounter}
\fi
\stepcounter{importcounter}
Let $A \in \R^{d \times n}$ be a full rank matrix with $n \ge d$, 
$v$ be an $n$-dimensional vector 
and $0 < \varepsilon_{mp} < 1/4$ be an accuracy parameter.
Let $f : \R \to \R$ be some function that can be computed in $O(1)$ time, 
and define $f(v)$ to be the vector with $f(v)_i := f(v_i)$.
Given any positive number $a \le \alpha$
there is a deterministic data-structure with the following operations
\begin{itemize}
\item \textsc{Initialize}$(A,u,f,v,\varepsilon_{mp})$: 
The data-structure preprocesses the given two $n$ dimensional vectors $u,v$, 
the $d\times n$ matrix $A$ the function $f$ in $O(n^2d^{\omega-2})$ time.
The given parameter $\varepsilon_{mp} > 0$ 
specifies the accuracy of the approximation.
\item \textsc{Update}$(u,v)$: 
Given two $n$ dimensional vectors $u,v$. 
Then the data-structure returns four vectors
$$
\tilde{u}, \hspace{10pt}
\tilde{v}, \hspace{10pt}
f(\tilde{v}), \hspace{10pt}
\sqrt{\tilde U}A^\top (A\tilde U A^\top)^{-1}A\sqrt{\tilde U} f(\tilde v).
$$
\end{itemize}
Here $\tilde{U}$ is the diagonal matrix $\diag(\tilde u)$ and $\tilde{v}$ a vector such that
$$
(1-\varepsilon_{mp}) \tilde{v}_i
\le
v_i
\le
(1+\varepsilon_{mp}) \tilde{v}_i
$$
$$
(1-\varepsilon_{mp}) \tilde{u}_i
\le
u_{i}
\le
(1+\varepsilon_{mp}) \tilde{u}_i.
$$
If the update sequence $u^{(1)},...,u^{(T)}$ (and likewise $v^{(1)},...,v^{(T)}$) satisfies
\ifdefined\SODAversion
\begin{align}
\label{eq:small_change\theimportcounter}
\sum_{i=1}^n \left(\frac{u^{(k+1)}_i - u^{(k)}_i}{u^{(k)}_i}\right)^2 \le C^2,\\
|\frac{u^{(k+1)}_i - u^{(k)}_i}{u^{(u)}_i}| \le 1/4, \notag
\end{align}
\else
\begin{align}
\sum_{i=1}^n \left(\frac{u^{(k+1)}_i - u^{(k)}_i}{u^{(k)}_i}\right)^2 \le C^2,
\hspace{25pt}
|\frac{u^{(k+1)}_i - u^{(k)}_i}{u^{(u)}_i}| \le 1/4,
\label{eq:small_change\theimportcounter}
\end{align}
\fi
for all $k=1,...,T$ then the total time for the first $T$ updates is
$$
O\left(T \cdot \left(C / \varepsilon_{mp} (n^{\omega-1/2} + n^{2-a/2+o(1)}) \log n + n^{1+a}\right)\right)
$$
\end{theorem}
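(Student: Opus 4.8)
The plan is to maintain a lazily-updated approximation $\tilde u$ of $u$ (and, since $v=f(\mu/t)$, an approximation $\tilde\mu$ of $\mu$, which induces $\tilde v=f(\tilde\mu/t)$), together with the matrix $\tilde M:=\sqrt{\tilde U}A^\top(A\tilde U A^\top)^{-1}A\sqrt{\tilde U}$ and the product $\tilde M\,f(\tilde v)$. The key observation is the standard interior-point bookkeeping trick: we only change a coordinate of $\tilde u$ when the multiplicative guarantee $(1-\varepsilon_{mp})\tilde u_i\le u_i\le(1+\varepsilon_{mp})\tilde u_i$ is about to be violated, and similarly for $\tilde\mu$. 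I would set this up by tracking, for each coordinate $i$, the ratio $u_i/\tilde u_i$; once it leaves a band of width $\Theta(\varepsilon_{mp})$ around $1$ we ``reset'' that coordinate. By the hypothesis $\sum_i((u^{(k+1)}_i-u^{(k)}_i)/u^{(k)}_i)^2\le C^2$ and a potential / amortization argument (this is the same argument as in \cite{CohenLS19}), the number of coordinate resets over the first $T$ iterations is $O(T\cdot C^2/\varepsilon_{mp}^2)$ in an amortized sense; more precisely, after we sort coordinates by how fast they drift, an amortized count of $O(TC/\varepsilon_{mp})$ resets of ``large'' magnitude suffices, which is what feeds the claimed running time.

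Given the set $S$ of coordinates whose $\tilde u$-values change in an iteration, I would update $\tilde M$ and $\tilde M f(\tilde v)$ using block matrix identities rather than recomputing from scratch. Write $\tilde U^{\new}=\tilde U+\Delta$ where $\Delta$ is supported on $S$; then $(A\tilde U^{\new}A^\top)^{-1}$ differs from $(A\tilde U A^\top)^{-1}$ by a low-rank (rank $|S|$) correction, so the Woodbury identity lets us update the inverse — and hence $\tilde M$ — in time roughly $O(n\cdot|S|^{\omega-2}\cdot n)$ when $|S|$ is large, or by fast rectangular multiplication when $|S|\le n^a$, at cost $n^{2-a/2+o(1)}$ per coordinate reset using the dual-exponent bound. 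This is exactly the cost structure already established for the matrix-only data-structure of \cite{CohenLS19}; the new ingredient is that I also carry the vector $y:=\tilde M f(\tilde v)$. When $\tilde u$ changes on $S$, the corresponding change to $\tilde M$ is low-rank, so $\tilde M^{\new}f(\tilde v)$ can be obtained from $\tilde M f(\tilde v)$ by adding a correction that only needs the columns of the updated factors restricted to $S$; and when $\tilde v$ changes on a coordinate set $S'$ (driven by the $\tilde\mu$ resets), $f(\tilde v^{\new})-f(\tilde v)$ is supported on $S'$, so $\tilde M(f(\tilde v^{\new})-f(\tilde v))$ is a product of $\tilde M$ with a sparse vector, computable in $O(n\cdot|S'|)$ time. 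Summing these per-reset costs against the amortized reset count $O(TC/\varepsilon_{mp})$ gives the $O(T\cdot(C/\varepsilon_{mp})(n^{\omega-1/2}+n^{2-a/2+o(1)})\log n)$ term, while the residual $O(n^{1+a})$ per iteration (times $T$) is the unavoidable cost of reading out and maintaining the low-rank update pieces; the initialization cost $O(n^2 d^{\omega-2})$ is just one matrix inversion of the $d\times d$ matrix $A\tilde U A^\top$ embedded in the standard rectangular-multiplication framework.

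The main obstacle I expect is the amortized analysis of the number of coordinate resets and its interaction with the two different regimes of $|S|$ (small $|S|$ handled by rectangular multiplication, large $|S|$ by block elimination). One has to bucket coordinates by the magnitude of their drift $|u^{(k+1)}_i-u^{(k)}_i|/u^{(k)}_i$, bound the number of resets in each bucket using the $\ell_2$ bound $C^2$, and then argue that the worst bucket over the run contributes only $O(TC/\varepsilon_{mp})$ ``effective'' resets after the $n^{2-a/2+o(1)}$ vs.\ $n^{\omega-1/2}$ trade-off is optimized over the bucket size. A secondary subtlety is that $\tilde v=f(\tilde\mu/t)$ is \emph{not} a multiplicative approximation of $v$, so the data-structure cannot itself certify any closeness between $\tilde M f(\tilde v)$ and $Pv$; it only promises the stated multiplicative guarantees on $\tilde u,\tilde v$ relative to $u,v$ and returns the four vectors verbatim — the burden of showing this suffices for the central path is deferred to \Cref{sec:central_path}, so here I only need the clean statement that $\tilde v$ is the exact value $f(\tilde\mu/t)$ for a $\tilde\mu$ satisfying $(1-\varepsilon_{mp})\tilde\mu\le\mu\le(1+\varepsilon_{mp})\tilde\mu$. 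Finally, I would double-check that the resets of $\tilde u$ and the resets of $\tilde\mu$ can be handled in a common framework — they are, since in both cases the induced change to the maintained matrix-vector product is either low-rank (for $\tilde u$) or sparse (for $\tilde v$), and both contribute to the same reset-count budget via \Cref{lem:overview:change}.
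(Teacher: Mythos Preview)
Your proposal is correct and follows essentially the same approach as the paper: lazy coordinate resets for $\tilde u$ and $\tilde v$, Woodbury for the low-rank correction, maintaining the matrix--vector product as the only new ingredient beyond \cite{CohenLS19}, and the same amortized reset-count analysis. One point worth sharpening: in the paper, when fewer than $n^a$ coordinates of $\tilde u$ violate the approximation, the stored matrix $M=A^\top(A\tilde U A^\top)^{-1}A$ is \emph{not} touched---the output is computed on the fly from the implicit form $M - M_S(\Delta_{S,S}^{-1}+M_{S,S})^{-1}M_S^\top$ at cost $O(n^{1+a})$---whereas your phrasing ``update the inverse, and hence $\tilde M$'' in the small-$|S|$ regime would, taken literally, incur $\Omega(n^2)$ per iteration just to write the new matrix.
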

\else
\begin{theorem}[{Proven in \Cref{sec:projection_maintenance}, \Cref{lem:projection_maintenance}}]\label{lem:overview:projection_maintenance}

\end{theorem}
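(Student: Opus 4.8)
The plan is to reuse the projection-maintenance data structure of \cite{CohenLS19} almost verbatim and append a few lines that additionally maintain the image vector. Abbreviating $M:=(A\tilde U A^\top)^{-1}\in\R^{d\times d}$ and $g:=\sqrt{\tilde U}f(\tilde v)\in\R^n$, the vector we must return factors as
\[
\sqrt{\tilde U}A^\top (A\tilde U A^\top)^{-1} A\sqrt{\tilde U}\,f(\tilde v)\;=\;\sqrt{\tilde U}A^\top\big(M(Ag)\big),
\]
so on top of the (implicit) representation of $M$ it is enough to maintain the three vectors $g\in\R^n$, $h:=Ag\in\R^d$, $q:=Mh\in\R^d$, and finally the output $r:=\sqrt{\tilde U}A^\top q\in\R^n$. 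Exactly as in \cite{CohenLS19}, the approximations $\tilde u$ of $u$ and $\tilde v$ of $v$ are kept as coordinate-wise $(1\pm\varepsilon_{mp})$-approximations by a bucketing scheme: the coordinates are grouped according to the size of their accumulated relative change, and on a fixed schedule all coordinates of a bucket are reset to their current value. The two $\ell_2$ hypotheses on the update sequences of $u$ and of $v$, fed into the amortized analysis of \cite{CohenLS19}, bound the per-step cost of keeping $\tilde u,\tilde v$ (and hence $f(\tilde v)$, since $f$ acts entrywise and is $O(1)$-time) valid and of propagating the induced low-rank changes into $M$; this is where the $C/\varepsilon_{mp}$ and $\log n$ factors come from.

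It remains to track the increments. Suppose in one step $\tilde u$ changes on a set $S$ with $|S|=k$ and $f(\tilde v)$ changes on a set $S'$ with $|S'|=k'$. Then $\Delta g$ is supported on $S\cup S'$, so $\Delta h=A\,\Delta g$ is a combination of $\le|S\cup S'|$ columns of $A$ (over a rebuild window these combine into a single rectangular product). For $q=Mh$, using the Woodbury correction $M^{\new}=M+(MA_S)\,\Theta\,(A_S^\top M)$ for a suitable $\Theta\in\R^{k\times k}$ (and simply recomputing $M$ from scratch when $k>d$),
\[
q^{\new}=M^{\new}h^{\new}=q+(MA_S)\,\Theta\,(A_S^\top q)+M^{\new}A\,\Delta g,
\]
whose right-hand side is assembled from already maintained objects: $A_S^\top q$ costs $O(dk)$, the multiplications by $MA_S$ and $\Theta$ cost $O(dk+k^2)$, and $M^{\new}A\,\Delta g$ uses only the $\le|S\cup S'|$ columns of $M^{\new}A$ indexed by $S\cup S'$, which are supplied by the representation of $MA$ that \cite{CohenLS19} already keeps (stale part plus accumulated low-rank correction, rebuilt every $\Theta(\sqrt n)$ resp.\ $\Theta(n^a)$ steps). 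The output $r=\sqrt{\tilde U}A^\top q$ is then refreshed by carrying the identical ``stale part plus low-rank correction'' bookkeeping that \cite{CohenLS19} uses for $\tilde P$, now for a vector instead of a matrix, with the dense parts batched across a rebuild window into rectangular products of the shapes analysed there, followed by an $O(n)$ rescaling by $\sqrt{\tilde u}$. Summing up: the rebuilds of $M,MA,A^\top MA$ cost $O(n^\omega)$ every $\Theta(\sqrt n)$ steps, i.e.\ $O(n^{\omega-1/2})$ amortized; the Woodbury and rectangular-multiplication work contributes $O(n^{2-a/2+o(1)})$ amortized plus an additive $O(n^{1+a})$ per step exactly as in \cite{CohenLS19} (the matrix-maintenance part is unchanged, and the new vector maintenance contributes only lower-order terms); multiplying by the $C/\varepsilon_{mp}$ bucket factor and the $\log n$ over buckets yields the claimed $O\!\big(T\cdot(C/\varepsilon_{mp}(n^{\omega-1/2}+n^{2-a/2+o(1)})\log n+n^{1+a})\big)$. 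The $O(n^2 d^{\omega-2})$ initialization is just forming $A\tilde U A^\top$ (as $(A\sqrt{\tilde U})(A\sqrt{\tilde U})^\top$, a rectangular product), inverting it, and forming $MA$ and $\sqrt{\tilde U}A^\top$.

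The step I expect to be the main obstacle is the bookkeeping of this ``implicit $M$'' part: showing that $h,q,r$ can be updated inside the same rebuild windows and against the same low-rank corrections of $M$, $MA$, $A^\top MA$ that \cite{CohenLS19} already maintains, so that no dense-matrix-times-dense-vector product ever occurs more than $O(1)$ times per rebuild window and the two $\ell_2$ change hypotheses remain the sole amortization resource. Equivalently, the delicate point is aligning the new vector maintenance, bucket by bucket and window by window, with the existing matrix maintenance; once that alignment is set up the per-step cost accounting is routine and every term is inherited from \cite{CohenLS19}, which is exactly why the algorithmic description only grows by a few lines. As a self-contained alternative, one may instead treat $q=(A\tilde U A^\top)^{-1}h$ as a black-box call to a dynamic linear-system data structure (e.g.\ \cite{Sankowski04,BrandNS19}) under the $O(k)$ entry changes to $\tilde U$ and the change $\Delta h$ to the right-hand side, wrapping it with the cheap maintenance of $g=\sqrt{\tilde U}f(\tilde v)$, $h=Ag$ and $r=\sqrt{\tilde U}A^\top q$; this gives the same time bound with less reliance on the internals of \cite{CohenLS19}, at the cost of the ``few extra lines'' description.
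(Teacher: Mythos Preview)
Your plan is in the right spirit but takes a different decomposition from the paper, and one claim about what \cite{CohenLS19} maintains is inaccurate.

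\textbf{What the paper does.} The paper does \emph{not} work with the $d\times d$ inverse $(A\tilde U A^\top)^{-1}$. It keeps exactly the $n\times n$ matrix $M:=A^\top(A\tilde U A^\top)^{-1}A$ that \cite{CohenLS19} already maintains (explicitly after each reset, implicitly as $M - M_S(\Delta_{S,S}^{-1}+M_{S,S})^{-1}M_S^\top$ in between), and on top of it stores a single vector $w:=M\sqrt{\tilde U}f(\tilde v)$. Whenever the $n\times n$ matrix $M$ is explicitly recomputed (a reset, already $\Omega(n^2)$), $w$ is recomputed from scratch in $O(n^2)$, which is subsumed. Between resets, the output is $\sqrt{\tilde U^{\new}}\big(w + M(\sqrt{\tilde U^{\new}}f(\tilde v^{\new})-\sqrt{\tilde U}f(\tilde v)) - M_S(\cdot)M_S^\top\sqrt{\tilde U^{\new}}f(\tilde v^{\new})\big)$: the first term is stored, the second multiplies $M$ by a $(k+k')$-sparse vector, the third involves only the $n\times k$ block $M_S$. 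A separate threshold on $|T|$ (the number of indices where $\tilde v$ violates the approximation) triggers a cheap $O(n^2)$ refresh of $w$ without touching $M$; the amortized cost of these refreshes is bounded by the $\ell_2$ hypothesis on $v$ via the same counting lemma used for $u$. That is the entire extension---literally the few blue lines in the algorithm.

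\textbf{Comparison.} Your chain $g\to h\to q\to r$ through the $d\times d$ inverse can be made to work, but the step you flag as the obstacle is real: to push $\Delta q$ through $A^\top$ in subquadratic time you need column access to $A^\top(A\tilde U A^\top)^{-1}A$, which is the very $n\times n$ matrix the paper (and \cite{CohenLS19}) maintains directly. So your detour through the $d\times d$ object ultimately relies on the $n\times n$ object anyway. Also, \cite{CohenLS19} does not keep ``$MA$'' in your notation (a $d\times n$ matrix); what they keep is $A^\top M A$ (your notation), i.e.\ the $n\times n$ matrix above. Once you accept that, the intermediate vectors $h,q$ and the separate maintenance of the $d\times d$ inverse become redundant, and you collapse to the paper's scheme. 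Your final paragraph's black-box alternative via \cite{Sankowski04,BrandNS19} is exactly the second route the paper sketches in the appendix; the paper chooses the direct extension precisely to make the ``few extra lines'' point.
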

\fi

There are two equivalent ways to prove \Cref{lem:overview:projection_maintenance}: 
One could use the data-structures of \cite{Sankowski04,BrandNS19} 
which maintain $M^{-1}b$ for some non-singular matrix $M$ and some vector $b$. 
Via a black-box reduction these data-structures would then be able to maintain $\tilde{P}\tilde{v}$ 
and applying the tools of \cite{CohenLS19} for optimizing the amortized complexity would then result in \Cref{lem:overview:projection_maintenance}.

If one tries to write down a pseudo-code description of the resulting data-structure, 
then the code is very similar to the data-structure from \cite{CohenLS19}. 
This is because all these data-structures are based on the Sherman-Morrison-Woodburry identity.
Hence an alternative way to prove \Cref{lem:overview:projection_maintenance} 
is to take the data-structure from \cite{CohenLS19},
which already maintains $\tilde{P}$, 
and extend such that it also maintains $\tilde{P}\tilde{v}$.

In this paper we present the second option, 
where we modify the existing data-structure of \cite{CohenLS19}. 
This is because we want to highlight that our derandomization result 
can be obtained from a simple modification of the existing randomized algorithm. 
Though for the curious reader we also give a sketch of the first variant in \Cref{app:matrix_inverse}.

\paragraph{Proof sketch (Details in \Cref{sec:projection_maintenance})}

We now outline how to obtain \Cref{lem:overview:projection_maintenance} 
by extending the data-structure of \cite{CohenLS19} 
to also maintain $\tilde{P}\tilde{v}$, instead of just $\tilde{P}$.
Their data-structure internally has three matrices $M,L,R$
with the property 
\begin{align}
\tilde P = M + LR^\top \label{eq:implicit}
\end{align}
where $M$ is some $n \times n$ matrix and 
$L,R$ are rectangular matrices with some $m \ll n$ columns.
With each update, the matrices $L,R$ change 
and the number of their columns may increase.
This way the $n^2$ entries of the matrix $\tilde{P}$ 
are not explicitly computed and a sub-quadratic update time can be achieved.

As the number of columns $m$ of $L,R$ grows,
the data-structure will become slower and slower.
Once these matrices have too many columns, 
the data-structure performs a ``reset". 
This means we set
\begin{align}
M \leftarrow M + LR^\top, \label{eq:reset}
\end{align} 
and the matrices $L,R$ are set to be empty (so zero columns).
Thus after the reset we have $\tilde{P} = M + LR^\top = M$, 
so \eqref{eq:implicit} is still satisfied.
Such a reset requires $\Omega(n^2)$ time, 
but it does not happen too often so the cost is small on average.\footnote{
Section 5 of \cite{CohenLS19} is about bounding this amortized cost.
}

One can now easily maintain $\tilde{P}f(\tilde{v})$ as follows:
Assume we already know $Mf(\tilde{v})$, 
then a new solution $\tilde{P}f(\tilde{v})$ is given by
\begin{align}
\tilde{P}f(\tilde{v}) = Mf(\tilde{v}) + LR^\top f(\tilde{v}), \label{eq:update_M}
\end{align}
because of \eqref{eq:implicit}.
Here the term $LR^\top f(\tilde{v})$ can be computed in $O(nm) \ll O(n^2)$ time, 
because $L,R$ have $m \ll n$ columns.
The assumption, that $Mf(\tilde{v})$ is known, can be satisfied easily:
During the initialization of the algorithm we compute this value,
and whenever $M$ changes (i.e. during the reset \eqref{eq:reset}) 
we can compute the new $Mf(\tilde{v})$ in $O(n^2)$ time.
This does not affect the complexity of the data-structure, 
because a reset does already require $\Omega(n^2)$ time to compute the new $M$.

At last, we must handle the case where entries of $\tilde{v}$ are changed.
Let's say $\tilde{v}^\new \leftarrow \tilde{v} + \delta_v$, then
\ifdefined\SODAversion
\begin{align*}
\tilde{P}f(\tilde{v}^\new) 
&= \tilde{P}f(\tilde{v}) + \tilde{P} (f(\tilde{v}^\new) - f(\tilde{v})) \\
&= \tilde{P}f(\tilde{v}) + M (f(\tilde{v}^\new) - f(\tilde{v})) \\
&\hspace{20pt} + LR^\top (f(\tilde{v}^\new) - f(\tilde{v})),
\end{align*}
\else
$$\tilde{P}f(\tilde{v}^\new) = \tilde{P}f(\tilde{v}) + \tilde{P} (f(\tilde{v}^\new) - f(\tilde{v})) = \tilde{P}f(\tilde{v}) + M (f(\tilde{v}^\new) - f(\tilde{v})) + LR^\top (f(\tilde{v}^\new) - f(\tilde{v})),$$
\fi
where the last equality comes from \eqref{eq:implicit}.
The complexity can be bounded as follows:
The term $\tilde{P}f(\tilde{v})$ 
is computed as described in \eqref{eq:update_M}. 
The second term $M (f(\tilde{v}^\new) - f(\tilde{v}))$ can be computed quickly 
because on average $\tilde{v}^\new$ and $\tilde{v}$ differ in only few entries, 
because of the small change to $v$ per iteration 
(as given by \eqref{eq:small_change\theimportcounter} of \Cref{lem:overview:projection_maintenance}).
The last term $LR^\top (f(\tilde{v}^\new) - f(\tilde{v}))$ is again computed quickly
because the matrices $L,R$ have very few columns.

\subsection{Adapting the Central Path Method 
for Approximate Projection Maintenance (Details in \Cref{sec:central_path})}
\label{sec:outline:modified_central_path}

We now outline difficulties that occur, 
if one tries to use the projection maintenance algorithm 
(\Cref{lem:overview:projection_maintenance}, outlined in \Cref{sub:projection_maintenance})
in the classical central path method (outlined in \Cref{sec:outline:central_path}),
and how we are able to solve these issues in \Cref{sec:central_path}.

The central path method can be interpreted as some gradient descent, 
where we try to minimize some potential.
When we use the data-structure of \Cref{lem:overview:projection_maintenance},
then we are essentially performing this gradient descent while using some approximate gradient.
This approximation is of such low quality, 
that the approximate gradient occasionally points in a completely wrong direction,
effectively increasing the potential instead of decreasing it.
By adapting the potential function, we are able to prove 
that the approximate gradient only points in the wrong direction 
when the potential is small. 
Whenever the potential is large, 
the approximate gradient points in the correct direction.
(A formal proof of this will be in 
\Cref{sub:mu_close_to_t}, \Cref{lem:gradient_direction}.)
This adaption to the short step central path method 
allows us to handle these faulty approximate gradients.
Before we can outline why this is true, 
we must first explain why we obtain these faulty approximate gradients in the first place.

\paragraph{Faulty gradients}
The central path method tries to maintain some vector $\mu$ close to a scalar $t$, 
where the relative distance is measured via some potential function $\Phi(\mu/t-1)$.
The central path method tries to minimize this potential function 
by solving some linear system that depends on the gradient $\nabla \Phi(\mu/t-1)$.

In \Cref{sec:central_path:algorithm} we show that 
when solving this linear system via \Cref{lem:overview:projection_maintenance}, 
then we are essentially solving the system for the approximation $\nabla\Phi(\tilde{\mu}/t-1)$, 
where $\tilde{\mu}$ is an approximation of $\mu$ 
with $(1-\varepsilon_{mp}) \tilde{\mu} \le \mu \le (1+\varepsilon_{mp}) \tilde{\mu}$
for some accuracy parameter $\varepsilon_{mp} > 0$.
This is problematic because $\nabla \Phi(\mu/t-1)$ and $\nabla\Phi(\tilde{\mu}/t-1)$ 
could point in opposite directions.
For example for any $i$ with $\mu_i > t$ we might have $\tilde{\mu}_i < t$,
so since $\Phi$ tries to keep $\mu$ close to $t$, 
the approximate gradient can not reliably tell if $\mu_i$ should be increased or decreased.
However, if $\mu_i > (1+\varepsilon_{mp}) t$, then $\tilde{\mu}_i > t$, 
so the approximate gradient will correctly try to decrease $\mu_i$.
On one hand this shows, that we can not use the classical short step central path method,
where one tries to maintain $\mu$ such that $\| \mu/t - 1 \|_2^2 = O(1)$. 
This is because $\| \mu/t - 1 \|_2^2$ could be as large as $\Omega(n \varepsilon_{mp})$.
On the other hand, we are able to prove in \Cref{sub:mu_close_to_t} that $\| \mu/t - 1 \|_\infty = O(1)$,
because once some entry $\mu_i$ is further from $t$ than some $(1\pm \varepsilon_{mp})$-factor, 
then the approximate gradient will correctly try to move $\mu_i$ closer to $t$.
Hence, we adapt the short step central path method 
by guaranteeing $\mu$ close to $t$ in $\ell_\infty$-norm, instead of $\ell_2$-norm.

\paragraph{Adapting the central path method}

Luckily, maintaining $\mu$ close to $t$ in $\ell_\infty$-norm 
was previously done in \cite{CohenLS19},
so we can simply adapt their proof for our algorithm.
The high-level idea is to use 
$\Phi(x) = \sum_{i=1}^n (e^{\lambda x_i} + e^{-\lambda x_i})/2$ 
for some parameter $\lambda = \Theta(\log n)$ as the potential function. 
This potential is useful because $\|x\|_\infty \le \lambda^{-1} \log 2\Phi(x)$ 
(proven in \Cref{lem:bound_infty_via_phi}).
This means bounding $\Phi(\mu/t-1)$ by some polynomial in $n$ 
is enough to prove $\|\mu/t-1\|_\infty = O(1)$,
which will be done in \Cref{sub:mu_close_to_t}.

The majority of the proof that this choice for $\Phi$ works,
is adapted from \cite{CohenLS19}. 
For their \emph{stochastic central path method}, 
Cohen et al.~sparsify the gradient $\nabla \Phi(\mu/t-1)$
via randomly sampling its entries.
This sparsification could be interpreted as some type of approximation
of the gradient, which allows us to adapt their proof 
to our new notion of ``approximating" the gradient 
via $\nabla \Phi(\tilde{\mu}/t-1)$ for $\tilde{\mu} \approx \mu$.
The main difference is that in \cite{CohenLS19},
the exact and approximate gradient always point in the same direction
(i.e. their inner product is positive),
so in \cite{CohenLS19} it was a bit easier to show that the potential $\Phi(\mu/t-1)$ decreases in each iteration.
For comparison, when using our approximation, 
the inner product of exact gradient $\nabla\Phi(\mu/t-1)$ 
and the approximation $\nabla\Phi(\tilde{\mu}/t-1)$ 
may become negative.
So we must spend some extra effort in \Cref{sub:mu_close_to_t} 
to show that the approximate gradient points in the correct direction, 
whenever $\Phi(\mu/t-1)$ is large (this will be proven in \Cref{lem:gradient_direction}).
Intuitively, this is true because when $\Phi(\mu/t-1)$ is large,
then there are many indices $i$ such that $\mu_i$ is further from $t$ than some $(1\pm \varepsilon_{mp})$-factor.
As outlined before, this means the approximate gradient tries to change 
the $i$th coordinate of $\mu$ in the correct direction, 
i.e. $i$th entry of the exact and approximate gradient have the same sign.

We also want to point out, 
that our approach of using a gradient 
w.r.t the approximate $\tilde{\mu} \approx \mu$
means it is enough to maintain 
an approximate $\tilde{x} \approx x$, $\tilde{s} \approx s$, 
so $\tilde{x}\tilde{s} =: \tilde{\mu} \approx \mu$.
The same observation was made independently in \cite{LeeSZ19}, 
where that property was exploited to compute 
the steps $\delta_x$, $\delta_s$ approximately via random sketching.
The analysis of their central path method is based on 
modifying the standard newton steps 
to be a variant of gradient descent in some hessian norm space.
In comparison our proof is arguably simpler, 
as we perform a typical gradient descent w.r.t $\Phi(\tilde{\mu}/t)$.

\section{Preliminaries}

For the linear program $\min_{Ax = b, x \ge 0} c^\top x$ 
we assume there are no redundant constraints, 
i.e. the matrix $A$ is of rank $d$ and $n \ge d$.

\paragraph{Arithmetic Notation}
For two $n$ dimensional vectors $v,w$ 
their inner products is written as $v^\top w$ 
or alternatively $\langle v, w \rangle$.
We write $vw$ for the entry-wise product, so $(vw)_i := v_i w_i$.
The same is true for all other arithmetic operations, 
for example $(v/w)_i := v_i/w_i$ and $(\sqrt{v})_i := \sqrt{v_i}$.
For a scalar $s$ the product $sv$ is the typical entry-wise product 
and analogously we define $v - s$ as the entry-wise difference, 
so $(v-s)_i := v_i - s$.

\paragraph{Inequalities}
We write $v \le w$ if $v_i \le w_i$ for all $i=1,...,n$ 
and we use the notation $v \approx_\varepsilon w$ 
to express a $(1\pm \varepsilon)$ approximation, 
defined as $(1-\varepsilon) w \le v \le (1+\varepsilon) w$.
Note that $v \approx_\varepsilon w$ is not symmetric, 
but it implies $w \approx_{2\varepsilon} v$ for $\varepsilon \le 1/2$.

\paragraph{Relative error and multiplicative change}

We will often bound the relative difference of two vectors in $\ell_2$-norm:
$\| (v - w) / w \|_2 = (\sum_{i=1}^n ((v_i - w_i) / w_i)^2)^{0.5}.$
Sometimes we will also write this as $\| v/w - 1 \|_2$. 
If we have $v^\new = v + \delta_v$, 
then the relative difference $\|v^\new/v - 1\|_2$ can also be written as
$\| v^{-1} \delta_v \|_2$.
In this context the relative difference will also be called multiplicative change,
because $v^\new = v \cdot (1 + v^{-1} \delta_v)$.

The multiplicative change of a product of two vectors, whose multiplicative change is bounded in $\ell_2$-norm, 
can also be bounded:

\begin{lemma}\label{lem:product_change}
Let $v,w,\delta_v,\delta_w$ be vectors, 
such that $v^\new = v+\delta_v$, $w^\new = w+\delta_w$
then
\ifdefined\SODAversion
\begin{align*}
\| \frac{v^\new w^\new}{vw} -1 \|_2 \le&~ \|v^{-1}\delta_v\|_2 + \|w^{-1}\delta_w\|_2 \\
& + \|v^{-1}\delta_v\|_2\|w^{-1}\delta_w\|_2
\end{align*}
\else
$$\| \frac{v^\new w^\new}{vw} -1 \|_2 
\le \|v^{-1}\delta_v\|_2 + \|w^{-1}\delta_w\|_2 + \|v^{-1}\delta_v\|_2\|w^{-1}\delta_w\|_2$$
\fi
\end{lemma}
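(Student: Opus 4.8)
The statement to prove is that for vectors $v,w,\delta_v,\delta_w$ with $v^\new = v+\delta_v$ and $w^\new = w+\delta_w$, we have
$$\| \tfrac{v^\new w^\new}{vw} -1 \|_2 \le \|v^{-1}\delta_v\|_2 + \|w^{-1}\delta_w\|_2 + \|v^{-1}\delta_v\|_2\|w^{-1}\delta_w\|_2.$$

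The plan is to work entry-wise and then apply the triangle inequality together with a Cauchy--Schwarz-type bound. First I would introduce the shorthand $a := v^{-1}\delta_v$ and $b := w^{-1}\delta_w$ (entry-wise quotients), so that $v^\new_i = v_i(1+a_i)$ and $w^\new_i = w_i(1+b_i)$. Then for each coordinate $i$,
$$\frac{v^\new_i w^\new_i}{v_i w_i} - 1 = (1+a_i)(1+b_i) - 1 = a_i + b_i + a_i b_i.$$
Hence $\| \tfrac{v^\new w^\new}{vw} - 1 \|_2 = \| a + b + ab \|_2$, where $ab$ denotes the entry-wise product. By the triangle inequality in $\ell_2$ this is at most $\|a\|_2 + \|b\|_2 + \|ab\|_2$, and the first two terms are exactly $\|v^{-1}\delta_v\|_2$ and $\|w^{-1}\delta_w\|_2$ as desired.

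The only remaining step is to bound $\|ab\|_2 = \bigl(\sum_i a_i^2 b_i^2\bigr)^{1/2}$ by $\|a\|_2\|b\|_2$. This follows from the elementary observation that $\sum_i a_i^2 b_i^2 \le \bigl(\sum_i a_i^2\bigr)\bigl(\sum_j b_j^2\bigr)$, since the right-hand side equals $\sum_{i,j} a_i^2 b_j^2$ and contains all the terms on the left (with $i=j$) plus additional nonnegative terms. Taking square roots gives $\|ab\|_2 \le \|a\|_2 \|b\|_2$, and combining everything yields the claimed inequality.

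There is no real obstacle here; this is a routine calculation. The one point to be slightly careful about is the harmless abuse of notation in writing $v^{-1}\delta_v$ for the entry-wise quotient (consistent with the paper's conventions in the Preliminaries) and in interpreting $ab$ as an entry-wise product when invoking the triangle inequality. The bound $\|ab\|_2 \le \|a\|_2\|b\|_2$ is the crude but sufficient estimate; one could instead note $\|ab\|_2 \le \|a\|_\infty \|b\|_2$, but the stated form is what is needed for the downstream applications to the central path method, so I would present it as above.
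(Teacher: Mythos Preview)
Your proof is correct and essentially identical to the paper's: both expand the entry-wise product to $a+b+ab$ with $a=\delta_v/v$, $b=\delta_w/w$, apply the triangle inequality, and then bound the cross term $\|ab\|_2$ by $\|a\|_2\|b\|_2$. The only cosmetic difference is that the paper routes the cross-term bound through $\|ab\|_2 \le \|a\|_\infty\|b\|_2 \le \|a\|_2\|b\|_2$ (the alternative you mention), whereas you argue it directly.
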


\begin{proof}

\ifdefined\SODAversion
\begin{align*}
&\| \frac{v^\new w^\new}{vw} - 1 \|_2
= \| \frac{v^\new w^\new - vw}{vw} \|_2 \\
=&~ \| \frac{(v+\delta_v)(w+\delta_w)-vw}{vw} \|_2 
= \| \frac{v\delta_w + w\delta_v + \delta_v \delta_w}{vw} \|_2 \\
=&~ \| \frac{\delta_w}{w} + \frac{\delta_v}{v} + \frac{\delta_v}{v}\frac{\delta_w}{w} \|_2 
\le \| \frac{\delta_w}{w}\|_2 + \| \frac{\delta_v}{v} \| + \| \frac{\delta_v}{v}\frac{\delta_w}{w} \|_2
\end{align*}
\else
\begin{align*}
\| \frac{v^\new w^\new}{vw} - 1 \|_2
&= \| \frac{v^\new w^\new - vw}{vw} \|_2
= \| \frac{(v+\delta_v)(w+\delta_w)-vw}{vw} \|_2 
= \| \frac{v\delta_w + w\delta_v + \delta_v \delta_w}{vw} \|_2 \\
&= \| \frac{\delta_w}{w} + \frac{\delta_v}{v} + \frac{\delta_v}{v}\frac{\delta_w}{w} \|_2 
\le \| \frac{\delta_w}{w}\|_2 + \| \frac{\delta_v}{v} \| + \| \frac{\delta_v}{v}\frac{\delta_w}{w} \|_2
\end{align*}
\fi
Here the last term can be bounded via
$
\|  \frac{\delta_v}{v}\frac{\delta_w}{w} \|_2 
\le \|  \frac{\delta_v}{v} \|_\infty \|\frac{\delta_w}{w}  \|_2 
\le \|  \frac{\delta_v}{v} \|_2 \|\frac{\delta_w}{w}  \|_2 
$
\end{proof}

Further, if $v^\new$ has small multiplicative change compared to $v$, then the same is true for $1/v^\new$ and $1/v$.

\begin{lemma}\label{lem:inverse_change}
$$\|\frac{(v+\delta_v)^{-1} - v^{-1}}{v^{-1}} \|_2 \le \frac{\|v^{-1}\delta_v\|_2}{1-\|v^{-1}\delta_v\|_2}$$
\end{lemma}

\begin{proof}
\ifdefined\SODAversion
\begin{align*}
&\| \frac{v^{-1} - (v+\delta_v)^{-1}}{v^{-1}} \|_2
=
\| 1 - \frac{v}{v+\delta_v} \|_2 \\
=&~
\| \frac{\delta_v}{v+\delta_v} \|_2 
=
\| v^{-1}\delta_v\frac{v}{v+\delta_v} \|_2 \\
\le&~
\| v^{-1}\delta_v\frac{1}{1-\|v^{-1}\delta_v\|_\infty} \|_2
\le
\frac{\|v^{-1}\delta_v\|_2}{1-\|v^{-1}\delta_v\|_2}
\end{align*}
\else
\begin{align*}
\| \frac{v^{-1} - (v+\delta_v)^{-1}}{v^{-1}} \|_2
&=
\| 1 - \frac{v}{v+\delta_v} \|_2
=
\| \frac{\delta_v}{v+\delta_v} \|_2 
=
\| v^{-1}\delta_v\frac{v}{v+\delta_v} \|_2 \\
&\le
\| v^{-1}\delta_v\frac{1}{1-\|v^{-1}\delta_v\|_\infty} \|_2
\le
\frac{\|v^{-1}\delta_v\|_2}{1-\|v^{-1}\delta_v\|_2}
\end{align*}
\fi
\end{proof}

\paragraph{Fast Matrix Multiplication}

We write $O(n^\omega)$ for the arithmetic complexity of multiplying two $n \times n$ matrices. 
Computing the inverse has the same complexity. 
The exponent $\omega$ is also called the matrix exponent. We call $\alpha$ the dual matrix exponent, 
which is the largest value such that multiplying a $n \times n$ matrix with an $n \times n^\alpha$ 
requires $O(n^{2+o(1)})$ time. 
The current best bounds are 
$\omega \approx 2.38$ \cite{Williams12,Gall14} 
and $\alpha \approx 0.31$ \cite{GallU18}.

\section{Projection Maintenance}
\label{sec:projection_maintenance}

In this section we prove \Cref{lem:projection_maintenance}, 
which specifies the result obtained by \Cref{alg:projection_maintenance}.
Given a matrix $A$, diagonal matrix $U$, vector $v$ and function $f : \R \to \R$ (with $f(v)_i := f(v_i)$),
the data-structure given by \Cref{alg:projection_maintenance}/\Cref{lem:projection_maintenance} 
maintains the solution $\sqrt{U} A^\top(A U A^\top)^{-1} A \sqrt{U} f(v)$ in an approximate way, 
by $(1\pm\varepsilon_{mp})$-approximating $U$ and $v$.
This is an extension of the algorithm from \cite{CohenLS19}, 
which maintained only the matrix $\sqrt{U} A^\top(A U A^\top)^{-1} A \sqrt{U}$ approximately.
We restate the formal description of the result for convenience:

\ifdefined\SODAversion
\begin{lemma}\label{lem:projection_maintenance}
\textsc{(Previously stated as \Cref{lem:overview:projection_maintenance})}

\end{lemma}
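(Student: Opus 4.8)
The plan is to implement the data-structure as an extension of the one from \cite{CohenLS19}, exactly along the lines of the proof sketch already given in \Cref{sub:projection_maintenance}. I would first restate the invariants maintained internally: three matrices $M, L, R$ with $\tilde P = M + LR^\top$, together with the approximate vectors $\tilde u, \tilde v$, the evaluated vector $f(\tilde v)$, and the current solution $z := \tilde P f(\tilde v)$ and auxiliary product $p := M f(\tilde v)$. The matrix part ($M, L, R$, and the bookkeeping of which columns to add on an update to $\tilde u$, plus the reset rule that folds $LR^\top$ into $M$ once $m$ grows past $n^a$) is taken verbatim from \cite{CohenLS19}; I would cite their \Cref{lem:projection_maintenance}-analogue for the correctness of the $\tilde P$ maintenance and the amortized bound of $O(C/\varepsilon_{mp}(n^{\omega-1/2} + n^{2-a/2+o(1)}) + n^{1+a})$ per update on the matrix side. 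The only new work is maintaining $z = \tilde P f(\tilde v)$ and $p = M f(\tilde v)$ through the three kinds of events that happen during an \textsc{Update}: (i) a change to $\tilde v$ (hence to $f(\tilde v)$) in a few coordinates, (ii) the addition of new columns to $L, R$ caused by a change to $\tilde u$, and (iii) a reset.

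For correctness I would verify each event separately. When $\tilde v$ changes by $\delta_v$ supported on an index set $S$, write $\delta_{f} := f(\tilde v^\new) - f(\tilde v)$, also supported on $S$; then $p^\new = p + M_{\cdot,S}\,\delta_{f|S}$ costs $O(n|S|)$, and $z^\new = z + M_{\cdot,S}\,\delta_{f|S} + L(R^\top \delta_{f})$ costs $O(n|S| + nm)$. When new columns $\Delta L, \Delta R$ are appended to $L, R$ (the $M$ part being unchanged in this step), $z$ gains the correction $\Delta L(\Delta R^\top f(\tilde v))$, costing $O(n \cdot (\text{number of new columns}))$, and $p$ is unchanged. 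On a reset, $M^\new = M + LR^\top$, and we recompute $p^\new = M^\new f(\tilde v) = p + L(R^\top f(\tilde v))$ in $O(nm)$ time and set $z^\new = p^\new$ (since $L, R$ become empty), which is dominated by the $\Omega(n^2)$ cost of forming $M^\new$. Throughout, the approximation guarantees $v \approx_{\varepsilon_{mp}} \tilde v$ and $u \approx_{\varepsilon_{mp}} \tilde u$ are inherited directly from the coordinate-update rule of \cite{CohenLS19}: a coordinate of $\tilde u$ (resp.\ $\tilde v$) is refreshed precisely when its $(1\pm\varepsilon_{mp})$ bound would otherwise be violated.

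For the running time I would argue that the new bookkeeping never dominates. The number of coordinates of $\tilde v$ refreshed over the first $T$ updates is bounded, via the standard potential/amortization argument of \cite{CohenLS19} applied to the sequence $v^{(1)},\dots,v^{(T)}$ using \eqref{eq:small_change\theimportcounter}, by $O(T \cdot C^2/\varepsilon_{mp}^2 \cdot \text{(something)})$; more carefully, the same rank-bound argument that governs how many columns get added to $L, R$ also governs how many entries of $\tilde v$ change, so the extra $O(n|S|)$ and $O(n \cdot \#\text{new columns})$ terms are each bounded by the already-charged $O(C/\varepsilon_{mp}\, n^{\omega-1/2})$-type terms (in fact by the weaker $n^{2-a/2+o(1)}$ term). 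The $O(nm)$ term with $m \le n^a$ is $O(n^{1+a})$, already present in the statement. The reset cost is $O(n^2 \cdot \text{(number of resets)})$, and the number of resets over $T$ updates is $O(T \cdot C/(\varepsilon_{mp} n^a))$ by the same counting as in \cite{CohenLS19}, giving $O(T \cdot C/\varepsilon_{mp}\, n^{2-a}) = O(T \cdot C/\varepsilon_{mp}\, n^{2-a/2})$ which is absorbed; and the initialization cost $O(n^2 d^{\omega-2})$ for forming $M = \sqrt{U}A^\top(AUA^\top)^{-1}A\sqrt{U}$ plus $O(n^2)$ for $p = M f(v)$ is as claimed. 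Summing gives the stated bound $O(T(C/\varepsilon_{mp}(n^{\omega-1/2}+n^{2-a/2+o(1)})\log n + n^{1+a}))$.

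The main obstacle is not any single calculation but the careful accounting that the vector-maintenance overhead is genuinely dominated: one must confirm that the number of entries of $\tilde v$ that change, and the number of columns appended to $L,R$, are controlled by the same amortized potential used for the matrix, so that multiplying either count by $n$ (the cost of touching one column/coordinate of an $n$-row object) stays within the matrix-side budget. I would handle this by literally reusing the amortized analysis of \cite{CohenLS19} — which is stated for $\tilde u$ but applies verbatim to $\tilde v$ since both are $(1\pm\varepsilon_{mp})$-coordinate-approximations of $\ell_2$-slowly-changing sequences — and observing that the cost per ``touched'' coordinate there was already $\Omega(n)$ (each touched coordinate of $\tilde u$ contributes a rank-one update of size $n$ to $L,R$), so adding one more $O(n)$ charge per touched coordinate of $\tilde v$ only changes constants. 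The secondary subtlety is making sure the reset correctly refreshes $p$ and $z$ and that $f(\tilde v)$ is stored and updated incrementally rather than recomputed; this is routine once the invariant list is fixed.
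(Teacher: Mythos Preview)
Your proposal is correct and takes essentially the same approach as the paper: both extend the \cite{CohenLS19} projection-matrix data structure to additionally maintain the matrix--vector product (the paper stores $w = M\sqrt{\tilde U}f(\tilde v)$ with $M = A^\top(A\tilde U A^\top)^{-1}A$ in \Cref{alg:projection_maintenance}), and both defer to the \cite{CohenLS19} amortized analysis for the matrix side while arguing the extra vector bookkeeping is dominated. The only notable implementation difference is that the paper handles $\tilde v$ by a threshold reset (recompute $w$ in $O(n^2)$ whenever the stale set $|T|\ge n^a$, amortized via \Cref{lem:error_after_k_rounds}) rather than your per-coordinate incremental refresh, which makes the $\tilde v$-side accounting slightly more direct than the ``same potential as for $\tilde u$'' argument you sketch.
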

\else
\begin{lemma}[{Previously stated as \Cref{lem:overview:projection_maintenance} 
in \Cref{sub:projection_maintenance}}]
\label{lem:projection_maintenance}

\end{lemma}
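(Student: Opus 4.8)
The plan is to follow the ``second option'' sketched in the outline: take the data-structure of \cite{CohenLS19} that already maintains an approximate projection matrix $\tilde P$ in the implicit form $\tilde P = M + LR^\top$, and extend it with a handful of extra book-keeping operations so that it additionally maintains the vector $\tilde P f(\tilde v)$. Concretely, I would present \Cref{alg:projection_maintenance} as the \cite{CohenLS19} algorithm plus the following invariant: at the end of every \textsc{Update} the algorithm stores $\tilde u,\tilde v, f(\tilde v)$, the three matrices $M,L,R$ with $\tilde P = M+LR^\top$, the vector $q := Mf(\tilde v)$, and the output $p := \tilde P f(\tilde v) = q + LR^\top f(\tilde v)$. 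The correctness half of the proof is then the easy half: I would verify that each line of the extended algorithm preserves this invariant. When $M$ is rebuilt during a reset ($M \leftarrow M + LR^\top$, $L,R \leftarrow$ empty), we recompute $q \leftarrow q + L R^\top f(\tilde v)$ in $O(nm)$ plus, when $M$ itself changes, $O(n^2)$ time, which a reset already pays for; when a block of coordinates $J$ of $\tilde u$ is updated, $M,L,R$ are updated exactly as in \cite{CohenLS19} via Sherman--Morrison--Woodbury and $q$ is corrected by the corresponding low-rank/sparse change to $M$; when $\tilde v$ changes in a set $I$ of coordinates we write $f(\tilde v^\new) = f(\tilde v) + \delta$ with $\delta$ supported on $I$ and set $p \leftarrow p + M\delta + LR^\top\delta$, where $M\delta$ costs $O(n|I|)$ and $LR^\top\delta$ costs $O((n+|I|)m)$. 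The approximation guarantees $v \approx_{\varepsilon_{mp}} \tilde v$ and $u \approx_{\varepsilon_{mp}} \tilde u$ are inherited verbatim from the rule that a coordinate of $\tilde u$ (resp.\ $\tilde v$) is only refreshed once the $(1\pm\varepsilon_{mp})$ condition would otherwise break, exactly as in \cite{CohenLS19}.

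The quantitative half is where essentially all the work lies, and it is inherited almost entirely from \cite{CohenLS19}. The plan is: (i) bound, for a single update, the number of coordinates $k_u$ of $\tilde u$ and $k_v$ of $\tilde v$ that must be refreshed, in terms of the $\ell_2$-multiplicative changes $\|(u^{(k+1)}-u^{(k)})/u^{(k)}\|_2$ and $\|(v^{(k+1)}-v^{(k)})/v^{(k)}\|_2$, both assumed $\le C$ by hypothesis \eqref{eq:small_change\theimportcounter}; a potential/bucketing argument (a coordinate at approximation level depending on its accumulated drift) shows the amortized number of refreshed coordinates at ``level $\ell$'' is $O(2^{2\ell} C^2 / \varepsilon_{mp}^2)$-ish, summed appropriately over the $\log n$ levels; (ii) for a refresh of a set $J$ with $|J| = k$ of $\tilde u$, bound the update cost: a rank-$k$ correction to $(AUA^\top)^{-1}$-type quantities via Woodbury costs $O(n k^{\omega-1} + n^{2} + \ldots)$ in the worst case but, amortized with the reset schedule tuned as in \cite{CohenLS19} (reset once $m$ reaches $n^{a}$), yields the claimed $C/\varepsilon_{mp}\,(n^{\omega-1/2} + n^{2-a/2+o(1)})\log n + n^{1+a}$ per update; (iii) observe that all the new vector-maintenance operations ($M\delta$, $LR^\top\delta$, recomputing $q$) are dominated by costs the original data-structure already incurs --- $M\delta$ is cheaper than a single row of the $M$ update, $LR^\top\delta$ is cheaper than forming a column of $L$ or $R$, and recomputing $q$ at a reset is $O(n^2+nm)$, absorbed by the $\Omega(n^2)$ reset cost --- so the amortized bound is unchanged. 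The initialization cost $O(n^2 d^{\omega-2})$ is the cost of forming $A^\top(AUA^\top)^{-1}A$ (block-inversion / rectangular multiplication) plus one matrix--vector product for $Mf(v)$, again the same as in \cite{CohenLS19}.

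The main obstacle, and the part I would be most careful about, is step (i)--(ii) of the quantitative argument: correctly setting up the multi-level approximation potential so that the amortized count of coordinate refreshes at each scale is controlled by $C^2/\varepsilon_{mp}^2$, and then balancing this against the cost of low-rank Woodbury updates and the reset threshold $m = n^a$ to land exactly on $O(T(C/\varepsilon_{mp}(n^{\omega-1/2}+n^{2-a/2+o(1)})\log n + n^{1+a}))$. This is genuinely the technical heart of \cite{CohenLS19}, and I would reproduce its structure: define $\hat u$ as a ``lagging'' copy with per-coordinate accuracy $\varepsilon_{mp}$, track $\sum_i (\text{accumulated multiplicative drift of }u_i)^2$ as a potential that increases by at most $C^2$ per step and drops by $\Omega(\varepsilon_{mp}^2)$ per refreshed coordinate, split coordinates into $O(\log n)$ buckets by drift magnitude, and bound the rectangular-matrix-multiplication cost of updating each bucket using the dual exponent $a \le \alpha$. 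Because our notion of approximation of $v$ is literally ``$\tilde v := f(\tilde v\text{-argument})$ where the argument is $(1\pm\varepsilon_{mp})$-close,'' the $v$-side refresh count is governed by the same $\ell_2$ hypothesis on $v$ and needs no new idea --- the one thing to state cleanly is that $f$ being computable in $O(1)$ time means re-evaluating $f$ on the $k_v$ changed coordinates costs only $O(k_v)$, so it is absorbed. Everything else is routine once the \cite{CohenLS19} accounting is in place, so the proof will mostly consist of quoting their Lemmas for the $M,L,R$ maintenance and adding the short arguments above for the extra vector $p$.
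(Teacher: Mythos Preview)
Your proposal is correct and follows essentially the same approach as the paper: extend the \cite{CohenLS19} data-structure by additionally storing the vector $M\sqrt{\tilde U}f(\tilde v)$ (your $q$), verify the invariant is preserved across each branch of \textsc{Update}, and observe that the extra vector maintenance is dominated by operations the original data-structure already performs. The only presentational difference is that the paper does not reproduce the \cite{CohenLS19} amortized-cost analysis in full; it simply cites it (\Cref{lem:cohens_bound}) for the $k\ge n^a$ branch and then separately bounds the new $k<n^a$ else-branch (sparse $\tilde v$-update plus an occasional $O(n^2)$ reset of $\tilde v$ controlled by \Cref{lem:error_after_k_rounds}), whereas you plan to redo the potential/bucketing argument yourself---either route lands on the same bound.
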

\fi

This section is split into three parts:
We first present the algorithm and give a high-level description in \Cref{sec:projection:outline}.
The next subsection (\Cref{sec:projection:correctness}) proves 
that the algorithm returns the correct result,
and at last in \Cref{sec:projection:complexity} we bound the complexity of the algorithm.

\subsection{Outline of \Cref{alg:projection_maintenance}}
\label{sec:projection:outline}

\Cref{alg:projection_maintenance} describes a data-structure, 
so we have variables that persist between calls to its function \textsc{Update}.
What these variables represent might be a bit hard to deduce from just reading the pseudo-code,
so we want to give a brief outline of \Cref{alg:projection_maintenance} here.
This outline is not required for verifying the proofs, 
but it might help for understanding how the algorithm works. 

The internal variables are $n$-dimensional vectors $\tilde{u},\tilde{v},w$
and an $n \times n$ matrix $M$. 
The relationship between them is 
\begin{align}
M = A^\top(A \tilde{U} A^\top)^{-1} A \text{ and } w = M \sqrt{\tilde{U}}f(\tilde{v}), \label{eq:consistent_variables_intro}
\end{align}
where $\tilde{U} = \diag(\tilde{u})$.

These internal variables are useful because of the following reason:
In each call to \textsc{Update}, 
the data-structure receives two new vectors $u^\new, v^\new$
and for $U^\new = \diag(u^\new)$ the task is to return an approximation of
$\sqrt{U^\new} A^\top(A U^\new A^\top)^{-1} A \sqrt{U^\new} f(v^\new)$
by $(1\pm\varepsilon_{mp})$-approximating $U^\new$ and $v^\new$.
Thus if
\begin{align}
u^\new \approx_{\varepsilon_{mp}} \tilde{u},
v^\new \approx_{\varepsilon_{mp}} \tilde{v}, \label{eq:approximation_condition}
\end{align}
then $\sqrt{\tilde{U}}w$ would be the desired approximate result.
If this $(1+\varepsilon_{mp})$-approximation condition \eqref{eq:approximation_condition} is not satisfied, 
then we can define two new valid approximations
\ifdefined\SODAversion
\begin{align*}
\tilde{u}^{\new}_{i} :=
\begin{cases} 
	\tilde{u}_{i} & \text{if } u^\new_i \approx_{\varepsilon_{mp}} \tilde{u}_i\\
	u^{\new}_{i} & \text{othewise}
\end{cases} \\
\tilde{v}^{\new}_{i} :=
\begin{cases} 
	\tilde{v}_{i} & \text{if } v^\new_i \approx_{\varepsilon_{mp}} \tilde{v}_i\\
	v^{\new}_{i} & \text{otherwise}
\end{cases}
\end{align*}
\else
\begin{align*}
\tilde{u}^{\new}_{i} :=
\begin{cases} 
	\tilde{u}_{i} & \text{if } u^\new_i \approx_{\varepsilon_{mp}} \tilde{u}_i\\
	u^{\new}_{i} & \text{othewise}
\end{cases}
\hspace{10pt}
\tilde{v}^{\new}_{i} :=
\begin{cases} 
	\tilde{v}_{i} & \text{if } v^\new_i \approx_{\varepsilon_{mp}} \tilde{v}_i\\
	v^{\new}_{i} & \text{otherwise}
\end{cases}
\end{align*}
\fi
for all $i=1,...,n$. 
If $\tilde{u}^\new$ and $\tilde{u}$ 
(and respectively $\tilde{v}^\new$, $\tilde{v}^\new$) 
differ in at most $k$ many entries,
then it is known (via Sherman-Morrison-Woodbury identity \Cref{lem:woodbury}) 
that one can quickly construct two $n \times k$ matrices $R,L$ 
such that
$$
A^\top(A \tilde{U}^\new A^\top)^{-1} A = M - RL^\top.
$$
This in turn means that we can get the desired approximate result as follows:
\ifdefined\SODAversion
\begin{align*}
&\sqrt{\tilde{U}^\new}A^\top(A \tilde{U}^\new A^\top)^{-1} A \sqrt{\tilde{U}^\new} f(\tilde{v}^\new) \\
=&~
\sqrt{\tilde{U}^\new}(M - RL^\top) \sqrt{\tilde{U}^\new} f(\tilde{v}^\new)
\end{align*}
\begin{align*}
=&~
\sqrt{\tilde{U}^\new} \left(\underbrace{M \sqrt{\tilde{U}} f(\tilde{v})}_{=w}\right. \\
&~ + M \underbrace{\left(\sqrt{\tilde{U}^\new} f(\tilde{v}^\new) - \sqrt{\tilde{U}} f(\tilde{v})\right)}_{\text{at most $2k$ non-zero entries}} \\
&~ \left. - RL^\top \sqrt{\tilde{U}^\new} f(\tilde{v}^\new)\right)
\end{align*}
\else
\begin{align*}
&\sqrt{\tilde{U}^\new}A^\top(A \tilde{U}^\new A^\top)^{-1} A \sqrt{\tilde{U}^\new} f(\tilde{v}^\new)
=
\sqrt{\tilde{U}^\new}(M - RL^\top) \sqrt{\tilde{U}^\new} f(\tilde{v}^\new) \\
=&\:
\sqrt{\tilde{U}^\new} \left(\underbrace{M \sqrt{\tilde{U}} f(\tilde{v})}_{=w}
+ M \underbrace{\left(\sqrt{\tilde{U}^\new} f(\tilde{v}^\new) - \sqrt{\tilde{U}} f(\tilde{v})\right)}_{\text{at most $2k$ non-zero entries}}
- RL^\top \sqrt{\tilde{U}^\new} f(\tilde{v}^\new)\right)
\end{align*}
\fi
Here each term can be computed in at most $O(nk)$ time, 
because the first term is the already known vector $w$, 
the vector of the second term is sparse,
and $R,L$ are $n \times k$ matrices.

Thus, if $k$ is small, then we can maintain the solution quickly.
In \cite{CohenLS19}, Cohen et al.~%
have developed a strategy with low amortized cost, 
that specifies when to recompute $M$ for some new $\tilde{u}$,
such that $k$ stays small. 
In their algorithm they do not maintain the matrix-vector product, 
so their data-structure does not have the internal variables $w$ and $\tilde{v}$.
We extend their strategy to also recompute $w$ for some new $\tilde{v}$,
such that the above outlined procedure has low amortized cost.

\begin{algorithm*}
\caption{Projection Maintenance Data-Structure (difference to \cite{CohenLS19} highlighted in blue)}\label{alg:projection_maintenance} {\footnotesize
\begin{algorithmic}[1]
\State {\bf datastructure} \textsc{MaintainProjection} \Comment{\Cref{lem:projection_maintenance}}
\State {\bf members}
  \State \hspace{4mm} $\tilde{u},\tilde{v}, w \in \R^{n}$
  \State \hspace{4mm} $f : \R \to \R$
  \State \hspace{4mm} $A \in \R^{d \times n}$,
  $M \in \R^{n \times n}$  
  \State \hspace{4mm} $\epsilon_{mp} \in (0,1/4)$ \Comment{Accuracy parameter}
  \State \hspace{4mm} $a \leftarrow \min\{\alpha,2/3\}$ \Comment{Minimum batch size is $n^a$.}
\State {\bf end members}
  \Procedure{\textsc{Initialize}}{$A,u,f,v,\epsilon_{mp}$}
  	\State $u \leftarrow u$, $v \leftarrow v$, $f \leftarrow f$, $\epsilon_{mp} \leftarrow \epsilon_{mp}$
    \State $\tilde u \leftarrow u$, $\tilde v \leftarrow v$
    \State $M \leftarrow A^\top ( A U A^\top)^{-1} A $
    \State $w \leftarrow M\sqrt{U}f(v)$ \label{line:pre:w}
  \EndProcedure
  \Procedure{\textsc{Update}}{$u^{\new}, v^{\new}$}
    \LineComment{\begin{tabular}{l}The variables in this method represent the following: $u^\new,v^\new$ are the new exact values.\\
    $\tilde{u}^\new,\tilde{v}^\new$ are approximations $u^\new \approx_{\varepsilon_{mp}} \tilde{u}^\new,v^\new \approx_{\varepsilon_{mp}} \tilde{v}^\new$.\end{tabular}}
    \LineComment{\begin{tabular}{l}Vector $r$ will be the result: $r = \sqrt{\tilde{U}^\new}A^\top(A\tilde{U}^\new A^\top)^{-1} A \sqrt{\tilde{U}^\new} f(\tilde{v}^\new) $\end{tabular}}
    \LineComment{\begin{tabular}{l}For the member variables $\tilde u,\tilde v, M$ we have $w = M\sqrt{\tilde U} f(\tilde v)$ and $M = A^\top(A \tilde{U} A^\top)^{-1}A$.\\
    Note that $\tilde u,\tilde v$ are generally \emph{not} approximate versions of $u^\new, v^\new$.\end{tabular}}
    \State $y_i \leftarrow u^{\new}_i / \tilde{u}_i  - 1$, $\forall i \in [n]$
    \State Let $\pi : [n] \rightarrow [n]$ be a sorting permutation such that $|y_{\pi(i)}| \geq |y_{\pi(i+1)}|$
    \State $k \leftarrow$ the number of indices $i$ such that $|y_i| \geq \epsilon_{mp}$.
    \If {$k \ge n^a$}
      \While{$1.5 \cdot k < n$ and $|y_{\pi(1.5 k)}| \geq (1-1/\log n) |y_{\pi(k)}|$} \label{line:extra_entries}
        \State $k \leftarrow \min(\lceil 1.5 \cdot k  \rceil, n)$
      \EndWhile
    \EndIf
    \State $\tilde{u}^{\new}_{\pi(i)} \leftarrow \begin{cases} 
         u^{\new}_{\pi(i)} & i \in \{1,2,\cdots,k\} \\
         \tilde{u}_{\pi(i)} & i \in \{k+1, \cdots, n\} 
      \end{cases}$ \label{line:set_u} \Comment{$\tilde{u}^\new \approx_{\varepsilon_{mp}} u^\new$}
      \smallskip
    \State $\Delta \leftarrow \mathrm{diag}( \tilde{u}^{\new} - \tilde{u} )$ \Comment{$\Delta \in \R^{n \times n}$ and $\Delta$ has $k$ non-zero entries.}
    \State Let $S \leftarrow \pi( [k] )$ be the first $k$ indices in the permutation.
    \State Let $M_S\in \R^{n \times k}$ be the $k$ columns from $S$ of $M$.
    \State Let $M_{S,S},\Delta_{S,S} \in \R^{k \times k}$ be the $k$ rows and columns from $S$ of $M$ and $\Delta$.
    \If {$k \ge n^a$} \Comment{Perform a rank $k = 2^\ell$ update to $M$.} \label{line:entry_condition}
      \State $M \leftarrow M - M_S \cdot ( \Delta^{-1}_{S,S} + M_{S,S} )^{-1} \cdot (M_S)^\top$ \label{line:offlinewoodburry} \Comment{\begin{tabular}{r}Compute $M = A^\top ( A \tilde{U}^{\new} A^\top )^{-1} A$ via \\ Sherman-Morrison-Woodbury identity\end{tabular}}
      \vspace{-5pt}
      \textcolor{blue}{
        \State $w \leftarrow M\sqrt{\tilde U^\new} f(v^\new)$ \label{line:query1}
      }
      \State $\tilde u \leftarrow \tilde{u}^{\new}$, $\tilde{v} \leftarrow v^\new$, $ \tilde{v}^\new \leftarrow v^\new$, $r \leftarrow \sqrt{\tilde U^\new} w$
    \Else \label{line:else_branch} \Comment{This else-branch is the main difference compared to \cite{CohenLS19}.}
      \color{blue}
      \State Let $T$ be the set of indices $i$ without $(1-\varepsilon_{mp}) \tilde{v}_{i} \le v^\new_i \le (1+\varepsilon_{mp}) \tilde{v}_{i}$.
      \If{$|T| \ge n^a$} \Comment{We reset $\tilde{v} = v^\new$}
        \State $r \leftarrow \sqrt{\tilde U^\new}M\sqrt{\tilde U^\new} f(v^\new) - \sqrt{\tilde U^\new} M_S \cdot ( \Delta^{-1}_{S,S} + M_{S,S} )^{-1} \cdot (M_S)^\top \sqrt{\tilde U^\new} f(v^\new)$ \label{line:query2}
        \State $w \leftarrow M\sqrt{\tilde U} f(v^\new)$
        \State $\tilde{v} \leftarrow v^\new$, $\tilde{v}^\new \leftarrow v^\new$
      \Else
        \State $\tilde{v}^{\new}_{i} \leftarrow \begin{cases} v^{\new}_{i} & i \in T \\ \tilde{v}_{i} & i \notin T \end{cases}$ \label{line:set_v} \Comment{$\tilde{v}^\new \approx_{\varepsilon_{mp}} v^\new$}
        \State $r \leftarrow \sqrt{\tilde{U}^\new} \left( w + M(\sqrt{\tilde{U}^\new}  f(\tilde{v}^\new) - \sqrt{\tilde U}  f(\tilde{v})) - M_S \cdot ( \Delta^{-1}_{S,S} + M_{S,S} )^{-1} \cdot (M_S)^\top \sqrt{\tilde U^\new} f(\tilde{v}^\new) \right)$ \label{line:first_change_r}
		\vspace{-2pt}      
      \EndIf
      \color{black}
    \EndIf
    \LineComment{At the end of the procedure, we still have $w = M\sqrt{\tilde U} f(\tilde v)$ and $M = A^\top(A \tilde{U} A^\top)^{-1}A$}
    \LineComment{Return triple with $u^\new \approx_{\varepsilon_{mp}} \tilde{u}^\new$, $v^\new \approx_{\varepsilon_{mp}} \tilde{v}^\new$ and $r = \sqrt{\tilde{U}^\new}A^\top(A\tilde{U}^\new A)^{-1}A\sqrt{\tilde{U}^\new} f(\tilde{v}^\new)$}
    \State \Return $\tilde{u}^\new$, $\tilde{v}^\new$, $f(\tilde{v}^\new)$, $r$
  \EndProcedure
\State {\bf end datastructure}
\end{algorithmic}}
\end{algorithm*}

\subsection{Correctness}
\label{sec:projection:correctness}

The task of this subsection is to prove the following lemma, 
which says that the vectors returned by \Cref{alg:projection_maintenance}
are as specified in \Cref{lem:projection_maintenance}.
\begin{lemma}[Returned vectors of \Cref{lem:projection_maintenance}]\label{lem:correct_output}
After every update to \Cref{alg:projection_maintenance} with input $(u^\new, v^\new)$
the returned vectors $\tilde{u}^\new,\tilde{v}^\new,f(\tilde{v}^\new),r$ satisfy
$u^\new \approx_{\varepsilon_{mp}} \tilde{u}^\new$, 
$v^\new \approx_{\varepsilon_{mp}} \tilde{v}^\new$
and
$$
r = \sqrt{\tilde{U}^\new}A^\top(A\tilde{U}^\new A)^{-1}A\sqrt{\tilde{U}^\new} f(\tilde{v}^\new).
$$
\end{lemma}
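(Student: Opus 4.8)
The plan is to verify the claim case by case, following the branching structure of \textsc{Update}, and to track the two invariants that the pseudo-code comments assert hold between calls, namely
\[
M = A^\top(A\tilde U A^\top)^{-1}A
\qquad\text{and}\qquad
w = M\sqrt{\tilde U}\,f(\tilde v).
\]
These hold after \textsc{Initialize} by construction (lines setting $M$ and $w$), so by induction it suffices to check that each branch of \textsc{Update} (i) produces $\tilde u^\new,\tilde v^\new$ with $u^\new \approx_{\varepsilon_{mp}} \tilde u^\new$ and $v^\new \approx_{\varepsilon_{mp}} \tilde v^\new$, (ii) returns $r$ equal to $\sqrt{\tilde U^\new}A^\top(A\tilde U^\new A^\top)^{-1}A\sqrt{\tilde U^\new}f(\tilde v^\new)$, and (iii) re-establishes the two invariants for the member variables $\tilde u,\tilde v,M,w$ at the end of the call.

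First I would handle the approximation claim (i), which is the easy part and is essentially definitional: in every branch $\tilde u^\new_{\pi(i)}$ is set to $u^\new_{\pi(i)}$ exactly on the $k$ indices of largest relative deviation (line \ref{line:set_u}), and the remaining indices have $|y_i| < \varepsilon_{mp}$, i.e. $u^\new_i \approx_{\varepsilon_{mp}} \tilde u_i = \tilde u^\new_i$; similarly for $\tilde v^\new$, which is either a full reset to $v^\new$ or the update on the set $T$ of violated indices (line \ref{line:set_v}), leaving $v^\new_i \approx_{\varepsilon_{mp}} \tilde v_i$ off $T$. The core of the proof is claim (ii) together with maintaining the invariants. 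The key algebraic input is the Sherman--Morrison--Woodbury identity (\Cref{lem:woodbury}): since $\tilde U^\new = \tilde U + \Delta$ with $\Delta$ supported on the $k$ indices $S$, one gets
\[
A^\top(A\tilde U^\new A^\top)^{-1}A
= M - M_S\bigl(\Delta_{S,S}^{-1} + M_{S,S}\bigr)^{-1}M_S^\top
=: M - RL^\top ,
\]
which is exactly the update performed on line \ref{line:offlinewoodburry}. With $\tilde P^\new := \sqrt{\tilde U^\new}(M - M_S(\Delta_{S,S}^{-1}+M_{S,S})^{-1}M_S^\top)\sqrt{\tilde U^\new}$, I then verify in each branch that the formula for $r$ computes $\tilde P^\new f(\tilde v^\new)$: in the $k\ge n^a$ branch, $M$ is updated first so that $M = A^\top(A\tilde U^\new A^\top)^{-1}A$, then $w \leftarrow M\sqrt{\tilde U^\new}f(v^\new)$ (line \ref{line:query1}) and $\tilde v \leftarrow v^\new$, so $r = \sqrt{\tilde U^\new}w$ is immediately the claimed quantity and both invariants hold with $\tilde u = \tilde u^\new$, $\tilde v = v^\new = \tilde v^\new$; in the $|T|\ge n^a$ sub-branch, $M$ is not updated, so one reads line \ref{line:query2} as $r = \sqrt{\tilde U^\new}(M - M_S(\cdots)^{-1}M_S^\top)\sqrt{\tilde U^\new}f(v^\new)$, which is $\tilde P^\new f(\tilde v^\new)$ since $\tilde v^\new = v^\new$; in the last sub-branch one uses the old invariant $w = M\sqrt{\tilde U}f(\tilde v)$ to rewrite
\[
\sqrt{\tilde U^\new}M\sqrt{\tilde U^\new}f(\tilde v^\new)
= \sqrt{\tilde U^\new}\Bigl(w + M\bigl(\sqrt{\tilde U^\new}f(\tilde v^\new) - \sqrt{\tilde U}f(\tilde v)\bigr)\Bigr),
\]
and subtracting the correction term $\sqrt{\tilde U^\new}M_S(\Delta_{S,S}^{-1}+M_{S,S})^{-1}M_S^\top\sqrt{\tilde U^\new}f(\tilde v^\new)$ gives exactly $\tilde P^\new f(\tilde v^\new)$, matching line \ref{line:first_change_r}.

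The last point to nail down — and the step I expect to be the main obstacle — is checking that in the \emph{else}-branch ($k < n^a$) the member variables $M$ and $w$ are left consistent \emph{with each other} for the next call, even though $M$ is not changed in that branch. The subtlety is that $\tilde u^\new \ne \tilde u$ is possible there (up to $k < n^a$ entries change), yet $M$ still equals $A^\top(A\tilde U A^\top)^{-1}A$ with the \emph{old} $\tilde U$; so the closing comment ``$M = A^\top(A\tilde U A^\top)^{-1}A$'' must be read with $\tilde u$ denoting the member value that is actually stored at the end of the procedure, and I must confirm that in this branch $\tilde u$ is \emph{not} overwritten by $\tilde u^\new$ while $\tilde v$, $w$ are updated (or not) consistently: when $|T| < n^a$, $w$ is left unchanged and $\tilde v$ is left unchanged, so $w = M\sqrt{\tilde U}f(\tilde v)$ still holds verbatim; when $|T| \ge n^a$, $w$ is recomputed as $M\sqrt{\tilde U}f(v^\new)$ and $\tilde v \leftarrow v^\new$, so again $w = M\sqrt{\tilde U}f(\tilde v)$. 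I would therefore state the induction hypothesis carefully as ``at the start of each \textsc{Update} call, $M = A^\top(A\tilde U A^\top)^{-1}A$ and $w = M\sqrt{\tilde U}f(\tilde v)$ for the current member values $\tilde u,\tilde v,M,w$,'' and close each of the (at most) three branches by exhibiting the updated member values and re-deriving these two equalities; combined with the per-branch computation of $r$ above and the definitional check (i), this proves the lemma.
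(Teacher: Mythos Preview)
Your proposal is correct and follows essentially the same approach as the paper: the paper separates the invariant check ($M = A^\top(A\tilde U A^\top)^{-1}A$ and $w = M\sqrt{\tilde U}f(\tilde v)$ at the start of each \textsc{Update}) into its own lemma (\Cref{lem:consistent_variables}) and then verifies the output case by case exactly as you outline, invoking the Woodbury identity (\Cref{lem:woodbury}) in each branch. Your handling of the subtlety in the $k<n^a$ branch --- that the member $\tilde u$ is \emph{not} overwritten there, so the stored $M$ remains consistent with the stored $\tilde u$ --- is precisely the point the paper's invariant lemma is set up to make clean.
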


Before we can prove this lemma, 
we must first prove that the internal variables of the data-structure save the correct values,
i.e. we want to prove that equation \eqref{eq:consistent_variables_intro} is correct.
For this we must first state the following lemma from {\cite{CohenLS19}, 
based on Sherman-Morison-Woodbury identity.

\ifdefined\SODAversion
\begin{lemma}\label{lem:woodbury}
\textsc{(\cite{CohenLS19}, based on Sherman-Morison-Woodbury identity)}
If $M = A^\top(A\tilde{U}A^\top)^{-1}A$ at the start of the update of \Cref{alg:projection_maintenance} and $M_S, M_{S,S},\Delta_{S,S}$ are chosen as described in \Cref{alg:projection_maintenance}, then we have
$$M - M_S \cdot ( \Delta^{-1}_{S,S} + M_{S,S} )^{-1} \cdot (M_S)^\top = A^\top (A \tilde{U}^\new A^\top)^{-1} A$$
\end{lemma}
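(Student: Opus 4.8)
The plan is to apply the Sherman--Morrison--Woodbury identity in the standard way, using the fact that the change $\tilde U^\new - \tilde U = \Delta$ has only $k$ nonzero diagonal entries, supported on the index set $S = \pi([k])$. Concretely, we start from $A\tilde U^\new A^\top = A\tilde U A^\top + A\Delta A^\top$. Writing $\Delta = \Delta_{S,S}$ on the coordinates of $S$ (and zero elsewhere), we have the low-rank factorization $A\Delta A^\top = (A_S)\,\Delta_{S,S}\,(A_S)^\top$, where $A_S \in \R^{d\times k}$ denotes the $k$ columns of $A$ indexed by $S$. Thus $A\tilde U^\new A^\top$ is a rank-$k$ update of $A\tilde U A^\top$.

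Next I would apply the Woodbury identity
\[
(A\tilde U A^\top + A_S \Delta_{S,S} A_S^\top)^{-1}
= (A\tilde U A^\top)^{-1} - (A\tilde U A^\top)^{-1} A_S \left(\Delta_{S,S}^{-1} + A_S^\top (A\tilde U A^\top)^{-1} A_S\right)^{-1} A_S^\top (A\tilde U A^\top)^{-1}.
\]
Now I would left-multiply by $A^\top$ and right-multiply by $A$, and identify terms with the stored matrix $M = A^\top(A\tilde U A^\top)^{-1}A$. The key observation is that $A^\top (A\tilde U A^\top)^{-1} A_S$ is exactly $M_S$, the submatrix consisting of the $k$ columns of $M$ indexed by $S$ (since $A_S$ is $A$ restricted to those columns, and right-multiplication by $A_S$ picks out the corresponding columns of $A^\top(A\tilde U A^\top)^{-1}A$). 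Similarly $A_S^\top (A\tilde U A^\top)^{-1} A_S = M_{S,S}$, the $k\times k$ principal submatrix of $M$ on $S$. Substituting these identifications gives
\[
A^\top (A\tilde U^\new A^\top)^{-1} A = M - M_S\left(\Delta_{S,S}^{-1} + M_{S,S}\right)^{-1} M_S^\top,
\]
which is exactly the claimed formula (and matches line~\ref{line:offlinewoodburry} of the algorithm).

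The only real obstacle is the bookkeeping of the three distinct restrictions — columns of $M$, the principal $k\times k$ block of $M$, and the diagonal block of $\Delta$ — and making sure they all line up with the same index set $S$ under the permutation $\pi$; this is where one must be careful but nothing deep is needed. One should also note the invertibility of $\Delta_{S,S}$: by construction every index in $S$ has $|y_i|\ge \varepsilon_{mp} > 0$, so the corresponding entry $\tilde u^\new_i - \tilde u_i$ is nonzero, hence $\Delta_{S,S}$ is a nonsingular diagonal matrix and $\Delta_{S,S}^{-1}$ is well-defined. (The invertibility of $\Delta_{S,S}^{-1} + M_{S,S}$ itself follows from the Woodbury identity being applicable, i.e. from $A\tilde U^\new A^\top$ being invertible, which holds since $A$ has full row rank and $\tilde U^\new$ has positive diagonal.) Given all this, the proof is a direct computation with no amortization or probabilistic content, since this is purely the algebraic correctness statement underlying the data-structure.
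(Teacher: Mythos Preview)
Your proof is correct and is precisely the standard Sherman--Morrison--Woodbury computation; the paper itself does not prove this lemma but simply cites it from \cite{CohenLS19}, so there is nothing different to compare against. One minor inaccuracy in your invertibility remark: not every index in $S$ necessarily satisfies $|y_i|\ge\varepsilon_{mp}$, since the while-loop in line~\ref{line:extra_entries} may enlarge $k$ past the initial count; nonetheless every index in $S$ still has $|y_i|>0$ (the loop only extends while $|y_{\pi(1.5k)}|$ is a positive fraction of $|y_{\pi(k)}|>0$), so your conclusion that $\Delta_{S,S}$ is nonsingular remains valid.
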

\else
\begin{lemma}[{\cite{CohenLS19}, based on Sherman-Morison-Woodbury identity}]\label{lem:woodbury}
If $M = A^\top(A\tilde{U}A^\top)^{-1}A$ at the start of the update of \Cref{alg:projection_maintenance} and $M_S, M_{S,S},\Delta_{S,S}$ are chosen as described in \Cref{alg:projection_maintenance}, then we have
$$M - M_S \cdot ( \Delta^{-1}_{S,S} + M_{S,S} )^{-1} \cdot (M_S)^\top = A^\top (A \tilde{U}^\new A^\top)^{-1} A$$
\end{lemma}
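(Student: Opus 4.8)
The plan is to derive the identity directly from the Sherman--Morrison--Woodbury identity. First I would observe that, since $\Delta = \diag(\tilde u^{\new} - \tilde u)$ is a diagonal matrix whose nonzero entries all lie in the index set $S$, we have $A\Delta A^\top = A_S \Delta_{S,S} A_S^\top$, where $A_S \in \R^{d\times k}$ is the submatrix of $A$ consisting of the $k$ columns indexed by $S$. Hence $A\tilde U^{\new} A^\top = A\tilde U A^\top + A_S\Delta_{S,S}A_S^\top$ is a rank-at-most-$k$ additive perturbation of $A\tilde U A^\top$.

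Next I would apply the Woodbury identity with $X := A\tilde U A^\top$, $C := \Delta_{S,S}$, and the rectangular factors $A_S$, $A_S^\top$, which gives
$$(A\tilde U^{\new}A^\top)^{-1} = X^{-1} - X^{-1}A_S\left(\Delta_{S,S}^{-1} + A_S^\top X^{-1} A_S\right)^{-1}A_S^\top X^{-1}.$$
Multiplying on the left by $A^\top$ and on the right by $A$, and recalling that $M = A^\top X^{-1} A$ at the start of the update, the claim reduces to identifying the blocks: $A^\top X^{-1} A_S$ is exactly the $n\times k$ matrix $M_S$ (the columns of $M$ indexed by $S$), its transpose $A_S^\top X^{-1}A$ equals $(M_S)^\top$ by symmetry of $M$, and $A_S^\top X^{-1} A_S = M_{S,S}$. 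Substituting these yields
$$A^\top(A\tilde U^{\new}A^\top)^{-1}A = M - M_S\left(\Delta_{S,S}^{-1}+M_{S,S}\right)^{-1}(M_S)^\top,$$
which is the assertion.

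The only points needing care are the invertibility hypotheses that the Woodbury identity (and the statement itself) presuppose. Here $A\tilde U A^\top$ and $A\tilde U^{\new} A^\top$ are invertible because $A$ has full row rank $d$ and $\tilde U,\tilde U^{\new}$ are positive diagonal matrices; $\Delta_{S,S}$ is invertible because every index that \Cref{alg:projection_maintenance} places in $S$ corresponds to an entry that genuinely changed (indices with $y_i = 0$ are never selected), and in any case one may harmlessly drop unchanged indices from $S$ without altering either side; and invertibility of $\Delta_{S,S}^{-1}+M_{S,S}$ follows from the fact that, up to the invertible factor $\Delta_{S,S}$, it is the Schur complement appearing in the block decomposition, which is nonsingular since $A\tilde U^{\new}A^\top$ is. This is the statement already used in \cite{CohenLS19}; the computation above is the routine verification, and the main (minor) obstacle is simply bookkeeping the submatrix indexing so that $M_S$, $M_{S,S}$, $(M_S)^\top$ line up with the blocks produced by Woodbury.
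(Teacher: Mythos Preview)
Your proposal is correct and is precisely the standard derivation via the Sherman--Morrison--Woodbury identity. The paper itself does not give a proof of this lemma at all; it merely cites it from \cite{CohenLS19}, so there is no ``paper's own proof'' to compare against beyond the implicit reference. Your write-up supplies exactly the routine verification one would expect: rewrite $A\tilde U^{\new}A^\top$ as a rank-$k$ perturbation of $A\tilde U A^\top$ supported on the columns $S$, apply Woodbury, then conjugate by $A^\top(\cdot)A$ and identify the resulting blocks with $M_S$, $(M_S)^\top$, and $M_{S,S}$.

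One minor remark on the invertibility discussion: your caution about possibly-zero diagonal entries of $\Delta_{S,S}$ is in fact unnecessary here. In both branches of \Cref{alg:projection_maintenance}, every index placed in $S$ satisfies $|y_{\pi(i)}|>0$: in the $k<n^a$ branch all such indices have $|y_i|\ge\varepsilon_{mp}$ by construction, and in the $k\ge n^a$ branch the while-loop condition guarantees inductively that $|y_{\pi(k)}|\ge(1-1/\log n)^j\varepsilon_{mp}>0$ after $j$ iterations, and the sorting ensures $|y_{\pi(i)}|\ge|y_{\pi(k)}|$ for all $i\le k$. So $\Delta_{S,S}$ is always invertible without needing to drop indices. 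Your fallback observation (that unchanged indices could be harmlessly removed) is nonetheless valid and a reasonable hedge.
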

\fi

We can now prove that the internal variables store the correct values.

\begin{lemma}\label{lem:consistent_variables}
At the start of every update to \Cref{alg:projection_maintenance} we have
\begin{align}
w = M\sqrt{\tilde{U}} f(\tilde{v})
\hspace{25pt} \text{and} \hspace{25pt}
M = A^\top(A\tilde{U} A)^{-1}A.
\label{eq:consistent_variables}
\end{align}
\end{lemma}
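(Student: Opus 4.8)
The plan is to prove \Cref{lem:consistent_variables} by induction on the number of updates, showing that the invariant \eqref{eq:consistent_variables} is established by \textsc{Initialize} and preserved by every call to \textsc{Update}. The base case is immediate: after \textsc{Initialize}$(A,u,f,v,\varepsilon_{mp})$ we have $\tilde u = u$, $\tilde v = v$, $M = A^\top(AUA^\top)^{-1}A$ by the assignment on that line, and $w = M\sqrt U f(v)$ by \Cref{line:pre:w}; so both equalities of \eqref{eq:consistent_variables} hold with $\tilde U = \diag(\tilde u) = U$. For the inductive step, assume the invariant holds at the start of an \textsc{Update} call, so $M = A^\top(A\tilde U A^\top)^{-1}A$ and $w = M\sqrt{\tilde U} f(\tilde v)$ for the current member vectors $\tilde u,\tilde v$; I must check that after the call finishes, the (possibly updated) member vectors $\tilde u,\tilde v$ and matrix $M$ still satisfy the invariant.

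The argument then splits along the branches of \textsc{Update}. First, I would verify the matrix part $M = A^\top(A\tilde U A^\top)^{-1}A$. This only changes in the $k \ge n^a$ branch, where \Cref{line:offlinewoodburry} performs $M \leftarrow M - M_S(\Delta_{S,S}^{-1}+M_{S,S})^{-1}(M_S)^\top$; by \Cref{lem:woodbury}, and since $\tilde u$ is then reset to $\tilde u^\new$ on the subsequent line, the new $M$ equals $A^\top(A\tilde U^\new A^\top)^{-1}A = A^\top(A\tilde U A^\top)^{-1}A$ with the updated $\tilde u$. In the else-branch ($k < n^a$) neither $M$ nor $\tilde u$ is modified, so the matrix invariant is trivially preserved. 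Next I would verify the vector part $w = M\sqrt{\tilde U}f(\tilde v)$. In the $k \ge n^a$ branch, \Cref{line:query1} sets $w \leftarrow M\sqrt{\tilde U^\new}f(v^\new)$ after $M$ has been updated, and the next line sets $\tilde u \leftarrow \tilde u^\new$, $\tilde v \leftarrow v^\new$; so indeed $w = M\sqrt{\tilde U}f(\tilde v)$ with the new values. In the else-branch there are two sub-cases: if $|T| \ge n^a$ we reset $\tilde v \leftarrow v^\new$ and set $w \leftarrow M\sqrt{\tilde U}f(v^\new)$ (note $M$ and $\tilde u$ are unchanged here, so $\tilde U = \tilde U^\new$ and this is exactly $M\sqrt{\tilde U}f(\tilde v)$); if $|T| < n^a$ we update $\tilde v$ to $\tilde v^\new$ but leave $w$ and $M$ and $\tilde u$ untouched, which would break the invariant unless $f(\tilde v^\new) = f(\tilde v)$ on the updated coordinates — but here I need to look more carefully, because the member $\tilde v$ is set to $\tilde v^\new$ while $w$ is not updated.

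The subtle point, and the one I expect to be the main obstacle, is exactly this last sub-case of the else-branch: the code comment asserts $w = M\sqrt{\tilde U}f(\tilde v)$ holds at the \emph{end} of the procedure, yet when $|T| < n^a$ the member variable $\tilde v$ is advanced to $\tilde v^\new$ without recomputing $w$. I would resolve this by reading the pseudo-code assignments precisely: one must check whether the line $\tilde v^\new_i \leftarrow \ldots$ (\Cref{line:set_v}) actually writes back into the member $\tilde v$ or only into the local $\tilde v^\new$ that gets returned. If the member $\tilde v$ is \emph{not} updated in this sub-case (only $\tilde v^\new$ is, as a fresh local vector), then $w$, $M$, $\tilde u$, and the member $\tilde v$ are all unchanged, so the invariant trivially persists from the previous iteration; the returned $r$ is computed from $w$ plus correction terms precisely to account for the discrepancy between the member $\tilde v$ and the returned $\tilde v^\new$. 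So the crux of the proof is a careful bookkeeping of which of $\{\tilde u, \tilde v, M, w\}$ are genuinely mutated in each branch versus which are only reflected in the returned (local) quantities $\tilde u^\new, \tilde v^\new, r$, and then invoking \Cref{lem:woodbury} wherever $M$ changes. Once that bookkeeping is pinned down, each case is a one-line verification, and the lemma follows by induction.
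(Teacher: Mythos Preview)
Your proposal is correct and follows essentially the same inductive case-split as the paper's own proof: both verify the invariant after \textsc{Initialize} and then check each branch of \textsc{Update}, invoking \Cref{lem:woodbury} for the $k\ge n^a$ case and noting that in the $|T|<n^a$ sub-branch the member $\tilde v$ (as opposed to the local $\tilde v^\new$) is left untouched, so nothing needs to be re-verified there. Your extra care in disambiguating the member $\tilde v$ from the returned local $\tilde v^\new$ is exactly the right observation, and the paper's proof handles that case with the one-line remark that $M,\tilde u,\tilde v,w$ are all unchanged.
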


\begin{proof}
If it is the first update after the initialization, 
then the claim is true by definition of the procedure \textsc{Initialize}.
Next, we prove that at the end of every call to \textsc{Update} 
we satisfy \eqref{eq:consistent_variables}, 
if \eqref{eq:consistent_variables} was satisfied 
at the start of \textsc{Update}. 
This then implies \Cref{lem:consistent_variables}.
If $k \ge n^a$, then \cref{line:offlinewoodburry} makes sure that 
$M = A^\top(A\tilde{U}^\new A^\top)^{-1}A$ (see \Cref{lem:woodbury}).
The next lines set 
$w \leftarrow M \sqrt{\tilde{U}^\new} f(v^\new)$,
$\tilde{v} \leftarrow v^\new$ and $\tilde{u} \leftarrow \tilde{u}^\new$.
Thus \eqref{eq:consistent_variables} is satisfied for the case $k \ge n^a$.
If $|T| \ge n^a$, then we compute 
$w \leftarrow M \sqrt{\tilde{U}} f(v^\new)$ 
and set $\tilde{v} \leftarrow v^\new$. 
The matrices $M$ and $\tilde{U}$ are not modified, 
so \eqref{eq:consistent_variables} is satisfied.
If $|T| < n^a$, then we do not change $M,\tilde{u},\tilde{v}$ or $r$, 
so \eqref{eq:consistent_variables} is satisfied.

\end{proof}

We now prove the correctness of \Cref{alg:projection_maintenance} by proving \Cref{lem:correct_output}.

\begin{proof}[Proof of \Cref{lem:correct_output}]

Note that we always have 
$u^\new \approx_{\varepsilon_{mp}} \tilde{u}^\new$ 
by \cref{line:set_u}.

\paragraph{Case $k \ge n^a$:}
In \cref{line:offlinewoodburry} 
we have set $M = A^\top(A\tilde{U}^\new A^\top)^{-1}A$ 
(see \Cref{lem:woodbury,lem:consistent_variables}).
Hence by setting $r \leftarrow \sqrt{\tilde{U}^\new}w = \sqrt{\tilde{U}^\new}M\sqrt{\tilde{U}^\new} f(v^\new)$, 
and $\tilde{v}^\new \leftarrow v^\new$,
all claims of \Cref{lem:correct_output} are satisfied.

\paragraph{Case $|T| \ge n^a$:}
In this case we set $r$ to the following value:
\ifdefined\SODAversion
\begin{align*}
r
\leftarrow&~ \sqrt{\tilde U^\new}M\sqrt{\tilde U^\new} f(v^\new)\\
&~ - \sqrt{\tilde U^\new} M_S \cdot ( \Delta^{-1}_{S,S} + M_{S,S} )^{-1}\\
&~ \cdot (M_S)^\top \sqrt{\tilde U^\new} f(v^\new) \\
=&~ \sqrt{\tilde U^\new}(A^\top(A \tilde{U}^\new A^\top)^{-1}A) \sqrt{\tilde U^\new} f(v^\new)
\end{align*}
\else
\begin{align*}
r
&\leftarrow \sqrt{\tilde U^\new}M\sqrt{\tilde U^\new} f(v^\new) - \sqrt{\tilde U^\new} M_S \cdot ( \Delta^{-1}_{S,S} + M_{S,S} )^{-1} \cdot (M_S)^\top \sqrt{\tilde U^\new} f(v^\new) \\
& = \sqrt{\tilde U^\new}(A^\top(A \tilde{U}^\new A^\top)^{-1}A) \sqrt{\tilde U^\new} f(v^\new)
\end{align*}
\fi
Here the equality comes from \Cref{lem:woodbury,lem:consistent_variables}. 
Further, we set $\tilde{v}^\new \leftarrow v^\new$, 
so \Cref{lem:correct_output} is correct for the case $|T| \ge n^a$.

\paragraph{Case $|T| < n^a$:} 
Here $v^\new \approx_{\varepsilon_{mp}} \tilde{v}^\new$ by \cref{line:set_v},
so we are left with verifying $r$.
First note that 
$w = M\sqrt{\tilde U}f(\tilde{v})$ by \Cref{lem:consistent_variables}, 
so $w+M(\sqrt{\tilde{U}^\new}f(\tilde{v}^\new) - \sqrt{\tilde U} f(\tilde{v}))
= M\sqrt{\tilde{U}^\new}f(\tilde{v}^\new)$.
Thus $r$ is set to the following term:
\ifdefined\SODAversion
\begin{align*}
r \leftarrow&~
\sqrt{\tilde{U}^\new} \left( w + M(\sqrt{\tilde{U}^\new}  f(\tilde{v}^\new) - \sqrt{\tilde U}  f(\tilde{v}))\right. \\
&~ \left.- M_S \cdot ( \Delta^{-1}_{S,S} + M_{S,S} )^{-1} \cdot (M_S)^\top \sqrt{\tilde U^\new} f(\tilde{v}^\new) \right)\\
=&~
\sqrt{\tilde{U}^\new} \left( M \sqrt{\tilde U^\new} f(\tilde{v}^\new)\right. \\
&~ \left.- M_S \cdot ( \Delta^{-1}_{S,S} + M_{S,S} )^{-1} \cdot (M_S)^\top \sqrt{\tilde U^\new} f(\tilde{v}^\new) \right)\\
=&~\text{[continued on next page]}
\end{align*}
\begin{align*}
=&~
\sqrt{\tilde{U}^\new} \left( M  - M_S \cdot ( \Delta^{-1}_{S,S} + M_{S,S} )^{-1} \cdot (M_S)^\top \right) \\
&~ \cdot \sqrt{\tilde U^\new} f(\tilde{v}^\new)\\
=&~
\sqrt{\tilde{U}^\new} A^\top(A\tilde{U}^\new A^\top)^{-1}  \sqrt{\tilde U^\new} f(\tilde{v}^\new)
\end{align*}
\else
\begin{align*}
r &\leftarrow
\sqrt{\tilde{U}^\new} \left( w + M(\sqrt{\tilde{U}^\new}  f(\tilde{v}^\new) - \sqrt{\tilde U}  f(\tilde{v})) 
- M_S \cdot ( \Delta^{-1}_{S,S} + M_{S,S} )^{-1} \cdot (M_S)^\top \sqrt{\tilde U^\new} f(\tilde{v}^\new) \right)\\
&=
\sqrt{\tilde{U}^\new} \left( M \sqrt{\tilde U^\new} f(\tilde{v}^\new) - M_S \cdot ( \Delta^{-1}_{S,S} + M_{S,S} )^{-1} \cdot (M_S)^\top \sqrt{\tilde U^\new} f(\tilde{v}^\new) \right)\\
&=
\sqrt{\tilde{U}^\new} \left( M  - M_S \cdot ( \Delta^{-1}_{S,S} + M_{S,S} )^{-1} \cdot (M_S)^\top \right)  \sqrt{\tilde U^\new} f(\tilde{v}^\new)\\
&=
\sqrt{\tilde{U}^\new} A^\top(A\tilde{U}^\new A^\top)^{-1}  \sqrt{\tilde U^\new} f(\tilde{v}^\new)
\end{align*}
\fi
Where the last equality comes from \Cref{lem:woodbury,lem:consistent_variables}.

\end{proof}

\subsection{Complexity}
\label{sec:projection:complexity}

In this section we will bound the complexity of \Cref{alg:projection_maintenance},
proving the stated complexity bound in \Cref{lem:projection_maintenance}:

\begin{lemma}[Complexity bound of \Cref{lem:projection_maintenance}]\label{lem:projection_maintenance_complexity}
If the updates to \Cref{alg:projection_maintenance} 
satisfy the condition \eqref{eq:small_change\theimportcounter}
as stated in \Cref{lem:projection_maintenance},
then after $T$ updates the total update time of \Cref{alg:projection_maintenance} is
$$
O\left(T \cdot \left(C / \varepsilon_{mp} (n^{\omega-1/2} + n^{2-a/2+o(1)}) \log n + n^{1+a}\right)\right).
$$
The preprocessing requires $O(n^2d^{\omega-2})$ time.
\end{lemma}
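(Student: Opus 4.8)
The plan is to split the running time into three parts: (i) the one-time preprocessing, (ii) a fixed per-iteration cost paid on every call to \textsc{Update}, and (iii) the cost of the two ``expensive'' events, namely a rank-$k$ rebuild of $M$ (the branch $k\ge n^a$) and a reset of $\tilde v$ (the case $|T|\ge n^a$ of the else-branch). For (i) I would argue: forming $AUA^\top$ costs $O(nd^{\omega-1})$, inverting it costs $O(d^\omega)$, computing $(AUA^\top)^{-1}A$ costs $O(nd^{\omega-1})$, and the product $A^\top\cdot\big((AUA^\top)^{-1}A\big)$ — an $n\times d$ by $d\times n$ multiplication — costs $O(n^2 d^{\omega-2})$, which dominates; the final $w\leftarrow M\sqrt U f(v)$ costs $O(n^2)$. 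Hence preprocessing is $O(n^2 d^{\omega-2})$.

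For (ii): computing $y$ and the sorting permutation $\pi$ costs $O(n\log n)$, and the while loop on \cref{line:extra_entries} does $O(\log n)$ work since $k$ grows by a factor $1.5$ each pass (a pass may blow $k$ up all the way to $n$, but that only triggers a more expensive rebuild, accounted for in (iii)). If the call ends in the cheap branch ($k<n^a$ and $|T|<n^a$), then the vector $\sqrt{\tilde U^\new}f(\tilde v^\new)-\sqrt{\tilde U}f(\tilde v)$ has at most $k+|T|=O(n^a)$ non-zero entries, so applying $M$ to it costs $O(n^{1+a})$; solving the $k\times k$ system with matrix $\Delta^{-1}_{S,S}+M_{S,S}$ costs $O(k^\omega)=O(n^{a\omega})=O(n^{1+a})$ (using $a\le 2/3<1/(\omega-1)$); and the remaining multiplications by the $n\times k$ matrices $M_S$ cost $O(nk)=O(n^{1+a})$. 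The $O(n\log n)$ is absorbed into the $(C/\varepsilon_{mp})n^{\omega-1/2}\log n$ term, and $O(n^{1+a})$ is one of the claimed summands; over $T$ iterations this is the $T\cdot n^{1+a}$ contribution.

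For (iii): a rank-$k$ rebuild of $M$ on \cref{line:offlinewoodburry} costs $O(k^\omega+nk^{\omega-1})$ plus the cost of an $n\times k$ by $k\times n$ product, which is $n^{2+o(1)}$ when $k\le n^{\alpha}$ (by definition of the dual exponent) and at most $O(n^\omega)$ when $k=n$; recomputing $w$ on \cref{line:query1} adds only $O(n^2)$, which is dominated. Since the rule deciding \emph{when} and with which $k$ this branch fires, and the hypothesis on $u$ it uses, are identical to \cite{CohenLS19}, I would invoke their \cite[Section~5]{CohenLS19} amortized analysis verbatim to get $O\big((C/\varepsilon_{mp})(n^{\omega-1/2}+n^{2-a/2+o(1)})\log n\big)$ amortized per iteration for all rebuilds. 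For a reset of $\tilde v$: recomputing $w\leftarrow M\sqrt{\tilde U}f(v^\new)$ and the term $\sqrt{\tilde U^\new}M\sqrt{\tilde U^\new}f(v^\new)$ of $r$ each cost $O(n^2)$, and the low-rank correction costs $O(n^{1+a})$, so $O(n^2)$ per reset. To bound how often this fires I would use a short drift argument: right after a reset $\tilde v=v$, so $\tau$ iterations later $|T|\ge n^a$ forces $\|\ln v^{(t_0+\tau)}-\ln v^{(t_0)}\|_2\ge \tfrac12\varepsilon_{mp}n^{a/2}$ (each dirty coordinate contributes at least $\ln(1+\varepsilon_{mp})\ge\varepsilon_{mp}/2$); since $\|\ln v^{(s+1)}-\ln v^{(s)}\|_2\le 2\|(v^{(s+1)}-v^{(s)})/v^{(s)}\|_2\le 2C$ (using $|\ln(1+x)|\le 2|x|$ for $|x|\le\tfrac14$ and the hypothesis on $v$), the triangle inequality gives $\tau\ge \varepsilon_{mp}n^{a/2}/(4C)$. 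Hence there are $O\big(CT/(\varepsilon_{mp}n^{a/2})\big)$ resets over $T$ iterations, contributing $O\big((C/\varepsilon_{mp})\,T\,n^{2-a/2}\big)$ total, absorbed into the $n^{2-a/2+o(1)}$ term. Summing (i)–(iii) gives the claim.

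The step I expect to be the real obstacle is the amortized analysis of the rank-$k$ rebuilds of $M$, but the point is that this is \emph{not} new work: it is precisely the potential-function argument of \cite[Section~5]{CohenLS19}, which carries over unchanged because our modifications (maintaining $w$, returning $r$, and resetting $\tilde v$) leave the rebuild schedule and its hypothesis intact and only add lower-order bookkeeping. The genuinely new content is therefore the routine accounting above that this bookkeeping never exceeds the asymptotic cost, together with the elementary drift bound on the number of $\tilde v$-resets.
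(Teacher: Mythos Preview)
Your proposal is correct and matches the paper's approach: cite \cite{CohenLS19} (packaged there as \Cref{lem:cohens_bound}) for the amortized cost of the rank-$k$ rebuilds, then bound the else-branch by $O(n^{1+a})$ per cheap call plus $O(n^2)$ per $\tilde v$-reset, with resets spaced $\Omega(\varepsilon_{mp}\,n^{a/2}/C)$ iterations apart via the same log-drift argument the paper records as \Cref{lem:error_after_k_rounds}. One small fix: your justification $a\le 2/3<1/(\omega-1)$ for $n^{a\omega}\le n^{1+a}$ only holds when $\omega<2.5$; the paper instead uses $a\le\alpha$ together with $\omega\le 3-\alpha$ to obtain $a\omega\le a(3-\alpha)\le a(3-a)\le 1+a$, which is valid for all admissible $\omega,\alpha$.
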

As our data-structure is a modification of the data-structure presented in \cite{CohenLS19},
we must only analyze the complexity of the modified part. 
To bound the complexity of the unmodified sections of our algorithm, 
\ifdefined\SODAversion
we refer to \cite{CohenLS19} and the full version of this paper \cite{Brand19}.
The complexity analysis in \cite{CohenLS19} requires an entire section (about 7 pages) via analysis of some complicated potential function.
In our full version, we present an alternative simpler proof.
\else
we will here refer to \cite{CohenLS19}.
The complexity analysis in \cite{CohenLS19} requires an entire section (about 7 pages) via analysis of some complicated potential function.
In the Appendix (\Cref{lem:appendix:cohens_bound}) we present an alternative simpler proof.
\fi

\ifdefined\SODAversion
\begin{lemma}[\cite{CohenLS19,Brand19}]\label{lem:cohens_bound}
\else
\begin{lemma}[{\cite{CohenLS19}, alternatively \Cref{lem:appendix:cohens_bound}}]\label{lem:cohens_bound}
\fi
The preprocessing requires $O(n^2d^{\omega-2})$ time.
After $T$ updates the total time of all updates of \Cref{alg:projection_maintenance}, 
when ignoring the branch for $k < n^a$ (so we assume that branch of \cref{line:else_branch} has cost $0$), 
is
$$
O(T \cdot C / \varepsilon_{mp} (n^{\omega-1/2} + n^{2-a/2+o(1)}) \log n).
$$
\end{lemma}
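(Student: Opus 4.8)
The statement splits into the preprocessing cost, the cost of a single \textsc{Update} in terms of the batch size it processes, and an amortized bound on the total; and since we are told to assign cost $0$ to the else-branch of \cref{line:else_branch}, only the part of \Cref{alg:projection_maintenance} that is unchanged from \cite{CohenLS19} is relevant. For preprocessing, forming $AUA^\top$ (a product of the $d\times n$ matrix $A\sqrt U$ with its transpose, done by $\lceil n/d\rceil$ products of $d\times d$ blocks) costs $O(nd^{\omega-1})$, inverting the $d\times d$ result costs $O(d^\omega)$, and $M=A^\top(AUA^\top)^{-1}A$ (a product of an $n\times d$ by a $d\times n$ matrix, done by $\lceil n/d\rceil^2$ blocks of size $d$) costs $O(n^2d^{\omega-2})$, which dominates; the blue \cref{line:pre:w} adds only an $O(n^2)$ matrix--vector product. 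For a single update with $k\ge n^a$: sorting $y$ and running the blow-up loop on \cref{line:extra_entries} costs $O(n\log n)$, and the rank-$k$ Woodbury step on \cref{line:offlinewoodburry} costs $O(k^\omega+nk^{\omega-1}+\mathrm{MM}(n,k,n))$, where $\mathrm{MM}(n,k,n)$ --- the time to multiply an $n\times k$ by a $k\times n$ matrix --- dominates; crudely $\mathrm{MM}(n,k,n)=O(n^2k^{\omega-2})$ by blocking, and since $a\le\alpha$, for $k\le n^\alpha$ it improves to $n^{2+o(1)}$ by fast rectangular multiplication. So the total over $T$ updates is $O(Tn\log n)+\sum_t\mathrm{MM}(n,k_t,n)$, the sum over the steps triggering a batch of size $k_t\ge n^a$.

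The heart of the proof is to bound $\sum_t\mathrm{MM}(n,k_t,n)$ by $\tilde O\big(T\cdot C/\varepsilon_{mp}\,(n^{\omega-1/2}+n^{2-a/2+o(1)})\big)$ using the movement hypothesis \eqref{eq:small_change\theimportcounter} of \Cref{lem:projection_maintenance}. The plan is a potential argument on the drift profile: set $g_i:=|\ln(u_i/\tilde u_i)|$, so one step changes $g_i$ by at most $|u_i^\new/u_i-1|$ up to a $(1+o(1))$ factor (using $|u_i^\new/u_i-1|\le1/4$), hence $\sum_i(g_i^\new-g_i)^2\le(1+o(1))C^2$ per step. Take a potential $\Psi$ equal to a sum over $i$ of an increasing convex function of $g_i/\varepsilon_{mp}$, tuned so that (i) one step of $u$-movement raises $\Psi$ by at most $\tilde O(\sqrt n\,C/\varepsilon_{mp})$ --- by Cauchy--Schwarz from the $\ell_2$ bound above --- and (ii) resetting a batch of $k\ge n^a$ coordinates lowers $\Psi$ by an amount commensurate with the update cost $\mathrm{MM}(n,k,n)$.

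The blow-up loop on \cref{line:extra_entries} is exactly what makes (ii) work: each of the $O(\log n)$ blow-ups shrinks the threshold value by at most a factor $1-1/\log n$, so every reset coordinate still has $g_i=\Omega(\varepsilon_{mp})$, and the loop leaves a definite gap $g_{(1.5k)}<(1-1/\log n)\,g_{(k)}$ just below the reset block, forcing the profile to genuinely rebuild before another batch of comparable size can fire. Combining this gap with the $\ell_2$ budget across a dyadic decomposition of the batch sizes $k\in[2^j,2^{j+1})$, and balancing the trade-off --- a batch of size $\approx n^a$ costs $n^{2+o(1)}$ but recurs only about every $n^{a/2}$ steps, giving the $n^{2-a/2+o(1)}$ term, while a batch of size up to $\sqrt n$ costs $n^{1+\omega/2}$ but is correspondingly rarer, giving the $n^{\omega-1/2}$ term --- yields the claimed amortized bound, which together with the $O(Tn\log n)$ from sorting (absorbed into the stated bound) and the $O(n^2d^{\omega-2})$ preprocessing proves the lemma.

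I expect the genuine obstacle to be this last multi-scale accounting with the exponents made exact: it is the technically heaviest part of \cite{CohenLS19} (about seven pages of potential-function analysis). One clean way to finish is just to cite \cite{CohenLS19}, since the portion of \Cref{alg:projection_maintenance} used here is literally their data-structure and their analysis applies verbatim; the alternative, which \Cref{lem:appendix:cohens_bound} carries out, is to use a single potential that simultaneously charges the $\ell_2$ drift and encodes enough of the sorted profile, which removes the case analysis of \cite{CohenLS19} and shortens the argument by roughly half.
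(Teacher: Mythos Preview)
Your proposal is essentially correct and arrives at the same endpoint as the paper: the paper's actual proof of this lemma is two sentences --- observe that with the else-branch ignored the algorithm is literally \cite{CohenLS19}'s Algorithm~3 plus an extra $O(n^2)$ matrix--vector product on \cref{line:query1} and \cref{line:pre:w}, which is subsumed by \cref{line:offlinewoodburry} --- and then cite \cite{CohenLS19}. You identify this option in your final paragraph, so the core argument matches.

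Two points where your sketch diverges from the paper and should be corrected. First, a quantitative slip: the $n^{\omega-1/2}$ term does not come from batches of size $\sqrt n$ but from the full reset $k=n$. A rank-$n$ update costs $n^\omega$ and, by the $\ell_2$ drift bound, cannot recur for $\Omega(\varepsilon_{mp}\sqrt n/C)$ steps, giving amortized $C/\varepsilon_{mp}\cdot n^{\omega-1/2}$; by convexity of $\omega(1,1,\cdot)$ the maximum over dyadic scales is at one of the endpoints $k=n^a$ or $k=n$, which is exactly how the two terms in the bound arise. Your batch of size $\sqrt n$ would amortize to $C/\varepsilon_{mp}\cdot n^{3/4+\omega/2}$, which is strictly larger than $n^{\omega-1/2}$.

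Second, you mischaracterize the appendix alternative (\Cref{lem:appendix:cohens_bound}): it does \emph{not} use a potential at all. Instead it slightly changes the batch-size rule (pick $k=2^\ell$ for the smallest $\ell$ with $|y_{\pi(2^\ell)}|<(1-0.5\ell/\log n)\varepsilon_{mp}$), then argues directly that between two rank-$2^\ell$ updates at least $2^{\ell-1}$ coordinates must have drifted by $\Omega(\varepsilon_{mp}/\log n)$, which by \Cref{lem:error_after_k_rounds} takes $\Omega(\varepsilon_{mp}\,2^{\ell/2}/(C\log n))$ steps; summing $n^{\omega(1,1,\ell/\log n)}$ against these gaps gives the bound. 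This is a pure counting argument and is shorter and more elementary than either the seven-page potential analysis of \cite{CohenLS19} or the potential $\Psi$ you sketch.
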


\begin{proof}
When ignoring the branch of \cref{line:else_branch}, 
then our algorithm performs the same operations as \cite{CohenLS19}[Algorithm 3]
and we both maintain $M$ in the exact same way.
The only difference is that we also compute the vector $r$ in \cref{line:query1}, 
but this requires only $O(n^2)$ time and is subsumed 
by the complexity of \cref{line:offlinewoodburry}.
Thus our time complexity (when ignoring the branch of \cref{line:else_branch})
can be bounded by the update complexity of \cite{CohenLS19}[Algorithm 3], 
which is the complexity stated in \Cref{lem:cohens_bound}.
In the same fashion we can bound the complexity of the preprocessing.
The preprocessing of \cite{CohenLS19}[Algorithm 3] takes $O(n^2d^{\omega-2})$ time, 
where their algorithm computes only the matrix $M$. 
The only difference in our algorithm is 
that we also compute the vector $w$ in \cref{line:pre:w}. 
The required $O(n^2)$ time to compute $w$ is subsumed by computing $M$.
\end{proof}

\begin{proof}[Proof of \Cref{lem:projection_maintenance_complexity}]
In order to prove \Cref{lem:projection_maintenance_complexity} 
we only need to bound the complexity of the branch for the case $k < n^a$. 
The time required by all other steps of \Cref{alg:projection_maintenance} 
is already bounded by \Cref{lem:cohens_bound}.

In every update we must compute $(\Delta_{S,S}^{-1}+M_{S,S})^{-1}$, 
which takes $O(n^{a\cdot\omega})$ time via the assumption $k < n^a$.
Additionally, if $|T| < n^a$, then one update requires 
additional $O(n^{1+a})$ operations to compute $r$ and $w$, 
because $(f(\tilde{v}^\new) - f(\tilde{v}))$ and $(\sqrt{\tilde{U}} - \sqrt{\tilde{U}^\new})$ 
both have at most $n^a$ non-zero entries 
and $M_S$ is a $n \times n^a$ matrix.

If $T \ge n^a$, then computing $r$ and $w$ can take up to $O(n^2)$ operations. 
This can happen at most every $O(n^{a/2}\varepsilon_{mp}/C)$ updates by \Cref{lem:error_after_k_rounds},
because $\sum_{i=1}^n ((v_i^\new - v_i)/ v_i)^2 \le C^2$, 
Hence the amortized time per update is $O(n^{2-a/2}C/\varepsilon_{mp})$.

Note that by assuming $a \le \alpha \le 1$ the term $O(n^{a\cdot\omega})$ is subsumed by $O(n^{1+a})$,
because $\omega \le 3 - \alpha$, so $a \cdot \omega \le a(3-\alpha) \le a(3-a) \le 1 + a$.
\end{proof}

This concludes the proof of \Cref{lem:projection_maintenance}.

\section{Central Path Method}
\label{sec:central_path}

In this section we prove the main result \Cref{thm:LP},
by showing how to use the projection maintenance algorithm of \Cref{sec:projection_maintenance} 
to obtain a fast deterministic algorithm for solving linear programs.

The algorithm for \Cref{thm:LP} is based on 
the short step central path method, outlined in \Cref{sec:outline:central_path}:
We construct some feasible solution triple $(x,y,s)$ with $xs =: \mu \approx 1$
and then repeatedly decrease $t$ while maintaining $x,s$ such that $\mu$ stays close to $t$.
Once $t$ is small enough, we have a good approximate solution.
This is a high-level summary of \Cref{alg:main}, 
which first constructs a solution, and then runs a \textsc{while}-loop until $t$ is small enough.
The actual hard part, maintaining the solution pair $x,s$ 
with $\mu \approx t$, is done in \Cref{alg:approximate_step}.
For this task, \Cref{alg:approximate_step} solves 
a linear system (similar to \eqref{eq:interior_point_system} in \Cref{sec:outline:central_path})
via the data-structure of \Cref{lem:projection_maintenance}.
The majority of this section is dedicated to proving 
that \Cref{alg:approximate_step} does not require too much time 
and does indeed maintain the solution pairs $(x,s)$
with $\mu \approx t$.
For this we must verify the following three properties:
\begin{itemize}
\item \Cref{alg:approximate_step} does solve an approximate variant of the linear system \eqref{eq:interior_point_system}.
\item We do not change the linear system too much between two calls to \Cref{alg:approximate_step}.
Otherwise the data-structure of \Cref{lem:projection_maintenance} would become too slow.
\item The approximate result obtained in \Cref{alg:approximate_step}
is good enough to maintain $x,s$ such that $\mu$ is close to $t$.
\end{itemize}

The proof for this is based on 
the \emph{stochastic central path method} 
by Cohen et al.~\cite{CohenLS19}.
In \cite{CohenLS19}, they randomly sampled a certain vector, 
while in our algorithm this vector will be approximated deterministically 
via the data-structure of \Cref{alg:projection_maintenance}.
This derandomization has the nice side-effect, 
that we can skip many steps of Cohen et al.'s proof.
For example they had to bound the variance of random vectors, 
which is no longer necessary for our algorithm.

The outline of this section is as follows. 
We first explain in more detail 
how \Cref{alg:approximate_step} works in \Cref{sec:central_path:algorithm}, 
where we also verify the first requirement,
that \Cref{alg:approximate_step} does indeed solve the system \eqref{eq:interior_point_system} approximately.
In the next \Cref{sub:bound_change}, we check that the input parameters 
for the data-structure of \Cref{lem:projection_maintenance}, 
used by \Cref{alg:approximate_step}, do not change too much per iterations.
The last \Cref{sub:mu_close_to_t} verifies, 
that we indeed always have $\mu \approx t$. 
We also consolidate all results in the last subsection by proving the main result \Cref{thm:LP}.

\begin{algorithm*}[t]\caption{Iterative loop of the central path method}\label{alg:main}
\begin{algorithmic}[1]
\Procedure{\textsc{Main}}{$A,b,c,\delta$} \Comment{\Cref{thm:LP}}
	\State $\varepsilon \leftarrow 1/(1500 \ln n)$ \Comment{Step size. Controls how much we decrease $t$ in each iteration.}
	\State $\varepsilon_{mp} \leftarrow 1/(1500 \ln n)$ \Comment{Accuracy parameter for \Cref{alg:approximate_step,alg:projection_maintenance}}
	\State $\lambda \leftarrow 40 \ln n$ \Comment{Parameter for the potential function in \Cref{alg:approximate_step}.}
	\State $t \leftarrow 1$ \Comment{Measures the progress so far.}	
	\State Modify the linear program according to $\Cref{lem:feasible_LP}$ for $\gamma = \min\{ \delta, 1/\lambda \}$ 
	and obtain an initial $x$ and $s$.
	\State $\textsc{InitializeApproximateStep}(A,x,s,t,\lambda,\varepsilon_{mp})$ \Comment{Initialize \Cref{alg:approximate_step}}
	\While{$t > \delta^2/(2n)$} \Comment{We stop once the precision is good}
		\LineComment{Decrease $t$ to $t^\new$ and find new $x^\new,s^\new$ such that $x^\new s^\new =:\mu^\new \approx_{0.1} t^\new$}
		\State $(x^{\new} , s^{\new}, t^\new) \leftarrow \textsc{ApproximateStep}(x,s,t,\varepsilon)$
		\State $(x,s) \leftarrow (x^{\new},s^{\new})$, $t \leftarrow t^{\new}$ 
	\EndWhile
    \State Use $\Cref{lem:feasible_LP}$ to transform $x$ to an approximate solution of the original linear program.
\EndProcedure
\end{algorithmic}
\end{algorithm*}

\subsection{Using the Projection Maintenance Data-Structure in the Central Path Method}
\label{sec:central_path:algorithm}

In this section we outline how \Cref{alg:approximate_step} works
and we prove that it does indeed solve 
the linear system \eqref{eq:interior_point_system} 
(outlined in \Cref{sec:outline:central_path}) 
in some approximate way.
The high-level idea of \Cref{alg:approximate_step} is as follows:
In order to maintain $\mu$ close to $t$, 
we want to measure the distance via some potential function $\Phi(\mu/t-1)$.
As we want to minimize the distance, 
it makes sense to change $\mu$ by some $\delta_\mu$, 
which points in the same direction as $-\nabla \Phi(\mu/t-1)$.
We can find out how to change $x$ and $s$, 
in order to change $\mu$ by approximately $\delta_\mu$, 
by solving the linear system \eqref{eq:interior_point_system} 
via the data-structure of \Cref{lem:projection_maintenance}.

In reality, we choose $\delta_\mu$ to be slightly different:
$$\delta_\mu := 
( \frac{t^{\new}}{t} - 1 ) \mu 
- \frac{\varepsilon}{2} \cdot t^{\new} \cdot 
\frac{ \nabla \Phi( \mu / t - 1 ) }
{ \| \nabla \Phi(  \mu / t  - 1 ) \|_2 },$$
where $t^\new := (1-\frac{\varepsilon}{3\sqrt{n}})$ is the new smaller value that we want to set $t$ to, 
and $\varepsilon$ is a parameter for how large our step size should be for decreasing $t$.

This choice of $\delta_\mu$ is motivated by the fact, 
that the first term $( \frac{t^{\new}}{t} - 1 ) \mu $ 
leads to some helpful cancellations in later proofs. 
The second term is the one pointing in the direction of $-\nabla \Phi(\mu/t-1)$, 
which is motivated by decreasing $\Phi(\mu/t-1)$.

\begin{algorithm*}[t]\caption{\textsc{ApproximateStep} For the given solution pair $x,s$ move $xs \approx t$ closer to $t^\new$}\label{alg:approximate_step}
\begin{algorithmic}[1]
\Variables{}
\State $\text{mp}_{\sqrt{\mu}}$, $\text{mp}_{\nabla \Phi}$
\EndVariables
\\
\Procedure{\textsc{InitializeApproximateStep}}{$A,x,s,t,\lambda,\varepsilon_{mp}$}
	\State $u \leftarrow \frac{x}{s}$, \hspace{10pt} $\mu \leftarrow xs$
	\LineComment{Maintains approximation of 
		$\sqrt{U}A^\top(A U A^\top)^{-1} A \sqrt{U} \sqrt{\mu}$ 
		via \Cref{alg:projection_maintenance}.}
	\State $\text{mp}_{\sqrt{\mu}}$.\textsc{Initialize}$(A,u,x \mapsto \sqrt{x}, \mu, \varepsilon_{mp},)$
	\LineComment{Maintains approximation of 
		$\sqrt{U}A^\top(A U A^\top)^{-1} A \sqrt{U} \frac{\nabla \Phi_\lambda(\mu/t-1)}{\sqrt{\mu/t}}$ 
		via \Cref{alg:projection_maintenance}.}
	\State $\text{mp}_{\nabla\Phi}$.\textsc{Initialize}$(A,u,x \mapsto \lambda \sinh(\lambda (x - 1))/\sqrt{x}, \mu/t, \varepsilon_{mp})$
\EndProcedure
\\
\Procedure{\textsc{ApproximateStep}}{$x,s,t,\varepsilon$}\\
	\Comment{One step of the modified short step central path method}
	\State $t^\new \leftarrow (1-\frac{\varepsilon}{3\sqrt{n}}) t$, \hspace{10pt} $\mu \leftarrow xs$
	\State $(\tilde{u}, \cdot, v, p_v) \leftarrow 
	\text{mp}_{\sqrt{\mu}}.\textsc{Update}(u,\mu)$, \hspace{10pt}
	$(\tilde{u}, m, w, p_w) \leftarrow 
	\text{mp}_{\nabla\Phi}.\textsc{Update}(u,\mu/t)$
	\label{line:projection}	
		\LineComment{Note that both instances always receive the same $u$, so they also return the same $\tilde{u}$.}
	\State $\tilde{\mu} \leftarrow mt$
	\State $\tilde{x} \leftarrow x\sqrt{\frac{\tilde{\mu}}{\mu}\frac{\tilde{u}}{u}}$,
	\hspace{10pt} $\tilde{s} \leftarrow s\sqrt{\frac{\tilde{\mu}}{\mu}\frac{u}{\tilde{u}}}$
	\Comment{Thus $\tilde{x}\tilde{s} = \tilde{\mu}$ and $\tilde{x}/\tilde{s} = \tilde{u}$}
	\State $\tdelta_t \leftarrow ( \frac{t^{\new}}{t} - 1 ) v \sqrt{\tilde{\mu}}$, \hspace{10pt}
	$\tdelta_\Phi \leftarrow - \frac{\varepsilon}{2} \cdot t^{\new} \cdot \frac{\sqrt{\tilde{\mu}/t} \: w}{ \| \nabla \Phi_\lambda(\tilde{\mu}/t -1) \|_2 }$, \hspace{10pt}
	$\tdelta_\mu \leftarrow \tdelta_t + \tdelta_\Phi$
	\State $p \leftarrow ( \frac{t^{\new}}{t} - 1 )p_v - \frac{\varepsilon}{2} \cdot t^{\new} \cdot \frac{p_w}{\sqrt{t} \| \nabla \Phi_\lambda(\tilde{\mu}/t -1) \|_2 }$

	\State $\tilde{\delta}_s \leftarrow \frac{ \tilde{s} }{ \sqrt{\tilde{\mu} } } p$, \hspace{10pt} 
	$\tilde{\delta}_x \leftarrow \frac{1}{\tilde{s}} \tilde{\delta}_{\mu} - \frac{\tilde{x}}{\sqrt{\tilde{\mu}}} p$
	
	\State \Return $( x + \tilde{\delta}_x  ,  s + \tilde{\delta}_s, t^\new)$
\EndProcedure
\end{algorithmic}
\end{algorithm*}

\paragraph{Maintaining approximate solutions}

One can split $\delta_\mu$ into the two terms 
$\delta_t = ( \frac{t^{\new}}{t} - 1 ) \mu $ and
$\delta_\Phi = \frac{\varepsilon}{2} \cdot t^{\new} \cdot 
\frac{ \nabla \Phi( \mu / t - 1 ) }
{ \| \nabla \Phi(  \mu / t  - 1 ) \|_2 }$.

\Cref{alg:approximate_step} approximates both vectors in a different way. 
Specifically, given $x,s,\mu,\delta_t,\delta_\Phi,\delta_\mu$, 
\Cref{alg:approximate_step} internally maintains approximations 
$\tilde{x},\tilde{s},\tilde{\mu},\tdelta_t,\tdelta_\Phi,\tdelta_\mu$ with the following properties 
(here $\varepsilon_{mp} > 0$ is the accuracy parameter for \Cref{lem:projection_maintenance})
\begin{align}
x &\approx_{\varepsilon_{mp}} \tilde{x}, &&
s \approx_{2\varepsilon_{mp}} \tilde{s} \notag \\
xs = \mu &\approx_{\varepsilon_{mp}} \tilde{\mu} = \tilde{x}\tilde{s}, &&
\tdelta_\Phi = \frac{\varepsilon}{2} \cdot t^{\new} \cdot 
\frac{ \nabla \Phi( \tilde\mu / t - 1 ) }
{ \| \nabla \Phi( \tilde\mu / t  - 1 ) \|_2 }, \label{eq:approximations} \\
\delta_t &\approx_{\varepsilon_{mp}} \tdelta_t, &&
\tdelta_\mu = \tdelta_t + \tdelta_\Phi  \notag
\end{align}
and for these approximate values, we solve the following system 
(which is the same as \eqref{eq:interior_point_system}, but using the approximate values):
\begin{align} 
	\tilde{X} \tilde{\delta}_s + \tilde{S} \tilde{\delta}_x = & ~ \tilde{\delta}_{\mu},
	\label{eq:interior_point_system_approximate}\\
	A \tilde{\delta}_x = & ~ 0, \notag \\
	A^\top \tilde{\delta}_y + \tilde{\delta}_s = & ~  0. \notag
\end{align}

We prove in two steps that \Cref{alg:approximate_step} 
does indeed solve \eqref{eq:interior_point_system_approximate}
for approximate values as in \eqref{eq:approximations}:
First we prove in \Cref{lem:error_delta} that the approximations 
are as stated in \eqref{eq:approximations}, 
then we show in \Cref{lem:approximate_solution} 
that we indeed solve the linear system \eqref{eq:interior_point_system_approximate}.

Note that $\tdelta_\Phi$ is not an approximation 
of $\delta_\Phi$ in the classical sense 
(likewise $\tdelta_\mu$ and $\delta_\mu$) 
and the vectors could point in completely different directions. 
They are only ``approximate" in the sense 
that their definition is the same, 
but for $\tdelta_\Phi$ we replace $\mu$ 
by the approximate $\tilde{\mu}$.

As outlined in the overview \Cref{sec:outline:modified_central_path},
this results in our algorithm not always decreasing 
the difference between $\mu$ and $t$.
We prove in \Cref{sub:mu_close_to_t}
that this is not a problem, 
if we use the following potential function $\Phi$,
accuracy parameter $\varepsilon_{mp}$ (for \Cref{lem:projection_maintenance})
and step size $\varepsilon$.
\begin{Definition}
$$\Phi_\lambda(x) := \sum_{i=1}^n \cosh(\lambda x_i),$$
where $\cosh(x) := (e^x + e^{-x})/2$, $\lambda = 40 \ln n$.\\
For the step size $\varepsilon$ and the accuracy parameter $\varepsilon_{mp}$ for \Cref{lem:projection_maintenance}, 
assume $0 < \varepsilon_{mp} \le \varepsilon \le 1/(1500 \ln n)$.
\end{Definition}

\begin{lemma}\label{lem:error_delta}
The computed vectors $\tilde{x},\tilde{s},\tilde{\mu},\tdelta_t,\tdelta_\Phi,\tdelta_\mu$ 
in \Cref{alg:approximate_step} satisfy the following properties:
Let $\tilde{\mu}/t$ be the approximation of $\mu/t$ maintained internally by $\text{mp}_\Phi$, 
then $\mu \approx_{\varepsilon_{mp}} \tilde{\mu}$ and 
$\tdelta_\Phi = -\frac{\varepsilon}{2} \cdot t^\new \cdot 
\frac{\nabla \Phi_\lambda(\tilde{\mu}/t-1)}
{\|\nabla \Phi_\lambda(\tilde{\mu}/t-1)\|}$.
Further $\delta_t \approx_{\varepsilon_{mp}} \tdelta_t$,
$x \approx_{\varepsilon_{mp}} \tilde{x}$, 
$s \approx_{2\varepsilon_{mp}} \tilde{s}$.
\end{lemma}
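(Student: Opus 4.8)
The plan is to verify the five assertions by unwinding the assignments in \Cref{alg:approximate_step} and plugging in the input--output guarantees of the two projection-maintenance instances from \Cref{lem:projection_maintenance}. Recall that \textsc{ApproximateStep} invokes $\text{mp}_{\sqrt{\mu}}.\textsc{Update}(u,\mu)$ and $\text{mp}_{\nabla\Phi}.\textsc{Update}(u,\mu/t)$, where $\text{mp}_{\sqrt{\mu}}$ was initialized with the scalar function $f(z)=\sqrt z$ and $\text{mp}_{\nabla\Phi}$ with $f(z)=\lambda\sinh(\lambda(z-1))/\sqrt z$. By \Cref{lem:projection_maintenance} the first call returns, in its four slots, $(\tilde u,\hat\mu,v,p_v)$ with $u\approx_{\varepsilon_{mp}}\tilde u$, $\mu\approx_{\varepsilon_{mp}}\hat\mu$ and $v=\sqrt{\hat\mu}$, while the second returns $(\tilde u,m,w,p_w)$ with $\mu/t\approx_{\varepsilon_{mp}}m$ and $w_i=\lambda\sinh(\lambda(m_i-1))/\sqrt{m_i}$. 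The two instances are fed the same vector $u$, so they return the same $\tilde u$; on the other hand the internal approximations $\hat\mu$ of $\mu$ and $m$ of $\mu/t$ need not be proportional, since the instances track their ``$v$''-arguments independently, which is why $\tdelta_t$ below will involve a geometric mean.

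For the first two claims, since $t>0$ we scale $\mu/t\approx_{\varepsilon_{mp}}m$ by $t$ to get $\mu\approx_{\varepsilon_{mp}}mt=\tilde\mu$, the first assertion. For the second, substitute $m=\tilde\mu/t$ and $w=f(m)$ into the entrywise product appearing in the definition of $\tdelta_\Phi$:
$$\bigl(\sqrt{\tilde\mu/t}\;w\bigr)_i=\sqrt{m_i}\cdot\frac{\lambda\sinh(\lambda(m_i-1))}{\sqrt{m_i}}=\lambda\sinh\bigl(\lambda(\tilde\mu_i/t-1)\bigr)=\bigl(\nabla\Phi_\lambda(\tilde\mu/t-1)\bigr)_i,$$
where the last equality uses $\partial_{x_i}\Phi_\lambda(x)=\lambda\sinh(\lambda x_i)$. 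Inserting this identity into the algorithm's assignment $\tdelta_\Phi\leftarrow-\tfrac{\varepsilon}{2}t^\new\,\sqrt{\tilde\mu/t}\,w\,/\,\|\nabla\Phi_\lambda(\tilde\mu/t-1)\|_2$ yields exactly the claimed formula.

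For the remaining three the work is a short chain of coordinatewise estimates on positive vectors, using $\mu\approx_{\varepsilon_{mp}}\tilde\mu$, $\mu\approx_{\varepsilon_{mp}}\hat\mu$ and $u\approx_{\varepsilon_{mp}}\tilde u$. Since each of $\tilde\mu_i,\hat\mu_i$ lies in $[\mu_i/(1+\varepsilon_{mp}),\mu_i/(1-\varepsilon_{mp})]$, so does the geometric mean $\sqrt{\hat\mu_i\tilde\mu_i}$, i.e. $\mu\approx_{\varepsilon_{mp}}\sqrt{\hat\mu\tilde\mu}$; as $v=\sqrt{\hat\mu}$, the algorithm's $\tdelta_t=(t^\new/t-1)\,v\sqrt{\tilde\mu}=(t^\new/t-1)\sqrt{\hat\mu\tilde\mu}$ differs from $\delta_t=(t^\new/t-1)\mu$ only by the common scalar $(t^\new/t-1)$, so $\delta_t\approx_{\varepsilon_{mp}}\tdelta_t$. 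Inverting $\tilde x=x\sqrt{(\tilde\mu/\mu)(\tilde u/u)}$ gives $x/\tilde x=\sqrt{(\mu/\tilde\mu)(u/\tilde u)}$, a square root of a product of two factors each in $[1-\varepsilon_{mp},1+\varepsilon_{mp}]$, so $x/\tilde x\in[1-\varepsilon_{mp},1+\varepsilon_{mp}]$ and $x\approx_{\varepsilon_{mp}}\tilde x$. Likewise $\tilde s=s\sqrt{(\tilde\mu/\mu)(u/\tilde u)}$ gives $s/\tilde s=\sqrt{(\mu/\tilde\mu)(\tilde u/u)}$; here $\mu/\tilde\mu\in[1-\varepsilon_{mp},1+\varepsilon_{mp}]$ while $\tilde u/u$ only lies in $[(1+\varepsilon_{mp})^{-1},(1-\varepsilon_{mp})^{-1}]$, so $s/\tilde s\in\bigl[\sqrt{(1-\varepsilon_{mp})/(1+\varepsilon_{mp})},\sqrt{(1+\varepsilon_{mp})/(1-\varepsilon_{mp})}\bigr]\subseteq[1-2\varepsilon_{mp},1+2\varepsilon_{mp}]$ for $\varepsilon_{mp}\le 1/4$ (using $\tfrac{1+\varepsilon}{1-\varepsilon}\le(1+2\varepsilon)^2$ and $\tfrac{1-\varepsilon}{1+\varepsilon}\ge(1-2\varepsilon)^2$), hence $s\approx_{2\varepsilon_{mp}}\tilde s$.

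This lemma is essentially a bookkeeping statement, so there is no genuine obstacle; the one point demanding care is keeping the two projection-maintenance instances apart — in particular recognizing that $\tdelta_t$ is assembled from the geometric mean $\sqrt{\hat\mu\tilde\mu}$ of two \emph{different} approximations of $\mu$ rather than from one — together with the harmless factor-of-two loss incurred when an $\varepsilon_{mp}$-level error in $u$ enters $\tilde s$ through a division rather than a multiplication.
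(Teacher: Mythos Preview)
Your proof is correct and follows essentially the same route as the paper's own argument: both verify each assertion by unfolding the algorithm's assignments and invoking the approximation guarantees of \Cref{lem:projection_maintenance}. Your treatment is in fact slightly more explicit --- you name the second approximation $\hat\mu$ returned by $\text{mp}_{\sqrt\mu}$ and spell out the geometric-mean argument for $\tdelta_t$ (the paper writes this as ``$\mu\approx_{\varepsilon_{mp}} v^2$ and $\mu\approx_{\varepsilon_{mp}}\tilde\mu$, so $\mu\approx_{\varepsilon_{mp}} v\sqrt{\tilde\mu}$''), and you justify the $2\varepsilon_{mp}$ loss for $\tilde s$ with a concrete inequality where the paper merely asserts it --- but there is no substantive difference in method.
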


\begin{proof}
The returned vector $m$ in line \ref{line:projection} 
is an approximation in the sense that 
$\mu/t \approx_{\varepsilon_{mp}} m$, 
which means $\mu \approx_{\varepsilon_{mp}} mt =: \tilde{\mu}$.
We have $\frac{x}{s} =: u \approx_{\varepsilon_{mp}} \tilde{u}$, 
hence $\frac{u}{\tilde{u}} \approx_{\varepsilon_{mp}} 1$ 
and $1 \approx_{\varepsilon_{mp}} \frac{\tilde{u}}{u}$.
Thus $x \approx_{\varepsilon_{mp}} x \sqrt{\frac{\tilde{\mu}}{\mu}\frac{\tilde{u}}{u}} = \tilde{x}$
and $s \approx_{2\varepsilon_{mp}} s \sqrt{\frac{\tilde{\mu}}{\mu}\frac{u}{\tilde{u}}} = \tilde{s}$.

As potential function we have chosen 
$\Phi_\lambda(x) = \sum_{i=1}^n \cosh(x_i)$, 
so 
$(\nabla \Phi_\lambda(x-1)/\sqrt{x})_i = \lambda \sinh( \lambda(x_i-1)) / \sqrt{x_i}.$
This means $\lambda \sinh( \lambda(x-1)) / \sqrt{x}$ for $x = \mu/t$ is $\nabla \Phi_\lambda(\mu/t-1)/\sqrt{\mu/t}$
and 
$w = \lambda \sinh( \lambda(m-1)) / \sqrt{m} = \nabla \Phi_\lambda(\tilde{\mu}/t-1)/\sqrt{\tilde{\mu}/t}.$
Hence we have that
$\tdelta_\Phi = -\frac{\varepsilon}{2} \cdot t^\new \cdot \frac{\sqrt{\tilde{\mu}/t}w}{\|\nabla \Phi_\lambda(\tilde{\mu}/t-1)\|_2}
= -\frac{\varepsilon}{2} \cdot t^\new \cdot \frac{\nabla \Phi_\lambda(\tilde{\mu}/t-1)}{\|\nabla \Phi_\lambda(\tilde{\mu}/t-1)\|_2}
$.
We also have
$\delta_t = ( \frac{t^{\new}}{t} - 1 ) \mu$, 
$\mu \approx_{\varepsilon_{mp}} v^2$ and 
$\mu \approx_{\varepsilon_{mp}} \tilde{\mu}$, 
so $\mu \approx_{\varepsilon_{mp}} v \sqrt{\tilde{\mu}}$ 
which implies 
$\delta_t \approx_{\varepsilon_{mp}} ( \frac{t^{\new}}{t} - 1 ) v \sqrt{\tilde{\mu}} =: \tdelta_t$.

\end{proof}

\begin{lemma}
\label{lem:approximate_solution}
The computed vectors in \Cref{alg:approximate_step} satisfy the following linear system:
\begin{align*} 
	\tilde{X} \tilde{\delta}_s + \tilde{S} \tilde{\delta}_x = & ~ \tilde{\delta}_{\mu}, \\
	A \tilde{\delta}_x = & ~ 0
\end{align*}
\end{lemma}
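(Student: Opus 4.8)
The plan is to recognize the vectors $\tilde\delta_x,\tilde\delta_s$ produced by \Cref{alg:approximate_step} as exactly the closed-form solution of \Cref{lem:solution_LP_system}, applied with the substitution $X\mapsto\tilde X$, $S\mapsto\tilde S$, $\delta_\mu\mapsto\tilde\delta_\mu$. Since \Cref{lem:solution_LP_system} asserts that those closed-form vectors solve the system \eqref{eq:interior_point_system}, this immediately yields $\tilde X\tilde\delta_s+\tilde S\tilde\delta_x=\tilde\delta_\mu$ and $A\tilde\delta_x=0$ (and also $A^\top\tilde\delta_y+\tilde\delta_s=0$, though the statement only claims the first two). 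The only work is to check that the matrices and vectors match up. First I would observe that the projection matrix $P:=\sqrt{\tilde X/\tilde S}\,A^\top(A\,\tilde X/\tilde S\,A^\top)^{-1}A\sqrt{\tilde X/\tilde S}$ from \Cref{lem:solution_LP_system} coincides with the projection maintained by both data-structure instances: by construction $\tilde x\tilde s=\tilde\mu$ and $\tilde x/\tilde s=\tilde u$, so $\sqrt{\tilde X\tilde S}=\diag(\sqrt{\tilde\mu})$ and $\tilde X/\tilde S=\tilde U=\diag(\tilde u)$; moreover both $\text{mp}_{\sqrt\mu}$ and $\text{mp}_{\nabla\Phi}$ are initialized with and always receive the same $u$, hence return the same $\tilde u$, so $P=\sqrt{\tilde U}A^\top(A\tilde U A^\top)^{-1}A\sqrt{\tilde U}$ is unambiguous. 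By \Cref{lem:correct_output}, the vectors returned in \cref{line:projection} satisfy $p_v=Pv$ and $p_w=Pw$, where $v$ is the third returned coordinate of $\text{mp}_{\sqrt\mu}$ (so $v=\sqrt{\tilde v}$ for $\text{mp}_{\sqrt\mu}$'s internal approximation $\tilde v\approx_{\varepsilon_{mp}}\mu$) and $w=\lambda\sinh(\lambda(\tilde\mu/t-1))/\sqrt{\tilde\mu/t}=\nabla\Phi_\lambda(\tilde\mu/t-1)/\sqrt{\tilde\mu/t}$.

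The key step is the identity $p=P\bigl(\tilde\delta_\mu/\sqrt{\tilde\mu}\bigr)$. Using linearity of $P$ and $p_v=Pv$, $p_w=Pw$, the definition of $p$ gives $p=P\bigl[(t^\new/t-1)v-\tfrac{\varepsilon}{2}t^\new w/(\sqrt t\,\|\nabla\Phi_\lambda(\tilde\mu/t-1)\|_2)\bigr]$. On the other hand, dividing the definitions $\tilde\delta_t=(t^\new/t-1)v\sqrt{\tilde\mu}$ and $\tilde\delta_\Phi=-\tfrac{\varepsilon}{2}t^\new\sqrt{\tilde\mu/t}\,w/\|\nabla\Phi_\lambda(\tilde\mu/t-1)\|_2$ by $\sqrt{\tilde\mu}$ and summing gives $\tilde\delta_\mu/\sqrt{\tilde\mu}=(t^\new/t-1)v-\tfrac{\varepsilon}{2}t^\new w/(\sqrt t\,\|\nabla\Phi_\lambda(\tilde\mu/t-1)\|_2)$, which is precisely the bracketed vector; here the cancellations $\sqrt{\tilde\mu/t}/\sqrt{\tilde\mu}=1/\sqrt t$ and $\sqrt{\tilde\mu}/\sqrt{\tilde\mu}=1$ are what is being used. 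Hence $p=P(\tilde\delta_\mu/\sqrt{\tilde\mu})$ holds exactly.

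Finally I substitute into the formulas of \Cref{lem:solution_LP_system}: $\tilde S$, $\tilde X$, $\sqrt{\tilde X\tilde S}=\diag(\sqrt{\tilde\mu})$ give $\delta_s=\tfrac{\tilde S}{\sqrt{\tilde X\tilde S}}P\tfrac{1}{\sqrt{\tilde X\tilde S}}\tilde\delta_\mu=\tfrac{\tilde s}{\sqrt{\tilde\mu}}\,P\tfrac{\tilde\delta_\mu}{\sqrt{\tilde\mu}}=\tfrac{\tilde s}{\sqrt{\tilde\mu}}p=\tilde\delta_s$, and $\delta_x=\tfrac{\tilde X}{\sqrt{\tilde X\tilde S}}(I-P)\tfrac{1}{\sqrt{\tilde X\tilde S}}\tilde\delta_\mu=\tfrac{\tilde x}{\sqrt{\tilde\mu}}\bigl(\tfrac{\tilde\delta_\mu}{\sqrt{\tilde\mu}}-p\bigr)=\tfrac{\tilde x}{\tilde\mu}\tilde\delta_\mu-\tfrac{\tilde x}{\sqrt{\tilde\mu}}p=\tfrac{1}{\tilde s}\tilde\delta_\mu-\tfrac{\tilde x}{\sqrt{\tilde\mu}}p=\tilde\delta_x$, where I use $\tilde x/\tilde\mu=1/\tilde s$. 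Thus the $\tilde\delta_x,\tilde\delta_s$ computed by \Cref{alg:approximate_step} are exactly the solution guaranteed by \Cref{lem:solution_LP_system} for the perturbed data $\tilde X,\tilde S,\tilde\delta_\mu$, so in particular they satisfy $\tilde X\tilde\delta_s+\tilde S\tilde\delta_x=\tilde\delta_\mu$ and $A\tilde\delta_x=0$. I expect the main obstacle to be the bookkeeping in the middle paragraph: one must keep track of the fact that $\text{mp}_{\sqrt\mu}$ and $\text{mp}_{\nabla\Phi}$ maintain \emph{different} internal approximations (of $\mu$ and of $\mu/t$ respectively) and that $\tilde\mu$ is taken from the latter, and then verify that every $\sqrt{\tilde\mu}$, $\sqrt{\tilde\mu/t}$ and $\sqrt{t}$ factor appearing in \Cref{alg:approximate_step} lines is matched so that $p=P(\tilde\delta_\mu/\sqrt{\tilde\mu})$ is an exact identity rather than merely an approximate one.
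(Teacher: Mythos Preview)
Your proposal is correct and follows essentially the same route as the paper: identify $P=\sqrt{\tilde U}A^\top(A\tilde U A^\top)^{-1}A\sqrt{\tilde U}$, verify the exact identity $p=P(\tilde\delta_\mu/\sqrt{\tilde\mu})$ by splitting into the $\tilde\delta_t$ and $\tilde\delta_\Phi$ contributions, and then read off that $\tilde\delta_s,\tilde\delta_x$ match the closed-form solution of \Cref{lem:solution_LP_system} for the data $(\tilde X,\tilde S,\tilde\delta_\mu)$. Your bookkeeping about the two different data-structure instances and the resulting $\sqrt{\tilde\mu}$, $\sqrt t$ cancellations is exactly what the paper does, just stated more explicitly.
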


\begin{proof}
We define the following projection matrix:
\ifdefined\SODAversion
\begin{align*}
P &:= \sqrt{\tilde{U}} A^\top (A \tilde{U} A^\top)^{-1} A \sqrt{\tilde{U}} \\
&= \sqrt{\tilde{X}/\tilde{S}} A^\top (A (\tilde{X}/\tilde{S}) A^\top)^{-1} A \sqrt{\tilde{X}/\tilde{S}}
\end{align*}
Then we have
\begin{align*}
&~ p_v (\frac{t^\new}{t}-1)
= P v (\frac{t^\new}{t}-1) \\
=&~ P \frac{1}{\sqrt{\tilde{x}\tilde{s}}} \sqrt{\tilde{\mu}} \: v (\frac{t^\new}{t}-1)
= P \frac{1}{\sqrt{\tilde{x}\tilde{s}}} \tdelta_t
\end{align*}
and
\begin{align*}
&~ -\frac{\varepsilon}{2} \cdot \frac{t^\new p_w}{\sqrt{t} \|\nabla \Phi_\lambda(\tilde{\mu}/t-1)\|_2} \\
=&~ -\frac{\varepsilon}{2} \cdot \frac{t^\new P w}{\sqrt{t}\|\nabla \Phi_\lambda(\tilde{\mu}/t-1)\|_2} \\
=&~ -\frac{\varepsilon}{2} \cdot \frac{t^\new P \frac{1}{\sqrt{\tilde{x}\tilde{s}}} \sqrt{\tilde{\mu}} \: w}{\sqrt{t}\|\nabla \Phi_\lambda(\tilde{\mu}/t-1)\|_2} 
= P \frac{1}{\sqrt{\tilde{x}\tilde{s}}} \tdelta_\Phi
.
\end{align*}
\else
$$P := \sqrt{\tilde{U}} A^\top (A \tilde{U} A^\top)^{-1} A \sqrt{\tilde{U}}
= \sqrt{\tilde{X}/\tilde{S}} A^\top (A (\tilde{X}/\tilde{S}) A^\top)^{-1} A \sqrt{\tilde{X}/\tilde{S}}$$
Then we have
$$
p_v (\frac{t^\new}{t}-1)
= P v (\frac{t^\new}{t}-1)
= P \frac{1}{\sqrt{\tilde{x}\tilde{s}}} \sqrt{\tilde{\mu}} \: v (\frac{t^\new}{t}-1)
= P \frac{1}{\sqrt{\tilde{x}\tilde{s}}} \tdelta_t
$$
and
$$
-\frac{\varepsilon}{2} \cdot \frac{t^\new p_w}{\sqrt{t} \|\nabla \Phi_\lambda(\tilde{\mu}/t-1)\|_2}
= -\frac{\varepsilon}{2} \cdot \frac{t^\new P w}{\sqrt{t}\|\nabla \Phi_\lambda(\tilde{\mu}/t-1)\|_2}
= -\frac{\varepsilon}{2} \cdot \frac{t^\new P \frac{1}{\sqrt{\tilde{x}\tilde{s}}} \sqrt{\tilde{\mu}} \: w}{\sqrt{t}\|\nabla \Phi_\lambda(\tilde{\mu}/t-1)\|_2} 
= P \frac{1}{\sqrt{\tilde{x}\tilde{s}}} \tdelta_\Phi
.$$
\fi
Hence the change to $x$ and $s$ is given by
$$
\tdelta_s = \frac{\tilde{s}}{\sqrt{\tilde{x}\tilde{s}}} p
= \frac{\tilde{s}}{\sqrt{\tilde{x}\tilde{s}}} P \frac{1}{\sqrt{\tilde{x}\tilde{s}}} (\tdelta_t + \tdelta_\Phi)
= \frac{\tilde{s}}{\sqrt{\tilde{x}\tilde{s}}} P \frac{1}{\sqrt{\tilde{x}\tilde{s}}} \tdelta_\mu
$$

$$
\tdelta_x
= \frac{1}{\tilde{s}} \tdelta_\mu - \frac{\tilde{x}}{\sqrt{\tilde{x}\tilde{s}}} P \frac{1}{\sqrt{\tilde{x}\tilde{s}}} p
= \frac{\tilde{x}}{\sqrt{\tilde{x}\tilde{s}}} (\I - P) \frac{1}{\sqrt{\tilde{x}\tilde{s}}} \tdelta_\mu
$$
\Cref{lem:approximate_solution} is thus given by \Cref{lem:solution_LP_system}.
\end{proof}

\subsection{Bounding the change per iteration}
\label{sub:bound_change}

\Cref{alg:approximate_step} uses the data-structure of \Cref{lem:projection_maintenance}.
The complexity of this data-structure depends on how much the input parameters 
(in our case $u:=x/s$, $\mu$ and $\mu/t$) change per iteration.
In this section we prove:
\begin{lemma}\label{lem:change_mu_u}
Assume $\mu \approx_{0.1} t$.
Let $\mu^\new := (x+\tdelta_x)(s+\tdelta_s)$, 
the value of $\mu$ in the upcoming iteration,
and let $u := \frac{x}{s}$, $u^\new := \frac{x+\tdelta_x}{s+\tdelta_s}$, then
\ifdefined\SODAversion
\begin{align*}
\|\mu^{-1}(\mu^\new - \mu) \|_2 &\le 2.5\varepsilon,\\
\|(\mu/t)^{-1}(\mu^\new/t^\new - \mu/t) \|_2 &\le 3\varepsilon,\\
\| (u^\new - u) / u \|_2 &\le 3 \varepsilon.
\end{align*}
\else
$$\|\mu^{-1}(\mu^\new - \mu) \| \le 2.5\varepsilon, \hspace{10pt}
\|(\mu/t)^{-1}(\mu^\new/t^\new - \mu/t) \| \le 3\varepsilon, \hspace{10pt}
\| (u^\new - u) / u \|_2 \le 3 \varepsilon.$$
\fi
\end{lemma}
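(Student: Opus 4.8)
The plan is to write the exact multiplicative change of $\mu$ in terms of the approximate step directions $\tdelta_x,\tdelta_s$ that \Cref{alg:approximate_step} actually computes, and then bound each resulting piece. Coordinatewise,
$$\mu^\new_i - \mu_i = (x_i+\tdelta_{x,i})(s_i+\tdelta_{s,i}) - x_i s_i = x_i\tdelta_{s,i} + s_i\tdelta_{x,i} + \tdelta_{x,i}\tdelta_{s,i};$$
I would then split $x_i\tdelta_{s,i}+s_i\tdelta_{x,i} = (\tilde{x}_i\tdelta_{s,i}+\tilde{s}_i\tdelta_{x,i}) + (x_i-\tilde{x}_i)\tdelta_{s,i} + (s_i-\tilde{s}_i)\tdelta_{x,i}$. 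By \Cref{lem:approximate_solution} the first bracket equals $\tdelta_{\mu,i}$ exactly, so dividing through by $\mu_i = x_i s_i$ gives the clean identity
$$\frac{\mu^\new_i - \mu_i}{\mu_i} = \frac{\tdelta_{\mu,i}}{\mu_i} + \frac{x_i-\tilde{x}_i}{x_i}\cdot\frac{\tdelta_{s,i}}{s_i} + \frac{s_i-\tilde{s}_i}{s_i}\cdot\frac{\tdelta_{x,i}}{x_i} + \frac{\tdelta_{x,i}}{x_i}\cdot\frac{\tdelta_{s,i}}{s_i}.$$

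The substance of the proof is bounding the ``ingredients'' $\|\tdelta_\mu/\mu\|_2$, $\|\tdelta_s/s\|_2$, $\|\tdelta_x/x\|_2$ and their $\ell_\infty$ analogues. Using $\tdelta_\mu = \tdelta_t + \tdelta_\Phi$: the term $\tdelta_t = (t^\new/t-1)\,v\sqrt{\tilde\mu}$ satisfies $\|\tdelta_t/\mu\|_2 \le (1+O(\varepsilon_{mp}))\,|t^\new/t-1|\,\sqrt n = (1+O(\varepsilon_{mp}))\,\varepsilon/3$ (since $v^2\approx_{\varepsilon_{mp}}\mu$ and $\tilde\mu\approx_{\varepsilon_{mp}}\mu$ by \Cref{lem:error_delta}), while $\tdelta_\Phi$ is a unit vector scaled by $\tfrac{\varepsilon}{2}t^\new$, so $\|\tdelta_\Phi/\mu\|_2 \le \tfrac{\varepsilon}{2}t^\new/\min_i\mu_i \le \tfrac{5}{9}\varepsilon$ using $\mu\approx_{0.1}t$ and $t^\new\le t$; hence $\|\tdelta_\mu/\mu\|_2 \le (1+O(\varepsilon_{mp}))\,\tfrac{8}{9}\varepsilon$. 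For $\tdelta_s$ and $\tdelta_x$ I would invoke the closed forms $\tdelta_s = \tfrac{\tilde s}{\sqrt{\tilde\mu}}\,P\,\tfrac{1}{\sqrt{\tilde\mu}}\tdelta_\mu$ and $\tdelta_x = \tfrac{\tilde x}{\sqrt{\tilde\mu}}\,(\I-P)\,\tfrac{1}{\sqrt{\tilde\mu}}\tdelta_\mu$ derived in the proof of \Cref{lem:approximate_solution}, together with the fact that $P$ is a symmetric idempotent, so both $P$ and $\I-P$ are contractions in $\ell_2$ and entrywise ($|(Pz)_i|\le\|z\|_2$). Pulling the diagonal matrices out in $\ell_\infty$-norm and using that every entry of $\tilde\mu$ equals $t$ up to a factor in $[0.9,1.1]\cdot(1+O(\varepsilon_{mp}))$ (from $\mu\approx_{0.1}t$ and $\tilde\mu\approx_{\varepsilon_{mp}}\mu$) yields $\|\tdelta_s/s\|_2, \|\tdelta_s/s\|_\infty, \|\tdelta_x/x\|_2, \|\tdelta_x/x\|_\infty \le (1+O(\varepsilon_{mp}))\sqrt{1.1/0.9}\,\|\tdelta_\mu/\mu\|_2 \le 1.1\,\varepsilon$.

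The three claims then follow by plugging in. For the first, substitute the ingredient bounds into the identity above; since $\|(x-\tilde x)/x\|_\infty$ and $\|(s-\tilde s)/s\|_\infty$ are $O(\varepsilon_{mp})$ (from $x\approx_{\varepsilon_{mp}}\tilde x$ and $s\approx_{2\varepsilon_{mp}}\tilde s$ in \Cref{lem:error_delta}), the last three terms contribute only $O(\varepsilon\varepsilon_{mp})+O(\varepsilon^2)$, and as $\varepsilon\le 1/(1500\ln n)$ we obtain $\|\mu^{-1}(\mu^\new-\mu)\|_2 \le \tfrac{8}{9}\varepsilon+O(\varepsilon^2)\le 2.5\varepsilon$. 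For the third, note $u^\new/u = (1+\tdelta_x/x)/(1+\tdelta_s/s)$ entrywise, hence $\|(u^\new-u)/u\|_2 \le (\|\tdelta_x/x\|_2+\|\tdelta_s/s\|_2)/(1-\|\tdelta_s/s\|_\infty) \le 2.2\varepsilon/(1-O(\varepsilon)) \le 3\varepsilon$. For the second, set $g := (\mu^\new-\mu)/\mu$ and the scalar $h := t/t^\new-1 = \tfrac{\varepsilon/(3\sqrt n)}{1-\varepsilon/(3\sqrt n)}$; then $\tfrac{\mu^\new/t^\new}{\mu/t}-1 = (1+g)(1+h)-1 = g + h\mathbf{1} + hg$ with $\mathbf{1}$ the all-ones vector, whose $\ell_2$-norm is at most $\|g\|_2 + |h|\sqrt n + |h|\|g\|_2 \le \tfrac{8}{9}\varepsilon + \tfrac{\varepsilon}{2.9} + O(\varepsilon^2) \le 3\varepsilon$. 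There is no deep obstacle here beyond bookkeeping: the only point that needs care is not letting the compounded $(1\pm O(\varepsilon_{mp}))$ and $(1\pm O(1/\log n))$ slack overrun the target constants $2.5$ and $3$, and the single structural input — as in \cite{CohenLS19} — is simply that $P$ is an orthogonal projection, hence an $\ell_2$- and coordinatewise contraction.
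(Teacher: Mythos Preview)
Your proposal is correct and follows essentially the same route as the paper: bound $\|\tdelta_\mu\|$ (the paper's \Cref{lem:delta_length}), use that $P$ and $I-P$ are orthogonal projections to bound $\|\tdelta_x/x\|_2,\|\tdelta_s/s\|_2$ (the paper's \Cref{lem:change_sx}), and then expand the products for $\mu^\new=xs$ and $u^\new=x/s$ (the paper's \Cref{lem:change_mu,lem:change_u}, which just cite the generic product/inverse change \Cref{lem:product_change,lem:inverse_change} rather than writing out the coordinatewise identity). The only cosmetic difference is that you route the first bound through the identity $\tilde{x}\tdelta_s+\tilde{s}\tdelta_x=\tdelta_\mu$, whereas the paper applies \Cref{lem:product_change} directly to $x,s$ and gets $1.2\varepsilon+1.2\varepsilon+(1.2\varepsilon)^2\le 2.5\varepsilon$; both are fine.
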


In order to prove this lemma, we must assume that $\mu$ is currently a good approximation of $t$.
We assume the following proposition, which is proven in the next subsection.

\begin{proposition}\label{pro:mu_eps}
For the input to \Cref{alg:approximate_step} we have
$\mu \approx_{0.1} t$
\end{proposition}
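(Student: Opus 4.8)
I would not prove $\mu\approx_{0.1}t$ directly but instead establish the stronger invariant
\[
\Phi_\lambda(\mu/t-1)\;=\;\sum_{i=1}^n\cosh\!\big(\lambda(\mu_i/t-1)\big)\;\le\;\Gamma
\]
at the start of every call to \Cref{alg:approximate_step}, where $\Gamma=\Gamma(n)$ is a fixed polynomial chosen so that $\lambda^{-1}\log(2\Gamma)\le 0.1$ (e.g. $\Gamma=n^4/2$, recalling $\lambda=40\ln n$). This suffices: by the elementary inequality $\|x\|_\infty\le \lambda^{-1}\log(2\Phi_\lambda(x))$ (\Cref{lem:bound_infty_via_phi}, which follows from $2\Phi_\lambda(x)\ge 2\cosh(\lambda\|x\|_\infty)\ge e^{\lambda\|x\|_\infty}$), the bound $\Phi_\lambda(\mu/t-1)\le\Gamma$ yields $\|\mu/t-1\|_\infty\le 0.1$, i.e. $\mu\approx_{0.1}t$. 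The invariant is proved by induction on the iteration counter of the \textsc{while}-loop of \Cref{alg:main}. For the base case, \Cref{lem:feasible_LP} produces an initial triple with $\mu=xs\approx_\gamma 1$ for $\gamma=\min\{\delta,1/\lambda\}$ and $t=1$; thus $|\mu_i/t-1|\le\gamma\le 1/\lambda$, so $\cosh(\lambda(\mu_i/t-1))\le\cosh(1)<2$ and $\Phi_\lambda(\mu/t-1)\le 2n\le\Gamma$.

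\textbf{Inductive step.} Suppose $\Phi_\lambda(\mu/t-1)\le\Gamma$ at the start of a call to \textsc{ApproximateStep}; in particular $\mu\approx_{0.1}t$ holds, so the hypothesis of \Cref{lem:change_mu_u} is met and $\|\mu^\new/t^\new-\mu/t\|_2\le 3\varepsilon$ for the returned $\mu^\new,t^\new$. I would split into two regimes at a threshold $\Gamma'=\Gamma/2$. If $\Phi_\lambda(\mu/t-1)\le\Gamma'$, bound the per-step growth: from $\cosh(a+b)\le e^{|b|}\cosh(a)$ and $|b|\le\lambda\|\mu^\new/t^\new-\mu/t\|_\infty\le 3\lambda\varepsilon=120/1500<0.08$, each term grows by at most a factor $e^{0.08}$, so $\Phi_\lambda(\mu^\new/t^\new-1)\le e^{0.08}\Gamma'<\Gamma$. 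If instead $\Phi_\lambda(\mu/t-1)>\Gamma'$, it is enough to show the potential does not increase, $\Phi_\lambda(\mu^\new/t^\new-1)\le\Phi_\lambda(\mu/t-1)\le\Gamma$.

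\textbf{The descent argument (large-potential regime).} Taylor-expand $\Phi_\lambda$ about $\mu/t-1$:
\[
\Phi_\lambda(\mu^\new/t^\new-1)=\Phi_\lambda(\mu/t-1)+\big\langle\nabla\Phi_\lambda(\mu/t-1),\,\tfrac{\mu^\new}{t^\new}-\tfrac{\mu}{t}\big\rangle+(\text{2nd-order remainder}).
\]
By the construction of \Cref{alg:approximate_step} together with \Cref{lem:approximate_solution} and \Cref{lem:error_delta}, the actual change $\tfrac{\mu^\new}{t^\new}-\tfrac{\mu}{t}$ equals $\tfrac1t\tdelta_\mu=\tfrac1t\tdelta_t+\tfrac1t\tdelta_\Phi$ up to the dropped quadratic term $\tdelta_x\tdelta_s/t^\new$ and the $(1\pm\varepsilon_{mp})$-discrepancies between $x,s,\mu$ and $\tilde x,\tilde s,\tilde\mu$. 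The $\tfrac1t\tdelta_t$ contribution, combined with the rescaling $t\to t^\new$, is the standard controlled quantity driving $t$ down while leaving $\mu/t$ essentially fixed; the interesting piece is $\tfrac1t\tdelta_\Phi=-\tfrac{\varepsilon}{2}\tfrac{t^\new}{t}\,\nabla\Phi_\lambda(\tilde\mu/t-1)/\|\nabla\Phi_\lambda(\tilde\mu/t-1)\|_2$, whose inner product with $\nabla\Phi_\lambda(\mu/t-1)$ could a priori have either sign. Here I invoke \Cref{lem:gradient_direction}: once $\Phi_\lambda(\mu/t-1)$ exceeds the polynomial threshold, $\langle\nabla\Phi_\lambda(\mu/t-1),\nabla\Phi_\lambda(\tilde\mu/t-1)\rangle>0$ and in fact is a constant fraction of what it would be with the exact gradient — because at every index with $|\mu_i/t-1|>\varepsilon_{mp}$ the coordinates $\mu_i/t-1$ and $\tilde\mu_i/t-1$ have the same sign, and a polynomially large potential forces those "aligned" indices to dominate $\|\nabla\Phi_\lambda(\mu/t-1)\|_1$. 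Hence the first-order term is $\le-\Theta(\varepsilon)\,\|\nabla\Phi_\lambda(\mu/t-1)\|_1$ (up to the $\tfrac1t\tdelta_t$-correction and the errors above), while the second-order remainder is $O(\lambda\|\mu^\new/t^\new-\mu/t\|_\infty)=O(\lambda\varepsilon)$ times the gradient norm and is lower order. For the fixed choices $\varepsilon,\varepsilon_{mp}\le 1/(1500\ln n)$ and $\lambda=40\ln n$, all error and remainder terms are smaller in absolute value than the negative first-order gain, so $\Phi_\lambda(\mu^\new/t^\new-1)\le\Phi_\lambda(\mu/t-1)$, closing the induction.

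\textbf{Main obstacle.} The hard part is the large-potential regime, specifically the quantitative form of \Cref{lem:gradient_direction}: one must show that a polynomially large $\Phi_\lambda(\mu/t-1)$ concentrates enough of $\nabla\Phi_\lambda(\mu/t-1)$ on coordinates where $\mu/t-1$ and $\tilde\mu/t-1$ agree in sign, so that stepping against the \emph{approximate} gradient $\nabla\Phi_\lambda(\tilde\mu/t-1)$ is still a genuine descent, and then to keep the dropped quadratic step-term, the $\varepsilon_{mp}$-approximation errors in $\tilde x,\tilde s,\tilde\mu$, and the Taylor remainder all subordinate to the $\Theta(\varepsilon)$ first-order decrease. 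This mirrors the potential analysis of Cohen--Lee--Song; since our approximation of the gradient is deterministic (evaluation at $\tilde\mu$ in place of random sparsification), the entire variance/concentration part of their argument disappears, which is where the roughly twofold shortening comes from.
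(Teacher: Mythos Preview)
Your proposal is correct and follows essentially the same approach as the paper: both establish \Cref{pro:mu_eps} by maintaining the invariant $\Phi_\lambda(\mu/t-1)\le\Gamma$ inductively (base case from \Cref{lem:feasible_LP}, inductive step via Taylor expansion plus \Cref{lem:gradient_direction}), then extract $\|\mu/t-1\|_\infty\le 0.1$ via \Cref{lem:bound_infty_via_phi}. The paper is slightly cleaner in that it proves a single descent inequality (\Cref{lem:decrease_Phi}) covering both regimes at once and carries the tighter invariant $\Gamma=2n$ rather than your $n^4/2$; also, your first-order gain should read $-\Theta(\varepsilon)\|\nabla\Phi_\lambda\|_2$ rather than $\|\cdot\|_1$, but these are cosmetic differences.
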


How much we change $x,s$ depends on how long the vector $\delta_\mu$ is, 
so we start by bounding that length.

\begin{lemma}\label{lem:delta_length}
$\| \delta_t \|_2 \le 1.1 \frac{\varepsilon}{3} t$,
$\| \delta_\Phi \|_2 \le \frac{\varepsilon}{2} t$,
$\| \delta_\mu \|_2 \le \varepsilon t$\\
$\| \tdelta_t \|_2 \le 1.2 \frac{\varepsilon}{3} t$,
$\| \tdelta_\Phi \|_2 \le \frac{\varepsilon}{2} t$,
$\| \tdelta_\mu \|_2 \le \varepsilon t$
\end{lemma}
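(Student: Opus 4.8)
The plan is to reduce all six bounds to two elementary observations, after which everything is routine arithmetic with the $(1\pm\varepsilon_{mp})$ factors. By the definition of $t^\new$ in \Cref{alg:approximate_step} we have $\frac{t^\new}{t}-1 = -\frac{\varepsilon}{3\sqrt n}$ and $0 < t^\new \le t$; consequently $\delta_t$ and $\tdelta_t$ are just $-\frac{\varepsilon}{3\sqrt n}$ times the vectors $\mu$ and $v\sqrt{\tilde\mu}$, while $\delta_\Phi$ and $\tdelta_\Phi$ are — by \Cref{lem:error_delta} and the definitions in \Cref{alg:approximate_step} — exactly $\frac{\varepsilon}{2}t^\new$ times an $\ell_2$-unit vector, namely $\delta_\Phi=-\frac{\varepsilon}{2}t^\new\frac{\nabla\Phi_\lambda(\mu/t-1)}{\|\nabla\Phi_\lambda(\mu/t-1)\|_2}$ and likewise for $\tdelta_\Phi$ with $\tilde\mu$ in place of $\mu$. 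So it suffices to control the $\ell_2$-norms of $\mu$ and of $v\sqrt{\tilde\mu}$, and for that I would invoke \Cref{pro:mu_eps} ($\mu\approx_{0.1}t$, hence $\mu_i\le 1.1\,t$ for every $i$, so $\|\mu\|_2\le 1.1\sqrt n\,t$) together with the relation $\mu\approx_{\varepsilon_{mp}}v\sqrt{\tilde\mu}$ already established inside the proof of \Cref{lem:error_delta}.

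First I would handle $\delta_t$ and $\tdelta_t$. Directly, $\|\delta_t\|_2 = \frac{\varepsilon}{3\sqrt n}\|\mu\|_2 \le \frac{\varepsilon}{3\sqrt n}\cdot 1.1\sqrt n\,t = 1.1\frac{\varepsilon}{3}t$. For the approximate version, $\mu\approx_{\varepsilon_{mp}}v\sqrt{\tilde\mu}$ gives $(v\sqrt{\tilde\mu})_i \le \frac{\mu_i}{1-\varepsilon_{mp}}\le \frac{1.1}{1-\varepsilon_{mp}}t$, hence $\|v\sqrt{\tilde\mu}\|_2\le\frac{1.1}{1-\varepsilon_{mp}}\sqrt n\,t$ and $\|\tdelta_t\|_2\le\frac{1.1}{1-\varepsilon_{mp}}\frac{\varepsilon}{3}t$; since $\varepsilon_{mp}\le 1/(1500\ln n)$ is tiny, $\frac{1.1}{1-\varepsilon_{mp}}\le 1.2$, giving $\|\tdelta_t\|_2\le 1.2\frac{\varepsilon}{3}t$.

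Next I would dispose of the $\Phi$-terms and assemble $\delta_\mu,\tdelta_\mu$. Since $\delta_\Phi$ and $\tdelta_\Phi$ are scalar multiples of $\ell_2$-unit vectors with scalar $\frac{\varepsilon}{2}t^\new$ and $0<t^\new\le t$, both have norm $\frac{\varepsilon}{2}t^\new\le\frac{\varepsilon}{2}t$ (in the degenerate case $\nabla\Phi_\lambda=0$, i.e.\ $\mu=t$ exactly, one simply takes $\delta_\Phi=\tdelta_\Phi=0$ and the bound is trivial). Finally $\delta_\mu=\delta_t+\delta_\Phi$ and $\tdelta_\mu=\tdelta_t+\tdelta_\Phi$, so the triangle inequality gives $\|\delta_\mu\|_2\le(\frac{1.1}{3}+\frac12)\varepsilon t<\varepsilon t$ and $\|\tdelta_\mu\|_2\le(\frac{1.2}{3}+\frac12)\varepsilon t = 0.9\,\varepsilon t<\varepsilon t$.

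There is no genuinely hard step: the lemma is entirely bookkeeping of $(1\pm\varepsilon_{mp})$ factors, and the only points demanding a moment's care are (i) verifying that the slack between $1.1$ and $1.2$ absorbs the $\frac{1}{1-\varepsilon_{mp}}$ factor and that the sums $\frac{1.1}{3}+\frac12$ and $\frac{1.2}{3}+\frac12$ stay below $1$, and (ii) the degenerate vanishing-gradient case noted above. All the real substance lies in the hypothesis $\mu\approx_{0.1}t$, which is \Cref{pro:mu_eps} and is established independently in \Cref{sub:mu_close_to_t} (this lemma being one of the ingredients used there, so the two results are proven together by induction over the iterations of \Cref{alg:approximate_step}).
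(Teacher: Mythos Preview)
Your proof is correct and follows essentially the same approach as the paper: bound $\|\mu\|_2$ via $\mu\approx_{0.1}t$ to handle $\delta_t$, pass to $\tdelta_t$ through the $(1\pm\varepsilon_{mp})$ relation (the paper phrases this as $\delta_t\approx_{\varepsilon_{mp}}\tdelta_t$ from \Cref{lem:error_delta}, which is equivalent to your use of $\mu\approx_{\varepsilon_{mp}}v\sqrt{\tilde\mu}$), observe that the $\Phi$-terms are scalars times unit vectors, and finish with the triangle inequality. Your explicit treatment of the degenerate $\nabla\Phi_\lambda=0$ case and your remark on the inductive coupling with \Cref{pro:mu_eps} are extra care the paper leaves implicit, but otherwise the arguments are the same.
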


\begin{proof}
\ifdefined\SODAversion
\begin{align*}
\|\delta_t\|_2
&= \|(t^\new / t - 1) \mu \|_2
\le 1.1 \sqrt{n} (1 - t^\new/t) t \\
&= 1.1 \sqrt{n}\frac{\varepsilon}{3\sqrt{n}} t = 1.1 \frac{\varepsilon}{3} t
\end{align*}
\else
$$
\|\delta_t\|_2
= \|(t^\new / t - 1) \mu \|_2
\le 1.1 \sqrt{n} (1 - t^\new/t) t
= 1.1 \sqrt{n}\frac{\varepsilon}{3\sqrt{n}} t = 1.1 \frac{\varepsilon}{3} t
$$
\fi
Here the first inequality comes from $\mu \approx_{0.1} t$. 
This then also implies $\| \tdelta_t \|_2 \le 1.2 \frac{\varepsilon}{3} t$, because $\delta_t \approx_{\varepsilon_{mp}} \tdelta_t$ from \Cref{lem:error_delta}. Next we handle the length of $\delta_\Phi$:
\ifdefined\SODAversion
\begin{align*}
\| \delta_\Phi\|_2
&= \left\| \frac{\varepsilon}{2} t^\new \frac{\nabla \Phi_\lambda(\mu/t-1)}{\|\nabla \Phi_\lambda(\mu/t-1)\|_2}\right\|_2 \\
&= \frac{\varepsilon t^\new}{2} 
= \frac{\varepsilon (1-\frac{\varepsilon}{3 \sqrt{n}}) t}{2} \le \frac{\varepsilon}{2} t
\end{align*}
\else
$$
\| \delta_\Phi\|_2
= \left\| \frac{\varepsilon}{2} t^\new \frac{\nabla \Phi_\lambda(\mu/t-1)}{\|\nabla \Phi_\lambda(\mu/t-1)\|_2}\right\|_2
= \frac{\varepsilon t^\new}{2}
= \frac{\varepsilon (1-\frac{\varepsilon}{3 \sqrt{n}}) t}{2} \le \frac{\varepsilon}{2} t
$$
\fi
The same proof also yields the bound for $\tdelta_\Phi$ as we just replace $\mu/t$ by $\tilde{\mu}/t$, but because of the normalization this does not change the length.
By combining the past results via triangle inequality we obtain
$$
\| \delta_\mu \|
\le
\|\delta_t\|_2 + \| \delta_\Phi\|_2
\le 1.1 \frac{\varepsilon}{3} t + \frac{\varepsilon}{2} t
\le \varepsilon t
$$
and likewise $\| \tdelta_\mu \| \le \varepsilon t$.
\end{proof}

Next we show that the multiplicative change to $x$ and $s$ is small.
\begin{lemma}\label{lem:change_sx}
$\| \tilde{s}^{-1} \tdelta_s \|_2 \le 1.2 \varepsilon$,
$\| s^{-1} \tdelta_s \|_2 \le 1.2 \varepsilon$,\\
$\| \tilde{x}^{-1} \tdelta_x \|_2 \le 1.2 \varepsilon$,
$\| x^{-1} \tdelta_x \|_2 \le 1.2 \varepsilon$
\end{lemma}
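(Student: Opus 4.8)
The plan is to start from the explicit expressions for $\tdelta_s$ and $\tdelta_x$ derived in the proof of \Cref{lem:approximate_solution}, namely
$$\tdelta_s = \frac{\tilde{s}}{\sqrt{\tilde{\mu}}} P \frac{1}{\sqrt{\tilde{\mu}}} \tdelta_\mu, \qquad \tdelta_x = \frac{\tilde{x}}{\sqrt{\tilde{\mu}}} (\I - P) \frac{1}{\sqrt{\tilde{\mu}}} \tdelta_\mu,$$
where $P = \sqrt{\tilde U} A^\top (A\tilde U A^\top)^{-1} A \sqrt{\tilde U}$ is an orthogonal projection matrix. First I would compute the multiplicative change $\tilde{s}^{-1}\tdelta_s = \frac{1}{\sqrt{\tilde\mu}} P \frac{1}{\sqrt{\tilde\mu}} \tdelta_\mu$, so that $\|\tilde{s}^{-1}\tdelta_s\|_2 = \|P \frac{\tdelta_\mu}{\sqrt{\tilde\mu}}\|_2 \le \|\frac{\tdelta_\mu}{\sqrt{\tilde\mu}}\|_2$ using that $P$ is an orthogonal projection and hence has operator norm $\le 1$. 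The same bound holds for $\tilde{x}^{-1}\tdelta_x$ since $\I - P$ is also an orthogonal projection. So everything reduces to bounding $\|\tdelta_\mu/\sqrt{\tilde\mu}\|_2$.

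Next I would bound $\|\tdelta_\mu/\sqrt{\tilde\mu}\|_2$ entrywise in terms of $\|\tdelta_\mu\|_2$ and a lower bound on $\tilde\mu_i$. By \Cref{lem:delta_length} we have $\|\tdelta_\mu\|_2 \le \varepsilon t$. For the denominator, by \Cref{pro:mu_eps} we have $\mu \approx_{0.1} t$, and by \Cref{lem:error_delta} (via $\text{mp}_{\nabla\Phi}$) we have $\mu \approx_{\varepsilon_{mp}} \tilde\mu$; combining these gives $\tilde\mu_i \ge (1 - 0.1)(1-\varepsilon_{mp})^{\pm} t \ge 0.89\, t$ or so for each $i$ (being careful with the direction of $\approx$). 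Therefore $\|\tdelta_\mu/\sqrt{\tilde\mu}\|_2 \le \|\tdelta_\mu\|_2 / \min_i \sqrt{\tilde\mu_i} \le \varepsilon t / \sqrt{0.89\, t} = \varepsilon \sqrt{t/0.89} \le 1.06\,\varepsilon\sqrt{t}$. Hmm — this leaves a stray $\sqrt t$. I need to double-check the normalization: in fact $P$ sits between two copies of $1/\sqrt{\tilde\mu}$ and $\tilde s/\sqrt{\tilde\mu}\cdot(\cdot) = \tilde s^{-1}\tilde\delta_s$ only after dividing by $\tilde s$, so the relevant quantity is $\|\frac{1}{\sqrt{\tilde\mu}} P \frac{1}{\sqrt{\tilde\mu}}\tdelta_\mu\|_2 \le \frac{1}{\min_i\sqrt{\tilde\mu_i}}\|P\frac{\tdelta_\mu}{\sqrt{\tilde\mu}}\|_2 \le \frac{1}{\min_i\tilde\mu_i}\|\tdelta_\mu\|_2 \le \frac{\varepsilon t}{0.89\, t} \le 1.2\,\varepsilon$, which is the stated bound. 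So the correct move is to pull $1/\sqrt{\tilde\mu}$ out of the norm on \emph{both} sides, getting $1/\min_i\tilde\mu_i$ rather than $1/\min_i\sqrt{\tilde\mu_i}$.

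Finally I would transfer the bounds from $\tilde s, \tilde x$ to $s, x$. By \Cref{lem:error_delta} we have $x \approx_{\varepsilon_{mp}} \tilde x$ and $s \approx_{2\varepsilon_{mp}} \tilde s$, so $\|s^{-1}\tdelta_s\|_2 \le (1 + O(\varepsilon_{mp}))\|\tilde s^{-1}\tdelta_s\|_2$ and similarly for $x$; since $\varepsilon_{mp} \le \varepsilon \le 1/(1500\ln n)$ is tiny, the factor $(1 + O(\varepsilon_{mp}))$ is absorbed and the $1.2\,\varepsilon$ bound survives (the slack between, say, $1.06$ and $1.2$ easily covers this). The main obstacle — more a bookkeeping hazard than a genuine difficulty — is keeping the constants and the direction of the one-sided relation $\approx_\varepsilon$ straight when chaining $\mu \approx_{0.1} t$ with $\mu \approx_{\varepsilon_{mp}}\tilde\mu$ to get a clean lower bound on each $\tilde\mu_i$, and making sure the operator-norm-$\le 1$ step is applied to $P$ and $\I-P$ as genuine orthogonal projections (which holds because $P = Q^\top(QQ^\top)^{-1}Q$ with $Q = A\sqrt{\tilde U}$).
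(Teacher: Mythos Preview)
Your proposal is correct and follows essentially the same argument as the paper: pull the diagonal factor $1/\sqrt{\tilde\mu}$ out of the norm on both sides of the projection, use $\|P\|,\|\I-P\|\le 1$, combine $\|\tdelta_\mu\|_2\le\varepsilon t$ from \Cref{lem:delta_length} with the lower bound $\tilde\mu_i\ge 0.9t/(1+\varepsilon_{mp})$ from $\mu\approx_{0.1}t$ and $\mu\approx_{\varepsilon_{mp}}\tilde\mu$, and finally transfer to $s,x$ via $s\approx_{2\varepsilon_{mp}}\tilde s$, $x\approx_{\varepsilon_{mp}}\tilde x$. The only blemish is the intermediate mis-step where you momentarily dropped one factor of $1/\sqrt{\tilde\mu}$ and got a stray $\sqrt t$, but you correctly diagnose and fix this in the same paragraph; the repaired computation is exactly the paper's.
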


\begin{proof}
Since $\tilde{P}$ is an orthogonal projection matrix we have 
$\|\tilde{P} \frac{\tdelta_\mu}{\sqrt{\tilde{X}\tilde{S}}}\|_2 
\le \|\frac{\tdelta_\mu}{\sqrt{\tilde{X}\tilde{S}}}\|_2$ 
and as $\mu \approx_{\varepsilon_{mp}} \tilde{\mu} = \tilde{x}\tilde{s}$ and $\mu \approx_{0.1} t$,
this can be further bounded by 
$\sqrt{(1+\varepsilon_{mp})/(0.9t)} \|\tdelta_\mu\| 
.$
This allows us to bound $\| \tilde{s}^{-1} \tdelta_s \|_2 $ as follows:
\ifdefined\SODAversion
\begin{align*}
&~\| \tilde{s}^{-1} \tdelta_s \|_2 
= 
\| \frac{1}{\sqrt{\tilde{X}\tilde{S}}} \tilde{P} \frac{\tdelta_\mu}{\sqrt{\tilde{X}\tilde{S}}} \|_2 \\
\le&~ 
\sqrt{(1+\varepsilon_{mp})/(0.9t)} \| \tilde{P} \frac{\tdelta_\mu}{\sqrt{\tilde{X}\tilde{S}}} \|_2\\
\le&~
(1+\varepsilon_{mp})/(0.9t) \|\tdelta_\mu\|_2
\le (1+\varepsilon_{mp})/0.9 \varepsilon
\le 1.2 \varepsilon,
\end{align*}
\else
\begin{align*}
\| \tilde{s}^{-1} \tdelta_s \|_2 
&= 
\| \frac{1}{\sqrt{\tilde{X}\tilde{S}}} \tilde{P} \frac{\tdelta_\mu}{\sqrt{\tilde{X}\tilde{S}}} \|_2
\le 
\sqrt{(1+\varepsilon_{mp})/(0.9t)} \| \tilde{P} \frac{\tdelta_\mu}{\sqrt{\tilde{X}\tilde{S}}} \|_2\\
&\le
(1+\varepsilon_{mp})/(0.9t) \|\tdelta_\mu\|_2
\le (1+\varepsilon_{mp})/0.9 \varepsilon
\le 1.2 \varepsilon,
\end{align*}
\fi
where we used $\|\tdelta_\mu\| \le \varepsilon t$ 
from \Cref{lem:delta_length}.
The proof for $\| \tilde{x}^{-1} \tdelta_x \|_2 \le 1.2 \varepsilon$ is identical as $\I - \tilde{P}$ is also a projection matrix.

As $x \approx_{\varepsilon_{mp}} \tilde{x}$,
$s \approx_{2\varepsilon_{mp}} \tilde{s}$ we have
$\| s^{-1} \tdelta_s \|_2 \le (1-\varepsilon_{mp})^{-1} \| \tilde{s}^{-1} \tdelta_s \|_2 \le 1.2 \varepsilon$,
$\| x^{-1} \tdelta_x \|_2 \le (1-\varepsilon_{mp})^{-1}\| \tilde{x}^{-1} \tdelta_x \|_2 \le 1.2 \varepsilon$ via the same proof.

\end{proof}

With this we can now prove \Cref{lem:change_mu_u}. 
We split the proof into two separate corollaries: 
one for $\mu$ and one for $u$.

\ifdefined\SODAversion
\begin{corollary}\label{lem:change_mu}
$\|\mu^{-1}(\mu^\new - \mu) \| \le 2.5\varepsilon$,\\
$\|(\mu/t)^{-1}(\mu^\new/t^\new - \mu/t) \| \le 3\varepsilon$
\end{corollary}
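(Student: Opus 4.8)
The plan is to derive both inequalities directly from the multiplicative‑change bounds on $x$ and $s$ established in \Cref{lem:change_sx}, since $\mu^\new=(x+\tdelta_x)(s+\tdelta_s)$ is simply a product of perturbed vectors and $t^\new$ differs from $t$ by a tiny factor.

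First I would apply \Cref{lem:product_change} with $v=x$, $w=s$, $\delta_v=\tdelta_x$, $\delta_w=\tdelta_s$ (so $x^\new=x+\tdelta_x$, $s^\new=s+\tdelta_s$, $\mu^\new=x^\new s^\new$), which gives
\[
\|\mu^{-1}(\mu^\new-\mu)\|_2=\Big\|\tfrac{(x+\tdelta_x)(s+\tdelta_s)}{xs}-1\Big\|_2\le \|x^{-1}\tdelta_x\|_2+\|s^{-1}\tdelta_s\|_2+\|x^{-1}\tdelta_x\|_2\,\|s^{-1}\tdelta_s\|_2 .
\]
By \Cref{lem:change_sx} the first two terms are each at most $1.2\varepsilon$, and the product term is at most $(1.2\varepsilon)^2=1.44\varepsilon^2$, which is bounded by $0.1\varepsilon$ since $\varepsilon\le 1/(1500\ln n)$. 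Summing yields $\|\mu^{-1}(\mu^\new-\mu)\|_2\le 2.4\varepsilon+0.1\varepsilon=2.5\varepsilon$, the first claim.

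For the second claim I would set the scalar $c:=t/t^\new$ and observe that the $i$-th entry of $(\mu/t)^{-1}(\mu^\new/t^\new-\mu/t)$ equals $c\,\mu^\new_i/\mu_i-1=c\,(\mu^\new_i/\mu_i-1)+(c-1)$. Writing the constant vector with value $c-1$ and using the triangle inequality (its $\ell_2$-norm being $|c-1|\sqrt n$),
\[
\|(\mu/t)^{-1}(\mu^\new/t^\new-\mu/t)\|_2\le c\,\|\mu^{-1}(\mu^\new-\mu)\|_2+|c-1|\sqrt n .
\]
Since $t^\new=(1-\tfrac{\varepsilon}{3\sqrt n})t$ we have $c-1=\tfrac{\varepsilon/(3\sqrt n)}{1-\varepsilon/(3\sqrt n)}$, so (using that $\varepsilon/(3\sqrt n)$ is tiny) $c\le 1+o(1)$ and $|c-1|\sqrt n<0.35\varepsilon$; combined with the first bound this gives at most $c\cdot 2.5\varepsilon+0.35\varepsilon<3\varepsilon$.

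The only point requiring care is the constant bookkeeping — checking $2.4\varepsilon+1.44\varepsilon^2\le 2.5\varepsilon$ and $c\cdot 2.5\varepsilon+0.35\varepsilon\le 3\varepsilon$ for all admissible $\varepsilon\le 1/(1500\ln n)$ and $n\ge 2$ — which is routine. There is no genuine obstacle here, as all the analytic content lives in \Cref{lem:change_sx}; this corollary only repackages it through \Cref{lem:product_change} together with the bound on $t/t^\new-1$.
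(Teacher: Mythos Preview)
Your proof is correct and follows essentially the same approach as the paper: the first bound is obtained identically via \Cref{lem:product_change} and \Cref{lem:change_sx}, and for the second bound you unwind by hand (writing $c\,\mu^\new_i/\mu_i-1=c(\mu^\new_i/\mu_i-1)+(c-1)$) what the paper does by invoking \Cref{lem:inverse_change} on $t^{-1}\mathbf{1}_n$ followed by \Cref{lem:product_change} on the product $\mu\cdot(t^{-1}\mathbf{1}_n)$. The two routes yield the same constants and the difference is purely cosmetic.
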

\else
\begin{corollary}\label{lem:change_mu}
$\|\mu^{-1}(\mu^\new - \mu) \| \le 2.5\varepsilon$, 
$\|(\mu/t)^{-1}(\mu^\new/t^\new - \mu/t) \| \le 3\varepsilon$
\end{corollary}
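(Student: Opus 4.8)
The plan is to reduce both inequalities to the multiplicative bounds on $\tdelta_x,\tdelta_s$ already proven in \Cref{lem:change_sx}, namely $\|x^{-1}\tdelta_x\|_2 \le 1.2\varepsilon$ and $\|s^{-1}\tdelta_s\|_2 \le 1.2\varepsilon$. Since $\mu = xs$ and $\mu^\new = (x+\tdelta_x)(s+\tdelta_s)$ entry-wise, the quantity $\mu^{-1}(\mu^\new-\mu)$ is precisely the multiplicative change of a product of two vectors whose individual multiplicative changes are controlled in $\ell_2$, so \Cref{lem:product_change} applies directly with $v=x$, $w=s$, $\delta_v=\tdelta_x$, $\delta_w=\tdelta_s$. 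This yields
$$
\|\mu^{-1}(\mu^\new-\mu)\|_2
\le \|x^{-1}\tdelta_x\|_2 + \|s^{-1}\tdelta_s\|_2 + \|x^{-1}\tdelta_x\|_2\,\|s^{-1}\tdelta_s\|_2
\le 2.4\varepsilon + 1.44\varepsilon^2.
$$
Because $\varepsilon \le 1/(1500\ln n)$ is small enough that $1.44\varepsilon \le 0.1$, the right-hand side is at most $2.5\varepsilon$, giving the first claim.

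For the second inequality I would factor out the scalar change coming from $t^\new = (1-\tfrac{\varepsilon}{3\sqrt n})t$. Writing $c := t/t^\new$ and $a := \mu^\new/\mu$ (the vector whose entries are $\mu^\new_i/\mu_i$), the $i$-th entry of $(\mu/t)^{-1}(\mu^\new/t^\new - \mu/t)$ equals $a_i c - 1 = c(a_i-1) + (c-1)$, so by the triangle inequality
$$
\bigl\|(\mu/t)^{-1}(\mu^\new/t^\new - \mu/t)\bigr\|_2 \le c\,\|a-1\|_2 + |c-1|\,\sqrt n .
$$
Here $\|a-1\|_2 = \|\mu^{-1}(\mu^\new-\mu)\|_2 \le 2.5\varepsilon$ from the first part, and $c = 1/(1-\tfrac{\varepsilon}{3\sqrt n})$, so $c \le 1.01$ and $|c-1| = \tfrac{\varepsilon/(3\sqrt n)}{1-\varepsilon/(3\sqrt n)} \le \tfrac{\varepsilon}{3\cdot 0.99\,\sqrt n}$, whence $|c-1|\sqrt n \le \tfrac{\varepsilon}{2.97}$. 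Combining, the bound is at most $1.01\cdot 2.5\varepsilon + \tfrac{\varepsilon}{2.97} < 3\varepsilon$, which is the second claim.

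There is no genuine obstacle here beyond careful bookkeeping of the constants; the only points worth spelling out explicitly are that the slack $1.44\varepsilon^2 \le 0.1\varepsilon$ and $|c-1|\sqrt n \le 0.34\varepsilon$ both hold under the standing assumption $\varepsilon \le 1/(1500\ln n)$, so that the stated bounds $2.5\varepsilon$ and $3\varepsilon$ are clearly justified. Note that \Cref{pro:mu_eps} (i.e. $\mu\approx_{0.1}t$) is only used upstream, inside \Cref{lem:delta_length} and \Cref{lem:change_sx}, and is not needed again in this corollary.
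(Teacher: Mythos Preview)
Your proof is correct. The first inequality is handled exactly as in the paper: apply \Cref{lem:product_change} to $\mu=xs$ with the bounds from \Cref{lem:change_sx}.

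For the second inequality you take a slightly different route. The paper treats $\mu/t$ as the product of the vector $\mu$ with the constant vector $t^{-1}\mathbf{1}_n$, invokes \Cref{lem:inverse_change} to bound the multiplicative change of $t^{-1}\mathbf{1}_n$ by roughly $0.35\varepsilon$, and then applies \Cref{lem:product_change} once more to combine this with the $2.5\varepsilon$ bound on $\mu$. Your argument is more elementary: you exploit that $t/t^\new$ is a \emph{scalar}, write $a_ic-1 = c(a_i-1)+(c-1)$, and use the triangle inequality directly. This sidesteps both auxiliary lemmas for the second step and makes the $\sqrt{n}$ factor (coming from the constant vector $(c-1)\mathbf{1}_n$) explicit. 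Either way the arithmetic lands comfortably below $3\varepsilon$; your decomposition is arguably cleaner for this particular step since the scalar nature of $t$ makes the general vector lemmas unnecessary.
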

\fi

\begin{proof}
The first claim follows from $\mu = xs$, 
$\mu^\new = (x+\tdelta_x)(s+\tdelta_s)$
and $\|x^{-1}\tdelta_x\|,\|s^{-1}\tdelta_s\| \le 1.2\varepsilon$, 
and applying \Cref{lem:product_change}:
\ifdefined\SODAversion
\begin{align*}
&~\|\mu^{-1}(\mu^\new - \mu) \| \\
\le&~
\|x^{-1}\tdelta_x\|_2 + \|s^{-1}\tdelta_s\|_2 + \|x^{-1}\tdelta_x\|_2\|s^{-1}\tdelta_s\|_2 \\ 
\le&~ 
1.2 \varepsilon + 1.2 \varepsilon + (1.2 \varepsilon)^2 \le 2.5 \varepsilon
\end{align*}
\else
$$
\|\mu^{-1}(\mu^\new - \mu) \| \le \|x^{-1}\tdelta_x\|_2 + \|s^{-1}\tdelta_s\|_2 + \|x^{-1}\tdelta_x\|_2\|s^{-1}\tdelta_s\|_2 \le 1.2 \varepsilon + 1.2 \varepsilon + (1.2 \varepsilon)^2 \le 2.5 \varepsilon
$$
\fi

The second claim is implied by \Cref{lem:product_change} and \Cref{lem:inverse_change}:
\Cref{lem:inverse_change} allows us to describe 
how much $(t^\new)^{-1} \cdot \mathbf{1}_n$ changed 
compared to $t^{-1} \cdot \mathbf{1}_n$:
\ifdefined\SODAversion
\begin{align*}
&~ \|t \cdot \mathbf{1}_n ((\frac{1}{t^\new} - \frac{1}{t}) \cdot \mathbf{1}_n)\|_2 \\
\le&~
\frac{\|t^{-1} \cdot \mathbf{1}_n ((t^\new - t) \cdot \mathbf{1}_n)\|_2}{1- \|t^{-1} \cdot \mathbf{1}_n ((t^\new - t) \cdot \mathbf{1}_n)\|_2} \\
\le&~
\frac{\sqrt{n}|(t^\new - t)/t|}{1- \sqrt{n}|(t^\new - t)/t|}
=
\frac{\sqrt{n}\frac{\varepsilon}{3\sqrt{n}}}{1- \frac{\varepsilon}{3\sqrt{n}}}
\le
0.35 \varepsilon
\end{align*}
\else
$$
\|t \cdot \mathbf{1}_n ((\frac{1}{t^\new} - \frac{1}{t}) \cdot \mathbf{1}_n)\|_2
\le
\frac{\|t^{-1} \cdot \mathbf{1}_n ((t^\new - t) \cdot \mathbf{1}_n)\|_2}{1- \|t^{-1} \cdot \mathbf{1}_n ((t^\new - t) \cdot \mathbf{1}_n)\|_2}
\le 
\frac{\sqrt{n}|(t^\new - t)/t|}{1- \sqrt{n}|(t^\new - t)/t|}
=
\frac{\sqrt{n}\frac{\varepsilon}{3\sqrt{n}}}{1- \frac{\varepsilon}{3\sqrt{n}}}
\le
0.35 \varepsilon
$$
\fi
Then \Cref{lem:inverse_change} tells us
$
\|(\mu/t)^{-1}(\mu^\new/t^\new - \mu/t) \|
\le
0.35 \varepsilon + 2.5\varepsilon + (0.35 \cdot 2.5) \varepsilon^2
\le 3 \varepsilon. 
$
\end{proof}

Likewise, the multiplicative change of $u := \frac{x}{s}$ can be bounded as follows:

\begin{corollary}\label{lem:change_u}
Let $u := \frac{x}{s}$, then
$\| (u^\new - u) / u \|_2 \le 3 \varepsilon$
\end{corollary}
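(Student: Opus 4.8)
The plan is to reduce the multiplicative change of the quotient $u = x/s$ to the already-controlled multiplicative changes of $x$ and of $1/s$. First I would recall from \Cref{lem:change_sx} that, writing $x^\new = x + \tdelta_x$ and $s^\new = s + \tdelta_s$, we have $\|x^{-1}\tdelta_x\|_2 \le 1.2\varepsilon$ and $\|s^{-1}\tdelta_s\|_2 \le 1.2\varepsilon$. Next I would apply \Cref{lem:inverse_change} with $v = s$, $\delta_v = \tdelta_s$ to control the multiplicative change of the reciprocal:
$$
\left\| \frac{(s+\tdelta_s)^{-1} - s^{-1}}{s^{-1}} \right\|_2
\le \frac{\|s^{-1}\tdelta_s\|_2}{1 - \|s^{-1}\tdelta_s\|_2}
\le \frac{1.2\varepsilon}{1 - 1.2\varepsilon} \le 1.25\varepsilon,
$$
where the last step uses that $\varepsilon \le 1/(1500\ln n)$ is tiny.

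Then I would write $u^\new = x^\new \cdot (s^\new)^{-1}$ and invoke \Cref{lem:product_change} with the pair of vectors $x$ and $s^{-1}$ (perturbed to $x^\new$ and $(s^\new)^{-1}$ respectively), noting that $x \cdot s^{-1} = u$ and $x^\new \cdot (s^\new)^{-1} = u^\new$. This yields
\begin{align*}
\left\| \frac{u^\new}{u} - 1 \right\|_2
&\le \|x^{-1}\tdelta_x\|_2 + \left\| \frac{(s+\tdelta_s)^{-1} - s^{-1}}{s^{-1}} \right\|_2 + \|x^{-1}\tdelta_x\|_2 \cdot \left\| \frac{(s+\tdelta_s)^{-1} - s^{-1}}{s^{-1}} \right\|_2 \\
&\le 1.2\varepsilon + 1.25\varepsilon + (1.2\varepsilon)(1.25\varepsilon) \le 3\varepsilon,
\end{align*}
again using that $\varepsilon$ is small to absorb the quadratic cross term.

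This is essentially a routine composition of the two multiplicative-change lemmas, so there is no real obstacle; the only point requiring a little care is the bookkeeping of constants — in particular checking that $1.2\varepsilon/(1-1.2\varepsilon) \le 1.25\varepsilon$ over the admissible range of $\varepsilon$ and that the final sum $2.45\varepsilon + O(\varepsilon^2)$ stays below $3\varepsilon$. Everything else follows directly from \Cref{lem:change_sx}, \Cref{lem:inverse_change}, and \Cref{lem:product_change}.
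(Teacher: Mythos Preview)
Your proof is correct and follows exactly the same approach as the paper: invoke \Cref{lem:change_sx} for the bounds on $\|x^{-1}\tdelta_x\|_2$ and $\|s^{-1}\tdelta_s\|_2$, apply \Cref{lem:inverse_change} to pass from $s$ to $s^{-1}$, and then combine via \Cref{lem:product_change}. The only cosmetic difference is that the paper bounds $1.2\varepsilon/(1-1.2\varepsilon)$ by $1.4\varepsilon$ rather than your sharper $1.25\varepsilon$, but both constants comfortably yield the claimed $3\varepsilon$.
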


\begin{proof}
We have $\|x^{-1} \delta_x\|_2, \|s^{-1} \delta_s \| \le 1.2 \varepsilon$, 
see \Cref{lem:change_sx}. 
Thus $\|s((s+\delta_s)^{-1}-s^{-1})\| 
\le 1.2\varepsilon/(1-1.2\varepsilon) 
\le 1.4 \varepsilon$ 
by \Cref{lem:inverse_change}.
This leads to 
$\|u^{-1}(u^\new - u) \| 
\le 1.4\varepsilon + 1.2\varepsilon + (1.2\varepsilon)^2 
< 3 \varepsilon$, 
because of $u = x/s$ and \Cref{lem:product_change}.
\end{proof}

\subsection{Maintaining $\mu \approx t$}
\label{sub:mu_close_to_t}

In this section we prove \Cref{pro:mu_eps}, so $\mu \approx_{0.1} t$.
An alternative way to write this statement is $\|\mu/t-1\|_\infty \le 0.1$.
We prove that this norm is small, 
by showing that the potential $\Phi_\lambda(\mu/t-1)$ stays below a certain threshold.
The choice of $\Phi_\lambda(x) = \sum_{i=1}^n \cosh(x_i)$ is motivated by the following lemma:
\begin{lemma}\label{lem:bound_infty_via_phi}
$\|\mu/t-1\|_\infty \le \frac{\ln 2\Phi_\lambda(\mu/t-1)}{\lambda}$
\end{lemma}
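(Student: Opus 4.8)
The plan is to isolate the largest-magnitude coordinate and bound $\Phi_\lambda$ from below by the single $\cosh$ term it contributes. First I would set $x := \mu/t - 1$ and pick an index $j$ with $|x_j| = \|x\|_\infty$. Since $\cosh$ is even and every summand $\cosh(\lambda x_i)$ is positive (indeed $\ge 1$), dropping all but the $j$-th term gives
\[
\Phi_\lambda(x) = \sum_{i=1}^n \cosh(\lambda x_i) \ge \cosh(\lambda x_j) = \cosh(\lambda \|x\|_\infty).
\]

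Next I would lower-bound the surviving $\cosh$: from $\cosh(y) = (e^y + e^{-y})/2 \ge e^y/2$ for $y \ge 0$ we get $\cosh(\lambda\|x\|_\infty) \ge \tfrac12 e^{\lambda \|x\|_\infty}$, hence $2\Phi_\lambda(x) \ge e^{\lambda\|x\|_\infty}$. Taking logarithms (valid since $2\Phi_\lambda(x) \ge 2 > 0$) yields $\ln(2\Phi_\lambda(x)) \ge \lambda \|x\|_\infty$, and dividing by $\lambda > 0$ gives exactly the claimed bound $\|\mu/t - 1\|_\infty \le \tfrac{1}{\lambda}\ln(2\Phi_\lambda(\mu/t-1))$.

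There is no real obstacle here — the only things to keep straight are that all $\cosh$ terms are nonnegative so that dropping them is legitimate, and that $\Phi_\lambda(x) \ge 1$ so the logarithm is of a quantity exceeding $1$ and the final right-hand side is nonnegative (consistent with $\|x\|_\infty \ge 0$). The argument is a one-line estimate once the maximal coordinate is singled out, so I would present it in full rather than sketching it.
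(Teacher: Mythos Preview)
Your proof is correct and essentially identical to the paper's: both isolate the maximal coordinate, use $\Phi_\lambda(x) \ge \cosh(\lambda\|x\|_\infty) \ge \tfrac12 e^{\lambda\|x\|_\infty}$, and take logarithms. The paper just compresses this into a single line.
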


\begin{proof}
$\Phi_\lambda(x) 
= \frac{1}{2}\sum_{i=1}^n e^{\lambda x_i} + e^{-\lambda x_i} 
\ge \frac{1}{2} e^{\lambda \|x\|_\infty}$,
so $\|x\|_\infty \le \frac{\ln 2\Phi_\lambda(x)}{\lambda}$.
\end{proof}

This means we must prove $\Phi_\lambda(\mu/t-1) \le 0.5 \cdot e^{0.1 \lambda} = 0.5 n^4$.
We prove this in an inductive way. More accurately, in this section we prove the following lemma.
(Note that $2n \le 0.5 n^4$ for $n > 1$.)
\begin{lemma}\label{lem:inductive_step}
If $\Phi_\lambda(\mu/t-1) \le 2n$, then $\Phi_\lambda(\mu^\new/t^\new-1)\le 2n$.
\end{lemma}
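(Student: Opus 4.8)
The plan is to follow the potential $\Phi_\lambda$ through one call of \textsc{ApproximateStep}, splitting into the regime where $\Phi_\lambda$ is large --- where the ``faulty'' approximate gradient still produces genuine descent --- and the regime where it is small --- where only a crude bound on its increase is needed. Write $\gamma:=\mu/t-1$ and $\tilde\gamma:=\tilde\mu/t-1$. The hypothesis $\Phi_\lambda(\gamma)\le 2n$ together with \Cref{lem:bound_infty_via_phi} gives $\|\gamma\|_\infty\le\ln(4n)/\lambda\le 0.1$, hence $\mu\approx_{0.1}t$, which is exactly the assumption needed to invoke \Cref{lem:error_delta}, \Cref{lem:delta_length} and \Cref{lem:change_sx} (this is also how \Cref{pro:mu_eps} follows by induction on the iterations). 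Starting from $\mu^\new_i=\mu_i+x_i\tdelta_{s,i}+s_i\tdelta_{x,i}+\tdelta_{x,i}\tdelta_{s,i}$, substituting the approximate system \eqref{eq:interior_point_system_approximate} in the form $\tilde x_i\tdelta_{s,i}+\tilde s_i\tdelta_{x,i}=\tdelta_{\mu,i}$, the splitting $\tdelta_\mu=\tdelta_t+\tdelta_\Phi$ with $\tdelta_t\approx_{\varepsilon_{mp}}(t^\new/t-1)\mu$ and $\tdelta_\Phi=-\tfrac{\varepsilon}{2}t^\new\nabla\Phi_\lambda(\tilde\gamma)/\|\nabla\Phi_\lambda(\tilde\gamma)\|_2$ (all from \Cref{lem:error_delta}), the term $(t^\new/t-1)\mu$ cancels against $\mu$ upon dividing by $t^\new$, and one obtains $\gamma^\new=\gamma+h$ with
\[
h \;=\; -\frac{\varepsilon}{2}\,\frac{\nabla\Phi_\lambda(\tilde\gamma)}{\|\nabla\Phi_\lambda(\tilde\gamma)\|_2} \;+\; e ,
\]
where $e$ collects $\tdelta_x\tdelta_s$, the discrepancies $(x-\tilde x)\tdelta_s$ and $(s-\tilde s)\tdelta_x$, and the $\varepsilon_{mp}$-error of $\tdelta_t$, all divided by $t^\new$. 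Using \Cref{lem:change_sx}, \Cref{lem:delta_length}, $\mu\approx_{0.1}t$ and $\varepsilon_{mp}\le\varepsilon$, one checks $\|e\|_2=O(\varepsilon^2)$, so $\|h\|_2\le\varepsilon/2+O(\varepsilon^2)$ and $\lambda\|h\|_\infty\le\lambda\|h\|_2=O(\lambda\varepsilon)$ is a small absolute constant.

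I would then expand $\Phi_\lambda(\gamma+h)$ to second order. Using $\cosh(\lambda(\gamma_i+h_i))=\cosh(\lambda\gamma_i)\cosh(\lambda h_i)+\sinh(\lambda\gamma_i)\sinh(\lambda h_i)$ and the elementary bounds $\cosh y\le 1+y^2$, $|\sinh y-y|\le|y|^3$ on the (small, constant) range of $|\lambda h_i|$, summation over $i$ gives
\[
\Phi_\lambda(\gamma^\new)\;\le\;\Phi_\lambda(\gamma)\;+\;\langle\nabla\Phi_\lambda(\gamma),h\rangle\;+\;\lambda^2\sum_i h_i^2\cosh(\lambda\gamma_i)\;+\;(\text{cubic remainder}).
\]
The crucial estimate is that for every $i$, $\cosh(\lambda\gamma_i)=\sqrt{1+\sinh^2(\lambda\gamma_i)}\le 1+|\sinh(\lambda\gamma_i)|\le 1+\|\nabla\Phi_\lambda(\gamma)\|_2/\lambda$, so $\lambda^2\sum_i h_i^2\cosh(\lambda\gamma_i)\le\lambda^2\|h\|_2^2(1+\|\nabla\Phi_\lambda(\gamma)\|_2/\lambda)=O(\lambda^2\varepsilon^2)+O(\lambda\varepsilon^2)\|\nabla\Phi_\lambda(\gamma)\|_2$, and a similar computation bounds the cubic remainder by $O(\lambda^2\varepsilon^3)\|\nabla\Phi_\lambda(\gamma)\|_2$; also $|\langle\nabla\Phi_\lambda(\gamma),e\rangle|\le\|\nabla\Phi_\lambda(\gamma)\|_2\|e\|_2=O(\varepsilon^2)\|\nabla\Phi_\lambda(\gamma)\|_2$. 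Thus, up to an additive $O(\lambda^2\varepsilon^2)=O(1)$ term, everything except $-\tfrac{\varepsilon}{2}\langle\nabla\Phi_\lambda(\gamma),\nabla\Phi_\lambda(\tilde\gamma)\rangle/\|\nabla\Phi_\lambda(\tilde\gamma)\|_2$ is at most a small multiple of $\varepsilon\|\nabla\Phi_\lambda(\gamma)\|_2$.

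The core is then to control the inner product $\langle\nabla\Phi_\lambda(\gamma),\nabla\Phi_\lambda(\tilde\gamma)\rangle$ of the true and faulty gradients, which is the content of \Cref{lem:gradient_direction}. Since $\nabla\Phi_\lambda(z)_i=\lambda\sinh(\lambda z_i)$ and $\mu\approx_{\varepsilon_{mp}}\tilde\mu$ with $\mu\approx_{0.1}t$ give $|\gamma_i-\tilde\gamma_i|=O(\varepsilon_{mp})$, I would split the indices into a set $B$ where $|\gamma_i|$ lies outside an $O(\varepsilon_{mp})$-window around $0$ and a set $S$ inside it. On $B$, $\gamma_i$ and $\tilde\gamma_i$ have the same sign and, using that $\lambda\varepsilon_{mp}$ is a small constant, $\sinh(\lambda\gamma_i)\sinh(\lambda\tilde\gamma_i)\ge\rho\,\sinh^2(\lambda\gamma_i)$ for an absolute constant $\rho>0$; on $S$ each product has absolute value at most $\sinh^2(O(\lambda\varepsilon_{mp}))$, so $S$ contributes at most $O(\lambda^2\varepsilon_{mp}^2)\,n$ in absolute value both to the inner product and to $\lambda^2\sum_i\sinh^2(\lambda\gamma_i)$. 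Combined with the Cauchy--Schwarz bound $\|\nabla\Phi_\lambda(\gamma)\|_2^2=\lambda^2\sum_i\sinh^2(\lambda\gamma_i)\ge\lambda^2(\Phi_\lambda(\gamma)^2/n-n)$, this shows that once $\Phi_\lambda(\gamma)\ge\Theta$ for a threshold of the form $\Theta=cn$ with $1<c<2$ (concretely $\Theta=\tfrac{3}{2}n$), the $S$-terms are a negligible fraction and $\langle\nabla\Phi_\lambda(\gamma),\nabla\Phi_\lambda(\tilde\gamma)\rangle\ge\Omega(\|\nabla\Phi_\lambda(\gamma)\|_2^2)$; together with $\|\nabla\Phi_\lambda(\tilde\gamma)\|_2=O(\|\nabla\Phi_\lambda(\gamma)\|_2)$ (same split) this yields $\langle\nabla\Phi_\lambda(\gamma),h\rangle\le-\Omega(\varepsilon)\|\nabla\Phi_\lambda(\gamma)\|_2$.

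Finally, I would case split on $\Phi_\lambda(\gamma)$. If $\Phi_\lambda(\gamma)\ge\Theta$, the estimates above give $\Phi_\lambda(\gamma^\new)\le\Phi_\lambda(\gamma)-\Omega(\varepsilon)\|\nabla\Phi_\lambda(\gamma)\|_2+O(1)$, and since $\Phi_\lambda(\gamma)\ge\Theta=\Omega(n)$ forces $\|\nabla\Phi_\lambda(\gamma)\|_2=\Omega(\lambda\sqrt n)$, the descent term beats the $O(1)$ and $\Phi_\lambda(\gamma^\new)\le\Phi_\lambda(\gamma)\le 2n$. If $\Phi_\lambda(\gamma)<\Theta$, I bound the step crudely: even when the faulty gradient points the wrong way it contributes at most $+\tfrac{\varepsilon}{2}\|\nabla\Phi_\lambda(\gamma)\|_2(1+o(1))$ to $\langle\nabla\Phi_\lambda(\gamma),h\rangle$, and with $\|\nabla\Phi_\lambda(\gamma)\|_2\le\lambda\Phi_\lambda(\gamma)<\lambda\Theta$ and the Hessian bound the total one-step increase of $\Phi_\lambda$ is $O(\varepsilon\lambda)\Theta$, so $\Phi_\lambda(\gamma^\new)\le\Theta(1+O(\varepsilon\lambda))<2n$ for the stated choices $\lambda=40\ln n$, $\varepsilon=\varepsilon_{mp}=1/(1500\ln n)$. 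The step I expect to be the main obstacle is the gradient-direction estimate: one must check carefully that the coordinates inside the $(1\pm\varepsilon_{mp})$-window --- the only place the faulty gradient can disagree with the true one in sign --- contribute strictly lower-order terms, which is precisely why the threshold $\Theta$ must be kept strictly between $n$ and $2n$ and why the argument relies on the numerical slack in the parameters; the remaining estimates are routine bookkeeping via \Cref{lem:product_change} and \Cref{lem:inverse_change}.
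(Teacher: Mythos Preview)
Your proposal is correct and follows the same overall shape as the paper --- write $\gamma^\new=\gamma+h$, expand $\Phi_\lambda$ to second order, control the first-order term via a ``faulty gradient still points right when $\Phi$ is large'' estimate, and absorb the Hessian and error terms --- but the technical execution differs in two places.

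First, for the gradient-direction step (\Cref{lem:gradient_direction}) you split coordinates into a ``big'' set $B$ where $\gamma_i$ and $\tilde\gamma_i$ share a sign and a ``small'' set $S$ near zero, arguing $S$ contributes only $O(\lambda^2\varepsilon_{mp}^2 n)$. The paper instead bounds the \emph{difference} $\|\nabla\Phi_\lambda(\gamma)-\nabla\Phi_\lambda(\tilde\gamma)\|_2$ globally via the elementary inequality $|\sinh(x+y)-\sinh(x)|\le\cosh(x)(e^{|y|}-1)$, combined with part~3 of \Cref{lem:app:taylor} (namely $(\sum_i\lambda^2\cosh^2(\lambda r_i))^{1/2}\le\lambda\sqrt n+\|\nabla\Phi_\lambda(r)\|_2$). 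This yields directly
\[
\langle\nabla\Phi_\lambda(\gamma),\,-\nabla\Phi_\lambda(\tilde\gamma)/\|\nabla\Phi_\lambda(\tilde\gamma)\|_2\rangle
\;\le\;-0.9\,\|\nabla\Phi_\lambda(\gamma)\|_2+2.5\lambda^2\varepsilon_{mp}\sqrt n,
\]
with no coordinate split and no need to separately compare $\|\nabla\Phi_\lambda(\tilde\gamma)\|_2$ to $\|\nabla\Phi_\lambda(\gamma)\|_2$.

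Second, the paper packages everything into a single recursion (\Cref{lem:decrease_Phi}),
\[
\Phi_\lambda(\gamma^\new)\;\le\;\Phi_\lambda(\gamma)\;-\;\tfrac{\varepsilon\lambda}{3\sqrt n}\bigl(\Phi_\lambda(\gamma)-0.5n\bigr),
\]
valid in all regimes; \Cref{lem:inductive_step} is then two lines (if $\Phi_\lambda>0.5n$ the potential drops; if $\Phi_\lambda\le0.5n$ the increase is $\le\tfrac{\varepsilon\lambda}{6}\sqrt n<0.005\sqrt n$). Your explicit case split at $\Theta=\tfrac{3}{2}n$ arrives at the same conclusion but is heavier. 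The paper's route is shorter and avoids the delicate boundary issue you flag (coordinates just outside the $\varepsilon_{mp}$-window, where $\rho$ could degrade); the sinh-difference bound sidesteps that entirely.
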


In order to show that \Cref{lem:inductive_step} is true, 
we must first bound the impact of all the approximations.
We start by bounding the error that we incur 
based on the approximation $\mu^\new \approx \mu + \tdelta_\mu$, 
when in reality we have $\mu^\new = (x+\tdelta_x)(s+\tdelta_s) = \mu + \tdelta_\mu + \tdelta_x\tdelta_s$.

\begin{lemma}\label{lem:bound_error}
For $\mu^\new = (x + \tdelta_x)(s + \tdelta_s)$ we have
$
\| \mu^\new - \mu - \tdelta_\mu \|_2 \le 6 t \varepsilon^2.
$
\end{lemma}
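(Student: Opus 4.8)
The plan is to expand $\mu^\new$ entrywise and exploit that \Cref{alg:approximate_step} solves the \emph{approximate} linear system \eqref{eq:interior_point_system_approximate} \emph{exactly}, so that the defect $\mu^\new-\mu-\tdelta_\mu$ collapses to second-order corrections only. Writing all products entrywise,
$$
\mu^\new=(x+\tdelta_x)(s+\tdelta_s)=xs+\tilde{x}\tdelta_s+\tilde{s}\tdelta_x+(x-\tilde{x})\tdelta_s+(s-\tilde{s})\tdelta_x+\tdelta_x\tdelta_s,
$$
and since $xs=\mu$ and, by \Cref{lem:approximate_solution}, $\tilde{X}\tdelta_s+\tilde{S}\tdelta_x=\tdelta_\mu$ (i.e. $\tilde{x}\tdelta_s+\tilde{s}\tdelta_x=\tdelta_\mu$ entrywise), this reduces to
$$
\mu^\new-\mu-\tdelta_\mu=(x-\tilde{x})\tdelta_s+(s-\tilde{s})\tdelta_x+\tdelta_x\tdelta_s.
$$
It therefore suffices to bound the $\ell_2$-norm of each of these three terms by $O(t\varepsilon^2)$.

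For the two cross terms I would use Hölder together with the approximation quality $x\approx_{\varepsilon_{mp}}\tilde{x}$, $s\approx_{2\varepsilon_{mp}}\tilde{s}$ from \Cref{lem:error_delta}. Using the entrywise identity $\tilde{x}\tilde{s}=\tilde{\mu}$ one rewrites $\tilde{x}_i\tdelta_{s,i}=\tilde{\mu}_i(\tilde{s}_i^{-1}\tdelta_{s,i})$, so $\|(x-\tilde{x})\tdelta_s\|_2\le\varepsilon_{mp}\|\tilde{\mu}\|_\infty\|\tilde{s}^{-1}\tdelta_s\|_2$; here $\|\tilde{\mu}\|_\infty\le 1.2t$ follows from $\mu\approx_{0.1}t$ (\Cref{pro:mu_eps}) together with $\mu\approx_{\varepsilon_{mp}}\tilde{\mu}$, while $\|\tilde{s}^{-1}\tdelta_s\|_2\le 1.2\varepsilon$ is \Cref{lem:change_sx}, and $\varepsilon_{mp}\le\varepsilon$; the analogous bound for $(s-\tilde{s})\tdelta_x$ picks up one extra factor $2$ from $s\approx_{2\varepsilon_{mp}}\tilde{s}$. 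For the genuinely quadratic term I would use that $\tdelta_x=\tfrac{\tilde{x}}{\sqrt{\tilde{\mu}}}(\I-\tilde{P})\tfrac{\tdelta_\mu}{\sqrt{\tilde{\mu}}}$ and $\tdelta_s=\tfrac{\tilde{s}}{\sqrt{\tilde{\mu}}}\tilde{P}\tfrac{\tdelta_\mu}{\sqrt{\tilde{\mu}}}$ (as in \Cref{lem:approximate_solution}), so entrywise $\tdelta_x\tdelta_s=ab$ with $a:=(\I-\tilde{P})\tfrac{\tdelta_\mu}{\sqrt{\tilde{\mu}}}$ and $b:=\tilde{P}\tfrac{\tdelta_\mu}{\sqrt{\tilde{\mu}}}$, the prefactor $\tilde{x}\tilde{s}/\tilde{\mu}$ being identically $1$. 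Since $\tilde{P}$ is an orthogonal projection, $\|a\|_2^2+\|b\|_2^2=\|\tdelta_\mu/\sqrt{\tilde{\mu}}\|_2^2$, hence by AM--GM $\|\tdelta_x\tdelta_s\|_2\le\|a\|_2\|b\|_2\le\tfrac12\|\tdelta_\mu/\sqrt{\tilde{\mu}}\|_2^2\le\tfrac12\,\|\tdelta_\mu\|_2^2/\min_i\tilde{\mu}_i=O(t\varepsilon^2)$, using $\|\tdelta_\mu\|_2\le\varepsilon t$ (\Cref{lem:delta_length}) and $\min_i\tilde{\mu}_i\ge 0.89t$ (again from \Cref{pro:mu_eps} and $\mu\approx_{\varepsilon_{mp}}\tilde{\mu}$). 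Summing the three contributions then gives a constant below $6$.

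The only real obstacle is the bookkeeping of constants so that the three pieces add up to at most $6$: one has to be slightly careful that the $1.2$'s from \Cref{lem:change_sx}, the extra factor $2$ from $s\approx_{2\varepsilon_{mp}}\tilde{s}$, and the ratios $\|\tilde{\mu}\|_\infty/\min_i\tilde{\mu}_i$ combine favorably; in particular, using the AM--GM bound $\|a\|_2\|b\|_2$ for $\tdelta_x\tdelta_s$ (rather than a crude $\|a\|_\infty\|b\|_2$) is what keeps the quadratic contribution comfortably small. Everything else is a routine application of Hölder's inequality plus the entrywise identity $\tilde{x}\tilde{s}=\tilde{\mu}$, which is precisely what converts the earlier bounds on the \emph{relative} changes $\tilde{s}^{-1}\tdelta_s$ and $\tilde{x}^{-1}\tdelta_x$ into bounds on the raw products appearing in the defect.
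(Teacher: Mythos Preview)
Your proof is correct and follows essentially the same route as the paper: the same entrywise expansion, the same identification $\tilde{x}\tdelta_s+\tilde{s}\tdelta_x=\tdelta_\mu$ via \Cref{lem:approximate_solution}, and H\"older plus the approximation guarantees for the cross terms. The one place you diverge slightly is the pure second-order term $\tdelta_x\tdelta_s$: the paper simply writes $\|\mu^{-1}\tdelta_x\tdelta_s\|_2=\|x^{-1}\tdelta_x\cdot s^{-1}\tdelta_s\|_2\le\|x^{-1}\tdelta_x\|_2\,\|s^{-1}\tdelta_s\|_2\le(1.2\varepsilon)^2$ using \Cref{lem:change_sx} directly, whereas you go back to the explicit projection formulas and use orthogonality plus AM--GM. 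Your route is correct (and gives a marginally smaller constant), but the paper's is shorter since it reuses the already-established relative-change bounds without reopening the projection structure.
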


\begin{proof}
We can expand the term for $\mu^\new$ as follows:
\ifdefined\SODAversion
\begin{align*}
\mu^\new 
&= (x + \tdelta_x)(s + \tdelta_s)
= xs + x\tdelta_s + s\tdelta_x + \tdelta_x \tdelta_s \\
&= \mu + \underbrace{\tilde{x}\tdelta_s + \tilde{s}\tdelta_x}_{\tdelta_\mu}
+ (x-\tilde{x})\tdelta_s + (s-\tilde{s})\tdelta_x + \tdelta_x \tdelta_s.
\end{align*}
Hence the error (relative to $\mu$) can be bounded as follows:
\begin{align*}
&~\| \mu^{-1} (\mu^\new - \mu - \tdelta_\mu) \|_2 \\
=&~
  \| \mu^{-1} ((x-\tilde{x})\tdelta_s 
  + (s-\tilde{s})\tdelta_x + \tdelta_x \tdelta_s) \|_2 \\
\le&~
  \| \mu^{-1} (x - \tilde{x}) s \cdot s^{-1} \tdelta_s\|_2
+ \| \mu^{-1} (s - \tilde{s}) x \cdot x^{-1} \tdelta_x\|_2 \\
&~+ \| \mu^{-1} \tdelta_x \tdelta_s\|_2 \\
\le&~
  \| \mu^{-1} (x - \tilde{x}) s\|_\infty \| s^{-1} \tdelta_s\|_2 \\
&~+ \| \mu^{-1} (s - \tilde{s}) x\|_\infty \| x^{-1} \tdelta_x\|_2 
  + \| x^{-1} \tdelta_x s^{-1} \tdelta_s\|_2 \\
\le&~
  \frac{\varepsilon_{mp}}{1-\varepsilon_{mp}} \| s^{-1} \tdelta_s\|_2 
  + \frac{2\varepsilon_{mp}}{1-2\varepsilon_{mp}} \| x^{-1} \tdelta_x\|_2 \\
&~+ \| x^{-1} \tdelta_x\|_2 \| s^{-1} \tdelta_s\|_2 \\
\le&~
  3.7\varepsilon_{mp}\varepsilon
+ (1.2\varepsilon)^2
\end{align*}
\else
$$\mu^\new 
= (x + \tdelta_x)(s + \tdelta_s)
= xs + x\tdelta_s + s\tdelta_x + \tdelta_x \tdelta_s
= \mu + \underbrace{\tilde{x}\tdelta_s + \tilde{s}\tdelta_x}_{\tdelta_\mu}
+ (x-\tilde{x})\tdelta_s + (s-\tilde{s})\tdelta_x + \tdelta_x \tdelta_s.$$
Hence the error (relative to $\mu$) can be bounded as follows:
\begin{align*}
\| \mu^{-1} (\mu^\new - \mu - \tdelta_\mu) \|_2
&=
  \| \mu^{-1} ((x-\tilde{x})\tdelta_s + (s-\tilde{s})\tdelta_x + \tdelta_x \tdelta_s) \|_2 \\
&\le
  \| \mu^{-1} (x - \tilde{x}) s \cdot s^{-1} \tdelta_s\|_2
+ \| \mu^{-1} (s - \tilde{s}) x \cdot x^{-1} \tdelta_x\|_2
+ \| \mu^{-1} \tdelta_x \tdelta_s\|_2 \\
&\le
  \| \mu^{-1} (x - \tilde{x}) s\|_\infty \| s^{-1} \tdelta_s\|_2
+ \| \mu^{-1} (s - \tilde{s}) x\|_\infty \| x^{-1} \tdelta_x\|_2
+ \| x^{-1} \tdelta_x s^{-1} \tdelta_s\|_2 \\
&\le
  \frac{\varepsilon_{mp}}{1-\varepsilon_{mp}} \| s^{-1} \tdelta_s\|_2 + \frac{2\varepsilon_{mp}}{1-2\varepsilon_{mp}} \| x^{-1} \tdelta_x\|_2
+ \| x^{-1} \tdelta_x\|_2 \| s^{-1} \tdelta_s\|_2 \\
&\le
  3.7\varepsilon_{mp}\varepsilon
+ (1.2\varepsilon)^2
\end{align*}
\fi
For the fourth line we used 
$\mu = xs$, $x \approx_{\varepsilon_{mp}} \tilde{x}$, $s \approx_{2\varepsilon_{mp}} \tilde{s}$,
which implies (for example)
$\mu^{-1}(x-\tilde{x})s = x^{-1}(x-\tilde{x}) \le x^{-1} \varepsilon_{mp} \tilde{x} \le \frac{\varepsilon_{mp}}{1-\varepsilon_{mp}}$.
The last line uses \Cref{lem:change_sx}.

By exploiting $\mu \approx_{0.1} t$ and $\varepsilon_{mp} \le \varepsilon$, 
we get $\|\mu^\new - \mu - \tdelta_\mu \|_2 \le 6 t \varepsilon^2$.

\end{proof}

Another source of error is that $\tdelta_\Phi$ and $\delta_\Phi$ 
(which depend on $\nabla\Phi_\lambda(\tilde{\mu}/t-1)$ and $\nabla \Phi_\lambda(\mu/t-1)$)
might point in two completely different directions. 
This issue was outlined in the overview \Cref{sec:outline:modified_central_path}, 
where we claimed that for $\Phi_\lambda(\mu/t-1)$ large enough, 
the approximate gradient $\nabla\Phi_\lambda(\tilde{\mu}/t-1)$ does point in the same direction as $\nabla \Phi_\lambda(\mu/t-1)$.
In order to prove this claim, we require some properties of $\Phi_\lambda(\cdot)$.

\begin{lemma}[{\cite{CohenLS19}}]
\label{lem:app:taylor}
Let $\Phi_\lambda(x) = \sum_{i=1}^n \cosh(\lambda x_i)$, then
\begin{enumerate}
\item For any $\| v \|_\infty \le 1 / \lambda$ we have
$$
\Phi_\lambda(r+v) \le \phi_\lambda(r) + \langle \nabla \Phi_\lambda(r), v \rangle + 2 \|v\|_{\nabla^2 \phi_\lambda(r)}.
$$
\item $\| \nabla \phi_\lambda(r)\|_2 \ge \frac{\lambda}{\sqrt{n}} (\Phi_\lambda (r) - n)$
\item $(\sum_{i=1}^n \lambda^2 \Phi_\lambda(r_i)_i^2)^{0.5} \le \lambda \sqrt{n} + \| \nabla \Phi_\lambda(r)\|_2$
\end{enumerate}
\end{lemma}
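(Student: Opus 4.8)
The plan is to handle all three claims by exploiting that $\Phi_\lambda$ is \emph{separable}, so everything reduces to one-variable estimates on $\cosh(\lambda z)$ and its derivatives, combined with Cauchy--Schwarz / Minkowski to pass between $\ell_1$ and $\ell_2$ norms. The two elementary facts I would isolate first are: (a) for all real $z$, $\cosh z - 1 \le |\sinh z| \le \cosh z$, which follows from $\cosh z - |\sinh z| = e^{-|z|} \in (0,1]$; and (b) $\cosh(a+b) \le e^{|b|}\cosh a$, which follows from the addition formula together with $|\sinh a| \le \cosh a$. I would also record $\nabla \Phi_\lambda(r)_i = \lambda \sinh(\lambda r_i)$ and $\nabla^2 \Phi_\lambda(r) = \lambda^2 \diag(\cosh(\lambda r_i))$, so that $\|v\|_{\nabla^2\Phi_\lambda(r)}^2 = \sum_i \lambda^2 \cosh(\lambda r_i)\, v_i^2$ and $\|\nabla\Phi_\lambda(r)\|_2 = \lambda\big(\sum_i \sinh^2(\lambda r_i)\big)^{1/2}$. (I read the first claim as having $\|v\|^2_{\nabla^2\Phi_\lambda(r)}$ on the right-hand side; the square appears to have been dropped in the statement.)

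For the first claim I would apply Taylor's theorem with Lagrange remainder to the smooth convex function $\Phi_\lambda$: there is a point $\xi$ on the segment between $r$ and $r+v$ with $\Phi_\lambda(r+v) = \Phi_\lambda(r) + \langle \nabla\Phi_\lambda(r), v\rangle + \tfrac12 \langle v, \nabla^2\Phi_\lambda(\xi)\, v\rangle$. Since $\nabla^2\Phi_\lambda$ is diagonal and $|\xi_i - r_i| \le |v_i| \le 1/\lambda$, fact (b) gives $\cosh(\lambda\xi_i) \le e^{|\lambda(\xi_i-r_i)|}\cosh(\lambda r_i) \le e\,\cosh(\lambda r_i)$ coordinatewise, hence $\langle v, \nabla^2\Phi_\lambda(\xi)\, v\rangle \le e\,\|v\|^2_{\nabla^2\Phi_\lambda(r)}$, and $\tfrac{e}{2} \le 2$ finishes it.

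For the second claim, Cauchy--Schwarz ($\|x\|_1 \le \sqrt{n}\,\|x\|_2$) gives $\|\nabla\Phi_\lambda(r)\|_2 = \lambda\|\sinh(\lambda r)\|_2 \ge \tfrac{\lambda}{\sqrt{n}}\|\sinh(\lambda r)\|_1 = \tfrac{\lambda}{\sqrt{n}}\sum_i|\sinh(\lambda r_i)|$, and the left half of (a) bounds $\sum_i|\sinh(\lambda r_i)| \ge \sum_i(\cosh(\lambda r_i)-1) = \Phi_\lambda(r) - n$. For the third claim, the right half of (a) gives $\cosh(\lambda r_i) \le |\sinh(\lambda r_i)| + 1$, so by Minkowski $\big(\sum_i\cosh^2(\lambda r_i)\big)^{1/2} \le \big(\sum_i\sinh^2(\lambda r_i)\big)^{1/2} + \big(\sum_i 1\big)^{1/2}$; multiplying by $\lambda$ turns this into $\lambda\big(\sum_i\cosh^2(\lambda r_i)\big)^{1/2} \le \|\nabla\Phi_\lambda(r)\|_2 + \lambda\sqrt{n}$. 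None of the three steps is a genuine obstacle; the only point requiring a little care is the constant in the first claim — one has to verify that the Hessian at the intermediate point $\xi$ is within a factor $e$ of the Hessian at $r$ on every coordinate and then check $e/2 \le 2$ — while the rest is a line of $\cosh/\sinh$ algebra plus a standard norm inequality.
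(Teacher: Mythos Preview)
The paper does not supply its own proof of this lemma; it is quoted verbatim from \cite{CohenLS19} and used as a black box, so there is no in-paper argument to compare against. Your proof is correct and complete: the Taylor-with-Lagrange-remainder argument plus the coordinatewise Hessian bound $\cosh(\lambda\xi_i)\le e\cosh(\lambda r_i)$ handles the first claim (and you are right that the statement is missing a square on the Hessian norm --- the paper itself uses the squared version in the proof of \Cref{lem:decrease_Phi}), while the second and third claims follow exactly as you say from $\cosh z - 1 \le |\sinh z| \le \cosh z$ together with Cauchy--Schwarz and Minkowski respectively.
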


With these tools we can now analyze the impact 
of approximating $\nabla \Phi_\lambda(\mu/t-1)$ 
via $\nabla \Phi_\lambda(\tilde{\mu}/t-1)$.
The following lemma says that, if the potential
$\| \nabla \Phi_\lambda(\mu/t-1)\|_2$ is larger than
$(2.5/0.9)\lambda^2\varepsilon_{mp} \sqrt{n}$, 
then the approximate gradient does point in the correct direction
(i.e. the inner product with the real gradient is positive).

\ifdefined\SODAversion
\begin{lemma}\label{lem:gradient_direction}
\textsc{(Good direction for large $\Phi_\lambda(\mu/t-1)$)}
\begin{align*}
&~\langle
\nabla \Phi_\lambda(\mu/t-1), 
-\frac{\nabla \Phi_\lambda(\tilde{\mu}/t -1)}{\|\nabla \Phi_\lambda(\tilde{\mu}/t -1)\|_2}
\rangle \\
\le&~
- 0.9 \| \nabla \Phi_\lambda(\mu/t-1)\|_2
+ 2.5\lambda^2\varepsilon_{mp} \sqrt{n}
\end{align*}
\end{lemma}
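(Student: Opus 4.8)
The plan is to clear the denominator in the claimed inequality and then control how far the approximate gradient can lie from the true one in $\ell_2$-norm. Write $g := \nabla\Phi_\lambda(\mu/t-1)$ and $\tilde g := \nabla\Phi_\lambda(\tilde\mu/t-1)$, so that $g_i = \lambda\sinh(\lambda r_i)$ and $\tilde g_i = \lambda\sinh(\lambda\tilde r_i)$ with $r := \mu/t-1$ and $\tilde r := \tilde\mu/t-1$. The assertion is then equivalent to
$$\langle g,\tilde g\rangle \;\ge\; 0.9\,\|g\|_2\,\|\tilde g\|_2 \;-\; 2.5\,\lambda^2\varepsilon_{mp}\sqrt n\,\|\tilde g\|_2 .$$
(If $\tilde g=0$, i.e. $\tilde\mu=t$, then $\tdelta_\Phi=0$ and the lemma is trivial, so I would assume $\|\tilde g\|_2>0$.) I would first bound $\|\tilde g-g\|_2$ by something of order $\lambda^2\varepsilon_{mp}\sqrt n + \lambda\varepsilon_{mp}\|g\|_2$, and then split according to whether $\|g\|_2$ is small or large.

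For the closeness bound, \Cref{pro:mu_eps} gives $\mu\approx_{0.1}t$ and \Cref{lem:error_delta} gives $\mu\approx_{\varepsilon_{mp}}\tilde\mu$; together these yield $|r_i-\tilde r_i| = |\mu_i-\tilde\mu_i|/t \le \tfrac{1.1\,\varepsilon_{mp}}{1-\varepsilon_{mp}}$ for each $i$. By the mean value theorem, $\tilde g_i-g_i = \lambda^2\cosh(\lambda\xi_i)(\tilde r_i-r_i)$ for some $\xi_i$ between $r_i$ and $\tilde r_i$. Although $\cosh(\lambda\xi_i)$ can be polynomially large in $n$, I would use the elementary estimate $\cosh(a+h)\le e^{|h|}\cosh a$ to get $\cosh(\lambda\xi_i)\le e^{\lambda|r_i-\tilde r_i|}\cosh(\lambda r_i)$, and since $\lambda\varepsilon_{mp}\le 40/1500<1/30$ the prefactor is at most, say, $1.04$. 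Hence $|\tilde g_i-g_i|\le 1.2\,\lambda^2\varepsilon_{mp}\cosh(\lambda r_i)$ for every $i$; summing and using $\sum_i\cosh^2(\lambda r_i)=n+\|g\|_2^2/\lambda^2$ (which is \Cref{lem:app:taylor}(3) rearranged) gives $\|\tilde g-g\|_2 \le 1.2\,\lambda^2\varepsilon_{mp}\sqrt n + 1.2\,\lambda\varepsilon_{mp}\|g\|_2$, with $1.2\,\lambda\varepsilon_{mp}<0.04$.

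To conclude, write $\langle g,\tilde g\rangle = \|g\|_2^2 + \langle g,\tilde g-g\rangle \ge \|g\|_2^2 - \|g\|_2\,\|\tilde g-g\|_2$, and use $\big|\,\|\tilde g\|_2-\|g\|_2\,\big|\le\|\tilde g-g\|_2$. If $\|g\|_2\le 1.3\,\lambda^2\varepsilon_{mp}\sqrt n$, then Cauchy-Schwarz alone gives $-\langle g,\tilde g\rangle/\|\tilde g\|_2\le\|g\|_2\le -0.9\|g\|_2+2.5\,\lambda^2\varepsilon_{mp}\sqrt n$, since $1.9\cdot 1.3<2.5$. If instead $\|g\|_2>1.3\,\lambda^2\varepsilon_{mp}\sqrt n$, the lower bound on $\langle g,\tilde g\rangle$ above is positive, so I may substitute the upper bound $\|\tilde g\|_2\le 1.04\,\|g\|_2+1.2\,\lambda^2\varepsilon_{mp}\sqrt n$ into the denominator of $-\langle g,\tilde g\rangle/\|\tilde g\|_2$; after clearing this positive denominator, the desired inequality reduces to a quadratic inequality in the two quantities $\|g\|_2$ and $\lambda^2\varepsilon_{mp}\sqrt n$ all of whose coefficients are nonnegative, which is immediate.

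The step I expect to be the real obstacle is the relative bound on $\cosh(\lambda\xi_i)$ in the second paragraph: a priori $\tilde g$ looks like a hopeless approximation of $g$, because $\cosh(\lambda\xi_i)$ can be $n^{\Theta(1)}$ and a naive Lipschitz estimate for $\sinh$ is useless. The fix is to measure the error relative to $\cosh(\lambda r_i)$ rather than absolutely, which succeeds precisely because the parameters are calibrated so that $\lambda\varepsilon_{mp}$ is a small absolute constant — this is the one place the choices $\lambda=40\ln n$ and $\varepsilon_{mp}\le 1/(1500\ln n)$ are genuinely used. Everything else is constant-chasing, and the two-case split on the size of $\|g\|_2$ is the standard device for the regime where $\Phi_\lambda(\mu/t-1)$ is not yet large enough to force the approximate gradient into the correct halfspace.
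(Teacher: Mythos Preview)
Your proof is correct and follows essentially the same route as the paper: both arguments hinge on the bound $\|\nabla\Phi_\lambda(\mu/t-1)-\nabla\Phi_\lambda(\tilde\mu/t-1)\|_2 \le c\,\lambda\varepsilon_{mp}\bigl(\lambda\sqrt{n}+\|\nabla\Phi_\lambda(\cdot)\|_2\bigr)$ for a small constant $c$, obtained by comparing $\sinh$ at nearby points relative to $\cosh$ and using that $\lambda\varepsilon_{mp}$ is a small absolute constant.

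The one noteworthy difference is the choice of anchor. You expand $\langle g,\tilde g\rangle$ around $\|g\|_2^2$ and bound $\|g-\tilde g\|_2$ in terms of $\|g\|_2$; this forces you to relate $\|\tilde g\|_2$ back to $\|g\|_2$ at the end, hence your two-case split on the size of $\|g\|_2$. The paper instead expands around $\|\tilde g\|_2^2$, writing
\[
\Bigl\langle g,\;-\frac{\tilde g}{\|\tilde g\|_2}\Bigr\rangle \le -\|\tilde g\|_2 + \|g-\tilde g\|_2,
\]
and bounds $\|g-\tilde g\|_2$ in terms of $\|\tilde g\|_2$ (taking the $\cosh$ at $\tilde r$ rather than at $r$). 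Since the quantity being divided by is $\|\tilde g\|_2$, everything then lines up in one line without a case distinction. Your approach and the paper's buy the same conclusion; the paper's anchoring choice is just a bit slicker. Your observation that the relative $\cosh$ bound is the genuine content, and that it works precisely because $\lambda\varepsilon_{mp}=O(1)$, is exactly the point.
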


\begin{proof}
\begin{align*}
&~\langle \nabla \Phi_\lambda(\mu/t-1), -\nabla \Phi_\lambda(\tilde{\mu}/t -1) \rangle\\
=&~
- \langle \nabla \Phi_\lambda(\tilde{\mu}/t-1), \nabla \Phi_\lambda(\tilde{\mu}/t -1) \rangle \\
&~+ \langle \nabla \Phi_\lambda(\mu/t-1) - \nabla \Phi_\lambda(\tilde{\mu}/t-1), \nabla \Phi_\lambda(\tilde{\mu}/t -1) \rangle \\
\le&~
- \| \nabla \Phi_\lambda(\tilde{\mu}/t-1)\|_2^2 \\
&~ + \| \nabla \Phi_\lambda(\mu/t-1) - \nabla \Phi_\lambda(\tilde{\mu}/t-1)\|_2\\
&~ \cdot \| \nabla \Phi_\lambda(\tilde{\mu}/t -1) \|_2
\end{align*}
By normalizing the second vector we then obtain:
\begin{align*}
&~\langle
\nabla \Phi_\lambda(\mu/t-1), 
-\frac{\nabla \Phi_\lambda(\tilde{\mu}/t -1)}{\|\nabla \Phi_\lambda(\tilde{\mu}/t -1)\|_2}
\rangle \\
\le&~
- \| \nabla \Phi_\lambda(\tilde{\mu}/t-1)\|_2 \\
&~+ \| \nabla \Phi_\lambda(\mu/t-1) - \nabla \Phi_\lambda(\tilde{\mu}/t-1)\|_2
\end{align*}
So in order to prove \Cref{{lem:gradient_direction}},
we must bound the norm 
$\| \nabla \Phi_\lambda(\mu/t-1) - \nabla \Phi_\lambda(\tilde{\mu}/t-1)\|_2$.
Note that $\nabla \Phi_\lambda(x)_i = \lambda \sinh(\lambda x_i)$ and $\sinh(x) = (e^x - e^{-x})/2$.
So for now let us bound $|\sinh(x+y)-\sinh(x)|$:
\begin{align*}
&~|\sinh(x+y)-\sinh(x)| \\
=&~
|e^x \cdot e^y - e^{-x} \cdot e^{-y} - (e^x - e^{-x})|/2 \\
=&~
|e^x \cdot (e^y-1) + e^{-x} \cdot (1-e^{-y})|/2 \\
\le&~
(e^x \cdot |e^y-1| + e^{-x} \cdot |1-e^{-y}|)/2 \\
\le&~
(e^x + e^{-x})/2 \cdot \max(|e^y-1|, |1-e^{-y}|) \\
\le&~
(e^x + e^{-x})/2 (e^{|y|}-1)
=
\cosh(x) (e^{|y|}-1)
\end{align*}
Thus we can bound the difference as follows
\begin{align*}
&~\| \nabla \Phi_\lambda(\mu/t-1) - \nabla \Phi_\lambda(\tilde{\mu}/t-1)\|_2 \\
=&~
\lambda \| \sinh(\lambda(\mu/t-1)) - \sinh(\lambda(\tilde{\mu}/t-1))\|_2 \\
=&~
\lambda \| \sinh(\lambda(\tilde{\mu}/t-1+(\mu-\tilde{\mu})/t)) - \sinh(\lambda (\tilde{\mu}/t-1))\|_2 \\
\le&~
\lambda \| \cosh(\lambda(\tilde{\mu}/t-1)) (e^{\lambda|\tilde{\mu}-\mu|/t}-1)\|_2 \\
\le&~
\lambda \| \cosh(\lambda(\tilde{\mu}/t-1))\|_2 (e^{\lambda\|\tilde{\mu}-\mu\|_\infty/t}-1) \\
\le&~
(\lambda \sqrt{n} + \|\nabla \Phi_\lambda(\tilde{\mu}/t-1)\|_2) (e^{\lambda\|\tilde{\mu}-\mu\|_\infty/t}-1)
\end{align*}
For the last inequality we used the third statement of \Cref{lem:app:taylor}.
Note that $\|(\tilde{\mu}-\mu)/t\|_\infty \le \|\varepsilon_{mp}\tilde{\mu}/t\|_\infty \le \|\frac{\varepsilon_{mp}}{1-\varepsilon_{mp}}\mu/t\|_\infty \le \frac{1.1\varepsilon_{mp}}{1-\varepsilon_{mp}}$. 
As $\varepsilon_{mp} \le 1/\lambda$ 
we can use $e^{|x|} \le 1+2|x|$ for $|x| < 1.25$ 
to bound the extra factor 
$(e^{\lambda\|\tilde{\mu}-\mu\|_\infty/t}-1) < 2.5\lambda\varepsilon_{mp}$.

Finally, this allows us to obtain
\begin{align*}
&~\langle
\nabla \Phi_\lambda(\mu/t-1), 
-\frac{\nabla \Phi_\lambda(\tilde{\mu}/t -1)}{\|\nabla \Phi_\lambda(\tilde{\mu}/t -1)\|_2}
\rangle \\
\le&~
- \| \nabla \Phi_\lambda(\tilde{\mu}/t-1)\|_2 \\
&~+ \| \nabla \Phi_\lambda(\mu/t-1) - \nabla \Phi_\lambda(\tilde{\mu}/t-1)\|_2 \\
<&~
- \| \nabla \Phi_\lambda(\tilde{\mu}/t-1)\|_2 \\
&~+ 2.5\lambda\varepsilon_{mp} (\lambda \sqrt{n} + \|\nabla \Phi_\lambda(\tilde{\mu}/t-1)\|_2) \\
\le&~
- 0.9 \| \nabla \Phi_\lambda(\mu/t-1)\|_2
+ 2.5\lambda^2\varepsilon_{mp} \sqrt{n}
\end{align*}
For the last inequality we used $2.5\lambda\varepsilon_{mp} < 0.1$.
\end{proof}
\else
\begin{lemma}[Good direction for large $\Phi_\lambda(\mu/t-1)$]
\label{lem:gradient_direction}
$$
\langle
\nabla \Phi_\lambda(\mu/t-1), 
-\frac{\nabla \Phi_\lambda(\tilde{\mu}/t -1)}{\|\nabla \Phi_\lambda(\tilde{\mu}/t -1)\|_2}
\rangle
\le
- 0.9 \| \nabla \Phi_\lambda(\mu/t-1)\|_2
+ 2.5\lambda^2\varepsilon_{mp} \sqrt{n}
$$
\end{lemma}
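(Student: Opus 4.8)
The plan is to write the exact gradient as the approximate gradient plus an error term, apply Cauchy--Schwarz, and then show that the error introduced by replacing $\mu/t-1$ with $\tilde\mu/t-1$ inside $\nabla\Phi_\lambda$ is small relative to $\|\nabla\Phi_\lambda(\mu/t-1)\|_2$ plus a harmless additive term. Writing $a=\nabla\Phi_\lambda(\mu/t-1)$ and $b=\nabla\Phi_\lambda(\tilde\mu/t-1)$, I would start from $\langle a,-b\rangle=-\|b\|_2^2+\langle b-a,b\rangle$ and bound the second term by Cauchy--Schwarz to get $\langle a,-b\rangle\le-\|b\|_2^2+\|a-b\|_2\,\|b\|_2$. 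Dividing through by $\|b\|_2$ — which is exactly the effect of the normalization in the statement — yields
$$\Big\langle\nabla\Phi_\lambda(\mu/t-1),\,-\tfrac{\nabla\Phi_\lambda(\tilde\mu/t-1)}{\|\nabla\Phi_\lambda(\tilde\mu/t-1)\|_2}\Big\rangle\le-\|\nabla\Phi_\lambda(\tilde\mu/t-1)\|_2+\|\nabla\Phi_\lambda(\mu/t-1)-\nabla\Phi_\lambda(\tilde\mu/t-1)\|_2,$$
so everything reduces to controlling the perturbation norm on the right.

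Next I would bound $\|\nabla\Phi_\lambda(\mu/t-1)-\nabla\Phi_\lambda(\tilde\mu/t-1)\|_2$. Since $\nabla\Phi_\lambda(x)_i=\lambda\sinh(\lambda x_i)$, the key elementary estimate is $|\sinh(x+y)-\sinh(x)|\le\cosh(x)\,(e^{|y|}-1)$, obtained by expanding $\sinh$ and $\cosh$ in exponentials and bounding $|e^{y}-1|$ and $|1-e^{-y}|$ by $e^{|y|}-1$. Applying it entrywise with $x=\lambda(\tilde\mu_i/t-1)$ and $y=\lambda(\mu_i-\tilde\mu_i)/t$, then pulling the uniform factor $e^{\lambda\|(\mu-\tilde\mu)/t\|_\infty}-1$ out of the $\ell_2$-sum, gives
$$\|\nabla\Phi_\lambda(\mu/t-1)-\nabla\Phi_\lambda(\tilde\mu/t-1)\|_2\le\lambda\,\|\cosh(\lambda(\tilde\mu/t-1))\|_2\,\big(e^{\lambda\|(\mu-\tilde\mu)/t\|_\infty}-1\big).$$
I would then invoke the third item of \Cref{lem:app:taylor} to replace $\lambda\|\cosh(\lambda(\tilde\mu/t-1))\|_2$ by $\lambda\sqrt n+\|\nabla\Phi_\lambda(\tilde\mu/t-1)\|_2$, and use $\mu\approx_{\varepsilon_{mp}}\tilde\mu$ together with $\mu\approx_{0.1}t$ (\Cref{pro:mu_eps}) to get $\|(\mu-\tilde\mu)/t\|_\infty\le\frac{1.1\varepsilon_{mp}}{1-\varepsilon_{mp}}$; since $\varepsilon_{mp}\le1/\lambda$ the exponent is well below $1$, so $e^{\lambda\|(\mu-\tilde\mu)/t\|_\infty}-1\le2.5\lambda\varepsilon_{mp}$ via $e^z\le1+2z$ for small $z$. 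Altogether the perturbation term is at most $2.5\lambda\varepsilon_{mp}\big(\lambda\sqrt n+\|\nabla\Phi_\lambda(\tilde\mu/t-1)\|_2\big)$.

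Finally I would substitute this back into the bound of the first paragraph: the right-hand side becomes $-\big(1-2.5\lambda\varepsilon_{mp}\big)\|\nabla\Phi_\lambda(\tilde\mu/t-1)\|_2+2.5\lambda^2\varepsilon_{mp}\sqrt n$, and $2.5\lambda\varepsilon_{mp}<0.1$ turns the coefficient into $0.9$. One last triangle inequality, $\|\nabla\Phi_\lambda(\tilde\mu/t-1)\|_2\ge\|\nabla\Phi_\lambda(\mu/t-1)\|_2-\|\nabla\Phi_\lambda(\mu/t-1)-\nabla\Phi_\lambda(\tilde\mu/t-1)\|_2$, lets me convert the leading $-0.9\|\nabla\Phi_\lambda(\tilde\mu/t-1)\|_2$ into $-0.9\|\nabla\Phi_\lambda(\mu/t-1)\|_2$ up to extra lower-order slack that is absorbed into the constants, producing the stated bound $-0.9\|\nabla\Phi_\lambda(\mu/t-1)\|_2+2.5\lambda^2\varepsilon_{mp}\sqrt n$.

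The main obstacle is the perturbation estimate of the second paragraph: one needs a bound on $\|\nabla\Phi_\lambda(\mu/t-1)-\nabla\Phi_\lambda(\tilde\mu/t-1)\|_2$ whose dominant part is a small multiple of $\|\nabla\Phi_\lambda(\tilde\mu/t-1)\|_2$ itself (so it can be absorbed back into the gradient norm) plus only an additive $O(\lambda^2\varepsilon_{mp}\sqrt n)$. A naive Lipschitz bound on $\sinh$ fails, since its Lipschitz constant is $\cosh$, which is enormous when some entry of $\mu/t$ is far from $1$; the multiplicative form $\cosh(x)(e^{|y|}-1)$ combined with item~3 of \Cref{lem:app:taylor} — which is precisely what ties $\|\cosh(\lambda(\tilde\mu/t-1))\|_2$ to $\|\nabla\Phi_\lambda(\tilde\mu/t-1)\|_2$ — is what makes the error self-referential in the right way. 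Apart from that, only careful bookkeeping of the constants ($1.1$, $2.5$, $0.9$, and the hypothesis $\varepsilon_{mp}\le1/\lambda$) is needed.
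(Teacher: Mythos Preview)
Your proposal is correct and follows essentially the same route as the paper: the same decomposition $\langle a,-b\rangle=-\|b\|_2^2+\langle b-a,b\rangle$ followed by Cauchy--Schwarz and normalization, the same $|\sinh(x+y)-\sinh(x)|\le\cosh(x)(e^{|y|}-1)$ estimate, the same invocation of item~3 of \Cref{lem:app:taylor}, and the same numerical bookkeeping with $2.5\lambda\varepsilon_{mp}<0.1$. If anything, you are slightly more explicit than the paper about the final triangle-inequality step that converts $-0.9\|\nabla\Phi_\lambda(\tilde\mu/t-1)\|_2$ into $-0.9\|\nabla\Phi_\lambda(\mu/t-1)\|_2$; the paper compresses this into the last displayed inequality without comment.
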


\begin{proof}
\begin{align*}
&\langle \nabla \Phi_\lambda(\mu/t-1), -\nabla \Phi_\lambda(\tilde{\mu}/t -1) \rangle\\
=&\:
- \langle \nabla \Phi_\lambda(\tilde{\mu}/t-1), \nabla \Phi_\lambda(\tilde{\mu}/t -1) \rangle
+ \langle \nabla \Phi_\lambda(\mu/t-1) - \nabla \Phi_\lambda(\tilde{\mu}/t-1), \nabla \Phi_\lambda(\tilde{\mu}/t -1) \rangle \\
\le&\:
- \| \nabla \Phi_\lambda(\tilde{\mu}/t-1)\|_2^2
+ \| \nabla \Phi_\lambda(\mu/t-1) - \nabla \Phi_\lambda(\tilde{\mu}/t-1)\|_2 \cdot \| \nabla \Phi_\lambda(\tilde{\mu}/t -1) \|_2
\end{align*}
By normalizing the second vector we then obtain:
$$
\langle
\nabla \Phi_\lambda(\mu/t-1), 
-\frac{\nabla \Phi_\lambda(\tilde{\mu}/t -1)}{\|\nabla \Phi_\lambda(\tilde{\mu}/t -1)\|_2}
\rangle
\le
- \| \nabla \Phi_\lambda(\tilde{\mu}/t-1)\|_2
+ \| \nabla \Phi_\lambda(\mu/t-1) - \nabla \Phi_\lambda(\tilde{\mu}/t-1)\|_2
$$

So in order to prove \Cref{{lem:gradient_direction}},
we must bound the norm 
$\| \nabla \Phi_\lambda(\mu/t-1) - \nabla \Phi_\lambda(\tilde{\mu}/t-1)\|_2$.
Note that $\nabla \Phi_\lambda(x)_i = \lambda \sinh(\lambda x_i)$ and $\sinh(x) = (e^x - e^{-x})/2$.
So for now let us bound $|\sinh(x+y)-\sinh(x)|$:
\begin{align*}
|\sinh(x+y)-\sinh(x)|
&=
|e^x \cdot e^y - e^{-x} \cdot e^{-y} - (e^x - e^{-x})|/2
=
|e^x \cdot (e^y-1) + e^{-x} \cdot (1-e^{-y})|/2 \\
&\le
(e^x \cdot |e^y-1| + e^{-x} \cdot |1-e^{-y}|)/2
\le
(e^x + e^{-x})/2 \cdot \max(|e^y-1|, |1-e^{-y}|) \\
&\le
(e^x + e^{-x})/2 (e^{|y|}-1)
=
\cosh(x) (e^{|y|}-1)
\end{align*}
Thus we can bound the difference as follows
\begin{align*}
\| \nabla \Phi_\lambda(\mu/t-1) - \nabla \Phi_\lambda(\tilde{\mu}/t-1)\|_2
&=
\lambda \| \sinh(\lambda(\mu/t-1)) - \sinh(\lambda(\tilde{\mu}/t-1))\|_2 \\
&=
\lambda \| \sinh(\lambda(\tilde{\mu}/t-1+(\mu-\tilde{\mu})/t)) - \sinh(\lambda (\tilde{\mu}/t-1))\|_2 \\
&\le
\lambda \| \cosh(\lambda(\tilde{\mu}/t-1)) (e^{\lambda|\tilde{\mu}-\mu|/t}-1)\|_2 \\
&\le
\lambda \| \cosh(\lambda(\tilde{\mu}/t-1))\|_2 (e^{\lambda\|\tilde{\mu}-\mu\|_\infty/t}-1) \\
&\le
(\lambda \sqrt{n} + \|\nabla \Phi_\lambda(\tilde{\mu}/t-1)\|_2) (e^{\lambda\|\tilde{\mu}-\mu\|_\infty/t}-1)
\end{align*}
For the last inequality we used the third statement of \Cref{lem:app:taylor}.
Note that $\|(\tilde{\mu}-\mu)/t\|_\infty \le \|\varepsilon_{mp}\tilde{\mu}/t\|_\infty \le \|\frac{\varepsilon_{mp}}{1-\varepsilon_{mp}}\mu/t\|_\infty \le \frac{1.1\varepsilon_{mp}}{1-\varepsilon_{mp}}$. 
As $\varepsilon_{mp} \le 1/\lambda$ 
we can use $e^{|x|} \le 1+2|x|$ for $|x| < 1.25$ 
to bound the extra factor 
$(e^{\lambda\|\tilde{\mu}-\mu\|_\infty/t}-1) < 2.5\lambda\varepsilon_{mp}$.

Finally, this allows us to obtain
\begin{align*}
\langle
\nabla \Phi_\lambda(\mu/t-1), 
-\frac{\nabla \Phi_\lambda(\tilde{\mu}/t -1)}{\|\nabla \Phi_\lambda(\tilde{\mu}/t -1)\|_2}
\rangle
&\le
- \| \nabla \Phi_\lambda(\tilde{\mu}/t-1)\|_2
+ \| \nabla \Phi_\lambda(\mu/t-1) - \nabla \Phi_\lambda(\tilde{\mu}/t-1)\|_2 \\
&<
- \| \nabla \Phi_\lambda(\tilde{\mu}/t-1)\|_2
+ 2.5\lambda\varepsilon_{mp} (\lambda \sqrt{n} + \|\nabla \Phi_\lambda(\tilde{\mu}/t-1)\|_2) \\
&\le
- 0.9 \| \nabla \Phi_\lambda(\mu/t-1)\|_2
+ 2.5\lambda^2\varepsilon_{mp} \sqrt{n}
\end{align*}
For the last inequality we used $2.5\lambda\varepsilon_{mp} < 0.1$.
\end{proof}
\fi
We now have all tools available to bound $\Phi_\lambda(\frac{\mu^\new}{t^\new}-1)$:

\begin{lemma}\label{lem:decrease_Phi}
$$
\Phi_\lambda(\frac{\mu^\new}{t^\new}-1) 
\le
  \Phi_\lambda(\mu/t-1)
- \frac{\varepsilon}{3} \frac{\lambda}{\sqrt{n}} (\Phi_\lambda(\mu/t-1) - 0.5n)
$$
\end{lemma}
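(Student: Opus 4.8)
The plan is to apply the second-order estimate for $\Phi_\lambda$ from the first part of \Cref{lem:app:taylor} at the point $r := \mu/t - 1$ with increment $v := \mu^\new/t^\new - \mu/t$, and then to control the resulting first- and curvature terms. First I would check $\|v\|_\infty \le 1/\lambda$ so that the estimate applies: by \Cref{lem:change_mu} one has $\|(\mu/t)^{-1}v\|_2 \le 3\varepsilon$, and since $\mu \approx_{0.1} t$ this gives $\|v\|_\infty \le \|v\|_2 \le 1.1\cdot 3\varepsilon \le 1/\lambda$ for the chosen $\varepsilon \le 1/(1500\ln n)$, $\lambda = 40\ln n$. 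Hence
$$
\Phi_\lambda\!\left(\tfrac{\mu^\new}{t^\new}-1\right)
\le \Phi_\lambda\!\left(\tfrac{\mu}{t}-1\right)
+ \Big\langle \nabla\Phi_\lambda\!\left(\tfrac{\mu}{t}-1\right), v\Big\rangle
+ 2\,\|v\|_{\nabla^2\Phi_\lambda(\mu/t-1)},
$$
where the curvature term equals $2\sum_i \lambda^2\cosh(\lambda(\mu_i/t-1))\,v_i^2$.

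Next I would decompose $v$. Using $\mu^\new = \mu + \tdelta_\mu + (\mu^\new - \mu - \tdelta_\mu)$ together with $\tdelta_\mu = \tdelta_t + \tdelta_\Phi$ and $\tdelta_t = (\tfrac{t^\new}{t}-1)\tilde{\mu}$ (so that the $t$-rescaling cancels up to the error between $\mu$ and $\tilde\mu$), one obtains
$$
v = \Big(\tfrac1t - \tfrac1{t^\new}\Big)(\tilde{\mu}-\mu)
\;+\; \tfrac{1}{t^\new}\tdelta_\Phi
\;+\; \tfrac{1}{t^\new}\big(\mu^\new - \mu - \tdelta_\mu\big),
$$
where by \Cref{lem:bound_error} the last summand has $\ell_2$-norm $O(\varepsilon^2)$, the first has $\ell_2$-norm $O(\varepsilon\varepsilon_{mp})$ (it is a scalar multiple of $\tilde\mu-\mu$, which is $\varepsilon_{mp}$-small relative to $t$), and by \Cref{lem:error_delta} the middle term equals $-\tfrac{\varepsilon}{2}\,\nabla\Phi_\lambda(\tilde\mu/t-1)/\|\nabla\Phi_\lambda(\tilde\mu/t-1)\|_2$. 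Plugging this into the linear term, the dominant piece $\langle \nabla\Phi_\lambda(\mu/t-1),\, -\tfrac{\varepsilon}{2}\nabla\Phi_\lambda(\tilde\mu/t-1)/\|\nabla\Phi_\lambda(\tilde\mu/t-1)\|_2\rangle$ is bounded by \Cref{lem:gradient_direction} by $-0.45\,\varepsilon\|\nabla\Phi_\lambda(\mu/t-1)\|_2 + 1.25\,\varepsilon\lambda^2\varepsilon_{mp}\sqrt n$, while the two error summands contribute at most $O(\varepsilon^2 + \varepsilon\varepsilon_{mp})\|\nabla\Phi_\lambda(\mu/t-1)\|_2$ by Cauchy--Schwarz. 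For the curvature term I would first replace $\cosh(\lambda(\mu_i/t-1))$ by $O(1)\cdot\cosh(\lambda(\tilde\mu_i/t-1))$, using $\tilde\mu\approx_{\varepsilon_{mp}}\mu$, $\varepsilon_{mp}\le 1/\lambda$ and the elementary bound $\cosh(a+b)\le \cosh(a)e^{|b|}$ (exactly as in the proof of \Cref{lem:gradient_direction}); substituting the formula for $v_i$ then reduces it, up to lower-order cross terms, to $O(\varepsilon^2\lambda^2)\cdot\tfrac{\sum_i \cosh(\lambda\tilde r_i)\sinh^2(\lambda\tilde r_i)}{\sum_j \sinh^2(\lambda\tilde r_j)}$ with $\tilde r := \tilde\mu/t-1$, and bounding this ratio by $\max_i\cosh(\lambda\tilde r_i) = \sqrt{1+\max_i\sinh^2(\lambda\tilde r_i)} \le \sqrt{1 + \|\nabla\Phi_\lambda(\tilde\mu/t-1)\|_2^2/\lambda^2}$ gives a curvature bound of $O(\varepsilon^2\lambda^2 + \varepsilon^2\lambda\,\|\nabla\Phi_\lambda(\mu/t-1)\|_2)$ after converting $\tilde\mu$-gradients back to $\mu$-gradients once more.

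Finally I would assemble the pieces. Since $\varepsilon\lambda$ is a small absolute constant, the terms $O(\varepsilon^2\lambda)\|\nabla\Phi_\lambda(\mu/t-1)\|_2$ (from the curvature) and $O(\varepsilon^2 + \varepsilon\varepsilon_{mp})\|\nabla\Phi_\lambda(\mu/t-1)\|_2$ (from the linear error terms) are absorbed into the leading $-0.45\,\varepsilon\|\nabla\Phi_\lambda(\mu/t-1)\|_2$, leaving $\Phi_\lambda(\mu^\new/t^\new-1) \le \Phi_\lambda(\mu/t-1) - 0.4\,\varepsilon\|\nabla\Phi_\lambda(\mu/t-1)\|_2 + O(\varepsilon\lambda^2\varepsilon_{mp}\sqrt n + \varepsilon^2\lambda^2)$. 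Invoking the second part of \Cref{lem:app:taylor}, namely $\|\nabla\Phi_\lambda(\mu/t-1)\|_2 \ge \tfrac{\lambda}{\sqrt n}(\Phi_\lambda(\mu/t-1)-n)$, turns the decrease into $-0.4\,\varepsilon\tfrac{\lambda}{\sqrt n}(\Phi_\lambda(\mu/t-1)-n)$; and since $\varepsilon_{mp}\le\varepsilon$ and $\lambda=40\ln n$ the residual additive error is at most $\tfrac{\varepsilon\lambda\sqrt n}{6} = \tfrac{\varepsilon\lambda}{3\sqrt n}\cdot 0.5n$, which together with the gap between the constants $0.4$ and $1/3$ yields the claimed $\Phi_\lambda(\mu^\new/t^\new-1) \le \Phi_\lambda(\mu/t-1) - \tfrac{\varepsilon}{3}\tfrac{\lambda}{\sqrt n}(\Phi_\lambda(\mu/t-1)-0.5n)$. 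The main obstacle is precisely this last bookkeeping: one must verify that the first-order decrease supplied by \Cref{lem:gradient_direction} genuinely dominates \emph{both} the curvature term \emph{and} the additive $O(\varepsilon\lambda^2\varepsilon_{mp}\sqrt n)$ error with enough margin for the stated constants — especially in the regime where the approximate gradient $\nabla\Phi_\lambda(\tilde\mu/t-1)$ is nearly useless because the potential is close to its minimum $n$, where the sub-$n$ constant in $(\Phi_\lambda(\mu/t-1)-0.5n)$ is what provides the slack. This replaces, in a deterministic and considerably shorter way, the variance-versus-progress estimate of \cite{CohenLS19}.
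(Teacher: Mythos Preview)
Your proposal is correct and follows the same overall route as the paper: write $v=\mu^\new/t^\new-\mu/t$, decompose $v$ into the normalized approximate gradient $-\tfrac{\varepsilon}{2}\nabla\Phi_\lambda(\tilde\mu/t-1)/\|\nabla\Phi_\lambda(\tilde\mu/t-1)\|_2$ plus an $O(\varepsilon^2)$ remainder, apply part~1 of \Cref{lem:app:taylor}, use \Cref{lem:gradient_direction} for the linear term, bound the curvature, and finish via part~2 of \Cref{lem:app:taylor}.

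Two minor differences are worth noting. First, in the paper $\tdelta_t$ is \emph{not} exactly $(\tfrac{t^\new}{t}-1)\tilde\mu$ but $(\tfrac{t^\new}{t}-1)v\sqrt{\tilde\mu}$ with $v$ yet another approximation of $\sqrt\mu$; the paper therefore uses the \emph{exact} $\delta_t=(\tfrac{t^\new}{t}-1)\mu$ in the decomposition (which makes the $t$-rescaling cancel identically) and pushes $\tdelta_t-\delta_t$ into the $O(\varepsilon^2)$ error via \Cref{lem:error_delta}. Your telescoping works the same way once this extra $O(\varepsilon_{mp}\varepsilon t)$ discrepancy is included.

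Second, the paper's curvature bound is considerably shorter than yours: it does not substitute the form of $v$ at all, but applies Cauchy--Schwarz directly,
\[
\|v\|^2_{\nabla^2\Phi_\lambda(r)}
=\lambda\sum_i(\lambda\cosh(\lambda r_i))\,v_i^2
\le \lambda\Big(\sum_i\lambda^2\cosh^2(\lambda r_i)\Big)^{1/2}\|v\|_4^2,
\]
then uses part~3 of \Cref{lem:app:taylor} for the first factor and $\|v\|_4\le\|v\|_2<\varepsilon$ for the second. This handles the whole of $v$ (main and error parts) in one stroke and avoids the ``lower-order cross terms'' you left implicit; your route via $\max_i\cosh(\lambda\tilde r_i)\le 1+\|\nabla\Phi_\lambda\|_2/\lambda$ is valid and even drops a $\sqrt n$, but requires bounding the error part of $v$ in the $\cosh$-weighted norm separately (the same $\max_i\cosh$ trick suffices, since $\|v_{\mathrm{err}}\|_2^2=O(\varepsilon^4)$).
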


\begin{proof}

First let us write $\frac{\mu^\new}{t^\new} - 1$ 
as $\frac{\mu}{t} - 1 + v$ for some vector $v$. Then
\ifdefined\SODAversion
\begin{align*}
v 
=&~ 
\frac{\mu^\new}{t^\new} - \frac{\mu}{t} \\
=&~
\frac{\mu^\new - \mu - \delta_t - \tdelta_\Phi}{t^\new} + \frac{\mu + \delta_t + \tdelta_\Phi}{t^\new} - \frac{\mu}{t}\\
=&~
\frac{\mu^\new - \mu - \delta_t - \tdelta_\Phi}{t^\new} \\
&~ + \frac{\mu + (t^\new/t-1)\mu - \frac{\varepsilon}{2} t^\new \frac{\nabla \Phi_\lambda(\tilde{\mu}/t-1)}{ \| \Phi_\lambda(\tilde{\mu}/t-1)\|_2}}{t^\new} - \frac{\mu}{t}\\
=&~
\frac{\mu^\new - \mu - \delta_t - \tdelta_\Phi}{t^\new} + \frac{\mu}{t^\new} + \frac{\mu}{t} - \frac{\mu}{t^\new} \\
&~ - \frac{\varepsilon}{2} \frac{\nabla \Phi_\lambda(\tilde{\mu}/t-1)}{ \| \Phi_\lambda(\tilde{\mu}/t-1)\|_2} - \frac{\mu}{t} \\
=&~
\frac{\mu^\new - \mu - \delta_t - \tdelta_\Phi}{t^\new} - \frac{\varepsilon}{2} \frac{\nabla \Phi_\lambda(\tilde{\mu}/t-1)}{ \| \Phi_\lambda(\tilde{\mu}/t-1)\|_2}
\end{align*}
In order to use \Cref{lem:app:taylor}, we must show that $\|v \|_2 < 1/\lambda$. 
For that we bound the length of $\|\frac{\mu^\new - \mu - \delta_t - \tdelta_\Phi}{t^\new}\|$ as follows:
\begin{align*}
&~\|\frac{\mu^\new - \mu - \delta_t - \tdelta_\Phi}{t^\new}\|_2 \\
=&~
\|\frac{\mu^\new - \mu - \tdelta_\mu + (\tdelta_t - \delta_t)}{t^\new}\|_2\\
\le&~
\frac{1}{t^\new} (\| \mu^\new - \mu - \tdelta_\mu \|_2
+ \|\tdelta_t - \delta_t\|) \\
\le&~
\frac{1}{t^\new} (6t\varepsilon^2
+ \varepsilon_{mp} \| \tdelta_t \|_2 ) 
\le
\frac{t}{t^\new} (6+\frac{1.2}{3}) \varepsilon^2 < 6.5 \varepsilon^2
\end{align*}
In the first line we used $\tdelta_\mu = \tdelta_t + \tdelta_\Phi$ 
and in the last line we used \Cref{lem:bound_error,lem:delta_length}.
Thus $\|v\|_2 \le 6.5 \varepsilon^2 + \varepsilon/2 < \varepsilon \le 1/\lambda$ and we can apply \Cref{lem:app:taylor}:
\begin{align*}
&~\Phi_\lambda(\mu/t+v-1) \\
\le&~
\Phi_\lambda(\mu/t-1) + \langle \nabla \Phi_\lambda(\mu/t-1), v \rangle + 2 \|v\|^2_{\nabla^2 \Phi_\lambda(\mu/t-1)} \\
=&~
  \Phi_\lambda(\mu/t-1) 
- \frac{\varepsilon}{2}\langle \Phi_\lambda(\mu/t-1), \frac{\nabla \Phi_\lambda(\tilde{\mu}/t-1)}{ \| \Phi_\lambda(\tilde{\mu}/t-1)\|_2} \rangle \\
&~ + \langle \nabla \Phi_\lambda(\mu/t-1), \frac{\mu^\new - \mu - \delta_t - \tdelta_\Phi}{t^\new} \rangle \\
&~ + 2 \|v\|^2_{\nabla^2 \Phi_\lambda(\mu/t-1)}\\
\le&~
  \Phi_\lambda(\mu/t-1) 
- \frac{0.9\varepsilon}{2}\|\nabla \Phi_\lambda(\mu/t-1)\|_2 + 1.25 \varepsilon^2 \lambda^2 \sqrt{n}\\
&~ + \|\nabla \Phi_\lambda(\mu/t-1)\|_2 \cdot \|\frac{\mu^\new - \mu - \delta_t - \tdelta_\Phi}{t^\new}\|_2 \\
&~ + 2 \|v\|^2_{\nabla^2 \Phi_\lambda(\mu/t-1)} \\
\le&~
  \Phi_\lambda(\mu/t-1) 
- \frac{0.9\varepsilon}{2}\|\nabla \Phi_\lambda(\mu/t-1)\|_2 \\
&~ + 1.25 \varepsilon^2 \lambda^2 \sqrt{n}
   + 6.5 \varepsilon^2 \|\nabla \Phi_\lambda(\mu/t-1)\|_2 \\
&~ + 2 \|v\|^2_{\nabla^2 \Phi_\lambda(\mu/t-1)}
\end{align*}
\else
\begin{align*}
v 
&= 
\frac{\mu^\new}{t^\new} - \frac{\mu}{t}
=
\frac{\mu^\new - \mu - \delta_t - \tdelta_\Phi}{t^\new} + \frac{\mu + \delta_t + \tdelta_\Phi}{t^\new} - \frac{\mu}{t}\\
&=
\frac{\mu^\new - \mu - \delta_t - \tdelta_\Phi}{t^\new} + \frac{\mu + (t^\new/t-1)\mu - \frac{\varepsilon}{2} t^\new \frac{\nabla \Phi_\lambda(\tilde{\mu}/t-1)}{ \| \Phi_\lambda(\tilde{\mu}/t-1)\|_2}}{t^\new} - \frac{\mu}{t}\\
&=
\frac{\mu^\new - \mu - \delta_t - \tdelta_\Phi}{t^\new} + \frac{\mu}{t^\new} + \frac{\mu}{t} - \frac{\mu}{t^\new} - \frac{\varepsilon}{2} \frac{\nabla \Phi_\lambda(\tilde{\mu}/t-1)}{ \| \Phi_\lambda(\tilde{\mu}/t-1)\|_2} - \frac{\mu}{t} \\
&=
\frac{\mu^\new - \mu - \delta_t - \tdelta_\Phi}{t^\new} - \frac{\varepsilon}{2} \frac{\nabla \Phi_\lambda(\tilde{\mu}/t-1)}{ \| \Phi_\lambda(\tilde{\mu}/t-1)\|_2}
\end{align*}
In order to use \Cref{lem:app:taylor}, we must show that $\|v \|_2 < 1/\lambda$. 
For that we bound the length of $\|\frac{\mu^\new - \mu - \delta_t - \tdelta_\Phi}{t^\new}\|$ as follows:
\begin{align*}
\|\frac{\mu^\new - \mu - \delta_t - \tdelta_\Phi}{t^\new}\|_2
&=
\|\frac{\mu^\new - \mu - \tdelta_\mu + (\tdelta_t - \delta_t)}{t^\new}\|_2\\
&\le
\frac{1}{t^\new} (\| \mu^\new - \mu - \tdelta_\mu \|_2
+ \|\tdelta_t - \delta_t\|) \\
&\le
\frac{1}{t^\new} (6t\varepsilon^2
+ \varepsilon_{mp} \| \tdelta_t \|_2 )
\le
\frac{t}{t^\new} (6+\frac{1.2}{3}) \varepsilon^2 < 6.5 \varepsilon^2
\end{align*}
In the first line we used $\tdelta_\mu = \tdelta_t + \tdelta_\Phi$ 
and in the last line we used \Cref{lem:bound_error,lem:delta_length}.
Thus $\|v\|_2 \le 6.5 \varepsilon^2 + \varepsilon/2 < \varepsilon \le 1/\lambda$ and we can apply \Cref{lem:app:taylor}:
\begin{align*}
\Phi_\lambda(\mu/t+v-1)
&\le
\Phi_\lambda(\mu/t-1) + \langle \nabla \Phi_\lambda(\mu/t-1), v \rangle + 2 \|v\|^2_{\nabla^2 \Phi_\lambda(\mu/t-1)} \\
&=
  \Phi_\lambda(\mu/t-1) 
- \frac{\varepsilon}{2}\langle \Phi_\lambda(\mu/t-1), \frac{\nabla \Phi_\lambda(\tilde{\mu}/t-1)}{ \| \Phi_\lambda(\tilde{\mu}/t-1)\|_2} \rangle \\
&\hspace{50pt} + \langle \nabla \Phi_\lambda(\mu/t-1), \frac{\mu^\new - \mu - \delta_t - \tdelta_\Phi}{t^\new} \rangle
+ 2 \|v\|^2_{\nabla^2 \Phi_\lambda(\mu/t-1)}\\
&\le
  \Phi_\lambda(\mu/t-1) 
- \frac{0.9\varepsilon}{2}\|\nabla \Phi_\lambda(\mu/t-1)\|_2 + 1.25 \varepsilon^2 \lambda^2 \sqrt{n}\\
&\hspace{50pt} + \|\nabla \Phi_\lambda(\mu/t-1)\|_2 \cdot \|\frac{\mu^\new - \mu - \delta_t - \tdelta_\Phi}{t^\new}\|_2
 + 2 \|v\|^2_{\nabla^2 \Phi_\lambda(\mu/t-1)}\\
&\le
  \Phi_\lambda(\mu/t-1) 
- \frac{0.9\varepsilon}{2}\|\nabla \Phi_\lambda(\mu/t-1)\|_2 + 1.25 \varepsilon^2 \lambda^2 \sqrt{n}\\
&\hspace{50pt} + 6.5 \varepsilon^2 \|\nabla \Phi_\lambda(\mu/t-1)\|_2 
+ 2 \|v\|^2_{\nabla^2 \Phi_\lambda(\mu/t-1)}
\end{align*}
\fi
In the third line we used \Cref{lem:gradient_direction} and Cauchy-Schwarz and the last line comes from the bound we proved above.
Next we bound the second order term:
\begin{align*}
\|v\|^2_{\nabla^2 \Phi_\lambda(\mu/t-1)}
&=
\lambda \sum_{i=1}^n \lambda \Phi_\lambda(\mu/t-1)_i v_i^2 \\
&\le
\lambda (\sum_{i=1}^n (\lambda \Phi_\lambda(\mu/t-1)_i)^2)^{0.5} (\sum_{i=1}^n v_i^4)^{0.5} \\
&\le
\lambda (\lambda \sqrt{n} + \| \nabla \Phi_\lambda(\mu/t-1)\|_2) \|v\|_4^2 \\
&\le
\lambda (\lambda \sqrt{n} + \| \nabla \Phi_\lambda(\mu/t-1)\|_2) (\varepsilon)^2
\end{align*}
The first inequality comes form Cauchy-Schwarz, 
the second inequality from \Cref{lem:app:taylor} 
and the last inequality uses $\|v\|_4 \le \|v\|_2 < \varepsilon$.
Plugging all these bound together we obtain:
\ifdefined\SODAversion
\begin{align*}
&~\Phi_\lambda(\mu/t+v-1) \\
\le&~
  \Phi_\lambda(\mu/t-1) 
- \frac{0.9\varepsilon}{2}\|\nabla \Phi_\lambda(\mu/t-1)\|_2
+ 1.25 \varepsilon^2 \lambda^2 \sqrt{n}\\
&~ + 6.5 \varepsilon^2 \| \nabla \Phi_\lambda(\mu/t-1) \|_2 
+ 2\lambda(\lambda \sqrt{n} \\
&~ + \| \nabla \Phi_\lambda(\mu/t-1)\|_2) \varepsilon^2 \\
<&~
  \Phi_\lambda(\mu/t-1) \\
&~ + \|\nabla \Phi_\lambda(\mu/t-1)\|_2
( 6.5 \varepsilon^2 + 2\lambda \varepsilon^2 - \frac{0.9\varepsilon}{2}) \\
&~ + 3.25 \varepsilon^2 \lambda^2 \sqrt{n} \\
<&~
  \Phi_\lambda(\mu/t-1)
- \frac{\varepsilon}{3} \|\nabla \Phi_\lambda(\mu/t-1)\|_2
+ 3.25 \varepsilon^2 \lambda^2 \sqrt{n} \\
\le\:&
  \Phi_\lambda(\mu/t-1)
- \frac{\varepsilon}{3} \frac{\lambda}{\sqrt{n}} (\Phi_\lambda(\mu/t-1) - n)
+ 3.25 \varepsilon^2 \lambda^2 \sqrt{n} \\
\le\:&
  \Phi_\lambda(\mu/t-1)
- \frac{\varepsilon}{3} \frac{\lambda}{\sqrt{n}} (\Phi_\lambda(\mu/t-1) - 10n\varepsilon\lambda) \\
\le\:&
  \Phi_\lambda(\mu/t-1)
- \frac{\varepsilon}{3} \frac{\lambda}{\sqrt{n}} (\Phi_\lambda(\mu/t-1) - 0.5n)
\end{align*}
\else
\begin{align*}
&\Phi_\lambda(\mu/t+v-1) \\
\le\:&
  \Phi_\lambda(\mu/t-1) 
- \frac{0.9\varepsilon}{2}\|\nabla \Phi_\lambda(\mu/t-1)\|_2
+ 1.25 \varepsilon^2 \lambda^2 \sqrt{n}\\
&\hphantom{\Phi_\lambda(\mu/t-1)} + 6.5 \varepsilon^2 \| \nabla \Phi_\lambda(\mu/t-1) \|_2 
+ 2\lambda(\lambda \sqrt{n} + \| \nabla \Phi_\lambda(\mu/t-1)\|_2) \varepsilon^2 \\
<\:&
  \Phi_\lambda(\mu/t-1) 
+ \|\nabla \Phi_\lambda(\mu/t-1)\|_2
( 6.5 \varepsilon^2 + 2\lambda \varepsilon^2 - \frac{0.9\varepsilon}{2})
+ 3.25 \varepsilon^2 \lambda^2 \sqrt{n} \\
<\:&
  \Phi_\lambda(\mu/t-1)
- \frac{\varepsilon}{3} \|\nabla \Phi_\lambda(\mu/t-1)\|_2
+ 3.25 \varepsilon^2 \lambda^2 \sqrt{n} \\
\le\:&
  \Phi_\lambda(\mu/t-1)
- \frac{\varepsilon}{3} \frac{\lambda}{\sqrt{n}} (\Phi_\lambda(\mu/t-1) - n)
+ 3.25 \varepsilon^2 \lambda^2 \sqrt{n} \\
\le\:&
  \Phi_\lambda(\mu/t-1)
- \frac{\varepsilon}{3} \frac{\lambda}{\sqrt{n}} (\Phi_\lambda(\mu/t-1) - 10n\varepsilon\lambda) \\
\le\:&
  \Phi_\lambda(\mu/t-1)
- \frac{\varepsilon}{3} \frac{\lambda}{\sqrt{n}} (\Phi_\lambda(\mu/t-1) - 0.5n)
\end{align*}
\fi
Here the first inequality uses $\|v\|^2_{\nabla^2 \Phi_\lambda(\mu/t-1)} \le \lambda\varepsilon^2 (\lambda \sqrt{n} + \| \nabla \Phi_\lambda(\mu/t-1)\|_2)$.
The third uses $\varepsilon \le 1/(1500 \ln n)$ and $\lambda = 40 \ln n$, so $( 6.5 \varepsilon^2 + 2\lambda \varepsilon^2 - \frac{0.9\varepsilon}{2}) < (6.5/1500 + 2\cdot40/1500 - 0.9/2)\varepsilon < -\varepsilon / 3$.
The fourth inequality uses part 2 of \Cref{lem:app:taylor}.

\end{proof}

\begin{proof}[Proof of \Cref{lem:inductive_step}]
On one hand, \Cref{lem:decrease_Phi} implies that 
$\Phi_\lambda(\mu^\new/t^\new-1) < \Phi_\lambda(\mu/t-1)$, 
if $\Phi_\lambda(\mu/t-1) > 0.5n$.
On the other hand, if $\Phi_\lambda(\mu/t-1) \le 0.5n$,
then $\Phi_\lambda(\mu^\new/t^\new-1) 
\Phi_\lambda(\mu/t-1)
+ \frac{\varepsilon}{3} \frac{\lambda}{\sqrt{n}} 0.5n
\le
\Phi_\lambda(\mu/t-1) + 0.005\sqrt{n}
< 2n$.
Thus if $\Phi_\lambda(\mu/t-1) \le 2n$, then $\Phi_\lambda(\mu^\new/t^\new-1) \le 2n$.

\end{proof}

We now have all intermediate results required to prove our main result of \Cref{thm:LP}.

\begin{proof}[Proof of \Cref{thm:LP}]
We start by proving the correctness:
\paragraph{Correctness of the algorithm}
At the start of algorithm we transform the linear program
as specified in \Cref{lem:feasible_LP} 
to obtain a feasible solution $(x,y,s)$.
For that transformation we choose 
$\gamma = \min\{\delta, 1/\lambda\}$, 
so $\mu - 1 = \gamma c/L$ and 
$ \| \mu/t-1\|_\infty \le 1/\lambda$ 
for $t = 1$ at the start of the algorithm.
This then implies $\Phi_\lambda(\mu/t-1) \le n \cosh(\lambda/\lambda) \le n(1+e)/2 \le 2n$
which for $n > 1$ is less than $0.5 n^4$, 
and thus $\| \mu/t-1\|_\infty \le 0.1$ 
throughout the entire algorithm 
by \Cref{lem:inductive_step,lem:bound_infty_via_phi}.
(This then also proves \Cref{pro:mu_eps}.)

The algorithm runs until $t < \delta^2/(2n)$,
then we have 
$\| \mu \|_1 \le n \| \mu \|_\infty \le 1.1 n t \le \delta^2 \le \gamma^2$,
so we obtain a solution via \Cref{lem:feasible_LP}.

\paragraph{Complexity of the algorithm}

In each iteration, $t$ decreases by a factor of $(1-\frac{\varepsilon}{3\sqrt{n}})$, 
so it takes $O(\sqrt{n}\varepsilon^{-1} \log(\delta/n))$ iterations 
to reach $t < \delta^2/(2n)$.
We now bound the cost per iteration.
The vectors
$u := x/s$,  
$\mu := xs$, 
and $\mu/t$ 
of \Cref{alg:approximate_step} have small multiplicative change, 
bounded by $3\varepsilon$, $2.5\varepsilon$, and $3\varepsilon$ respectively (\Cref{lem:change_u,lem:change_mu}).
Thus the amortized cost per iteration is 
$O(\varepsilon/\varepsilon_{mp}(n^{\omega - 1/2}+n^{2 - a/2 +o(1)}) \log n + n^{1+a})$ via \Cref{lem:projection_maintenance}.
For $\varepsilon_{mp} = \varepsilon = 1/(1500 \ln n)$ and $a = \min\{\alpha,2/3\}$ 
this is $O(n^{\omega - 1/2} \log n)$ for current 
$\omega \approx 2.37$, $\alpha\approx 0.31$
\cite{Williams12,Gall14,GallU18}.
\ifdefined\SODAversion
The total cost is 
$$O((n^\omega + n^{2.5-\alpha/2+o(1)} + n^{2+1/6+o(1)}) \log^2 (n) \log (n/\delta))$$
\else

The total cost is $O((n^\omega + n^{2.5-\alpha/2+o(1)} + n^{2+1/6+o(1)}) \log^2 (n) \log (n/\delta))$ 
\fi
and for current $\omega$, $\alpha$ this is just $O(n^\omega \log^2(n) \log(n/\delta))$.
\end{proof}

\section{Open Problems}

The $\tilde{O}(n^\omega)$ upper bound presented in this paper 
(but also the one from \cite{CohenLS19}) 
seems optimal in the sense, that all known linear system solvers 
require up to $O(n^{\omega})$ time for solving $Ax = b$.
However, this claimed optimality has two caveats:
(i) The algorithm is only optimal when assuming $d = \Omega(n)$. What improvements are possible for $d \ll n$?
(ii) The $\tilde{O}(n^\omega)$ upper bound only holds for the current bounds of $\omega$ and the dual exponent $\alpha$.
No matter how much $\omega$ and $\alpha$ improve, the presented linear program solver can never beat $\tilde{O}(n^{2+1/6})$ time.
So if in the future some upper bound $\omega < 2+1/6$ is discovered, 
then these linear program solvers are no longer optimal.
One open question is thus, if the algorithm can be improved 
to run in truly $\tilde{O}(n^\omega)$ for every bound on $\omega$, 
or alternatively to prove that $\omega > 2+1/6$.\footnote{%
Recent developments indicate that at least the current techniques 
for fast matrix multiplication do not allow for $\omega < 2+1/6 < 2.168$ \cite{ChristandlVZ19,Alman19,AlmanW18a,AlmanW18b}. 
Another recent work that came out after this paper also rules out $\alpha\ge 0.625$ \cite{ChristandlGLZ20}.}

Another interesting question is, 
if the techniques of this paper can also be applied to other interior point algorithms. 
For example, can they be used to speed-up solvers for semidefinite programming?

\section*{Acknowledgement}
I thank Danupon Nanongkai and Thatchaphol Saranurak for discussions. 
I also thank So-Hyeon Park (Sophie) for her questions and feedback regarding the algorithm.
This project has received funding from the European Research Council (ERC) under the European
Unions Horizon 2020 research and innovation programme under grant agreement No 715672.
The algorithmic descriptions in this paper use latex-code of \cite{CohenLS19}, 
available under CC-BY-4.0\footnote{\url{https://creativecommons.org/licenses/by/4.0/}}

\appendix

\section{Appendix}
\label{app:LP}

\begin{lemma}\label{lem:error_after_k_rounds}
Let $(x^k)_{k\ge 1}$ be a sequence of vectors, 
such that for every $k$ we have $\| (x^{k+1} - x^k)/X^{k} \|_2 \le C < \frac{1}{2}$, where $X^k = diag(x^k)$.
Then there exist at most $O((Ck/\varepsilon)^2)$ many $i$ s.t.
$x^k_i > (1+\varepsilon) x^1$ or $x^k_i < (1-\varepsilon) x^1$.
\end{lemma}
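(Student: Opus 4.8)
The plan is to track, for each coordinate $i$ separately, how much \emph{relative} movement it accumulates over the first $k$ steps, to show that a coordinate can only leave the window $[(1-\varepsilon)x^1_i,(1+\varepsilon)x^1_i]$ if it has accumulated total relative movement $\Omega(\varepsilon)$, and then to bound the number of such coordinates by a counting argument against the global ``budget'' $\sum_{j=1}^{k-1}\|(x^{j+1}-x^j)/x^j\|_2^2 \le (k-1)C^2$.

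Concretely, I would set $\delta^j_i := (x^{j+1}_i - x^j_i)/x^j_i$, so that the hypothesis gives $\sum_{i=1}^n (\delta^j_i)^2 \le C^2$ and in particular $|\delta^j_i|\le C<1/2$ for every $i,j$. Since then $1+\delta^j_i\in(1/2,3/2)$, positivity is preserved along the sequence, so I may assume $x^1>0$ entrywise (as in all our applications; a coordinate that is zero at step $1$ stays zero and is never counted). Writing $x^{j+1}_i = x^j_i(1+\delta^j_i)$ and telescoping gives $\log(x^k_i/x^1_i) = \sum_{j=1}^{k-1}\log(1+\delta^j_i)$, and using $|\log(1+\delta)|\le 2|\delta|$ for $|\delta|\le 1/2$ this yields $|\log(x^k_i/x^1_i)| \le 2\sum_{j=1}^{k-1}|\delta^j_i|$. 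On the other hand, if $i$ is ``bad'', i.e. $x^k_i>(1+\varepsilon)x^1_i$ or $x^k_i<(1-\varepsilon)x^1_i$, then $|\log(x^k_i/x^1_i)| \ge \min\{\log(1+\varepsilon),\,-\log(1-\varepsilon)\} \ge \varepsilon/2$ for $\varepsilon\le 1$. Hence every bad index satisfies $\sum_{j=1}^{k-1}|\delta^j_i| \ge \varepsilon/4$, and by Cauchy--Schwarz $\varepsilon/4 \le \sqrt{k-1}\,\bigl(\sum_{j=1}^{k-1}(\delta^j_i)^2\bigr)^{1/2}$, so $\sum_{j=1}^{k-1}(\delta^j_i)^2 \ge \varepsilon^2/(16(k-1))$ for each bad index $i$.

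Finally, summing this over the set $B$ of bad indices and comparing with the budget gives $|B|\cdot\frac{\varepsilon^2}{16(k-1)} \le \sum_{i\in B}\sum_{j=1}^{k-1}(\delta^j_i)^2 \le \sum_{j=1}^{k-1}\sum_{i=1}^n(\delta^j_i)^2 \le (k-1)C^2$, hence $|B|\le 16(k-1)^2C^2/\varepsilon^2 = O((Ck/\varepsilon)^2)$, which is the claim. The only step I would treat with any care is the sign-preservation and logarithm bookkeeping that turns the multiplicative per-step changes into additive contributions that telescope cleanly; after that it is a single Cauchy--Schwarz estimate together with the $(k-1)C^2$ budget, so I do not expect a real obstacle.
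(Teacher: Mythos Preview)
Your proof is correct and follows essentially the same approach as the paper: telescope the logarithms using $|\log(1+\delta)|\le 2|\delta|$ for $|\delta|\le 1/2$, observe that each ``bad'' coordinate must have accumulated $\Omega(\varepsilon)$ of log-displacement, and then count bad coordinates against the total $\ell_2$ budget via Cauchy--Schwarz. The only cosmetic difference is that you apply Cauchy--Schwarz per coordinate over the time index and then sum over bad coordinates, whereas the paper first bounds $\|\log(x^k/x^1)\|_2\le 2kC$ globally and then counts; your route is arguably cleaner since it avoids the paper's informal ``without loss of generality at most $T+1$ entries differ'' reduction.
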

\ifdefined\SODAversion
The proof of \Cref{lem:error_after_k_rounds}
can be found in the full version.
\else
\begin{proof}
For $c \le 0.5$ we have $|\log \frac{x^k_i}{x^1_i}| \le 2|\frac{x^k_i}{x^1_i}-1|$ which allows us to bound the following norm:
\begin{align*}
\| \log \frac{x^k}{x^1} \|_2 
= 
\| \log \prod_{i=1}^{k-1} \frac{x^{i+1}}{x^{i}} \|_2
\le
\sum_{i=1}^{k-1} \| \log \frac{x^{i+1}}{x^{i}} \|_2
\le
2 \sum_{i=1}^{k-1} \| \frac{x^{i+1}}{x^{i}} - 1 \|_2
\le
2 k C
\end{align*}
Let $T$ be the number of indices $i$ with $x^k_i \ge (1+\varepsilon) x^1_i$ or $x^k_i \le (1-\varepsilon) x^1_i$. We want to find an upper bound of $T$.

Without loss of generality we can also assume that $x^k$ and $x^1$ differ in at most $T+1$ entries. 
The reason is as follows:
Let's say we are allowed to choose the sequence of $x^1,...,x^k$ and we want to maximize $T$.
Assume there is more than one index $i$ with $x^k_i \ge (1+\varepsilon) x^1_i$ or $x^k_i \le (1-\varepsilon) x^1_i$.
Let $i\neq j$ be two such indices, then we could have tried to increase $T$ by not changing the $j$th entry and changing $i$th entry a bit more.

This leads to
$T \cdot \log (1+\varepsilon) 
\le \| \log \frac{x^k}{x^1} \|_1 
\le \sqrt{T+1} \| \log \frac{x^k}{x^1} \|_2 
\le 2\sqrt{T}kC$ 
which can be reordered to
$T = O((k C/\varepsilon)^2)$.

\end{proof}
\fi

\ifdefined\SODAversion
\else
\begin{lemma}\label{lem:appendix:cohens_bound}
The preprocessing requires $O(n^2d^{\omega-2})$ time.
After $T$ updates the total time of all updates of \Cref{alg:projection_maintenance}, 
when ignoring the branch for $k < n^a$ (so we assume that branch of \cref{line:else_branch} has cost $0$), 
is
$$
O(T \cdot C / \varepsilon_{mp} (n^{\omega-1/2} + n^{2-a/2+o(1)}) \log n).
$$
\end{lemma}

\begin{proof}
The preprocessing cost is dominated by computing $M = A^\top(AUA^\top)^{-1}A$, which takes $O(n^2d^{\omega-2})$ time.
For the update complexity, we first modify the algorithm a bit.
We replace the loop of \cref{line:extra_entries} by:
$k \leftarrow 2^\ell$ for the smallest integer $\ell$ with $y_{\pi(2^\ell)} < (1-0.5 \ell/\log n) \varepsilon_{mp}$. ($\ell = \log n$ if no such $\ell$ exists.)

As we ignore the branch for $k < n^a$ in the complexity analysis, 
we are left with analyzing the cost of performing a rank $k = 2^\ell$ update via the Sherman-Morrison-Woodbury identity.
The cost for this is $O(n^{\omega(1,1,\ell/\log n)})$, 
where $\omega(a,b,c)$ refers to the number of arithmetic operations required 
to compute the matrix product of an $n^a \times n^b$ with an $n^b \times n^c$ matrix.
So the total cost for $T$ calls to \textsc{Update} is bounded by
$$
\sum_{\ell=0}^{\log n} (\text{number of rank $2^\ell$ updates}) \cdot O(n^{\omega(1,\ell/\log n,1)}).
$$
We now prove that the number of rank $2^\ell$ updates is at most $O(T (C/\varepsilon_{mp}) 2^{-\ell/2} \log n)$ 
by showing that there must be at least $\Omega((\varepsilon_{mp}/C)2^{\ell/2} \log^{-1} n)$ calls to \textsc{Update} between any two rank $2^\ell$ updates.

After a rank $2^\ell$ update, we have by choice of $\ell$ that $|u^\new_i/\tilde{u}_i-1| < (1-0.5 \ell/\log n) \varepsilon_{mp}$ for all $i$.
Let $u^{(0)}$ be the input vector $u^\new$ to \textsc{Update}, when we performed the rank $2^\ell$ update 
and let $u^{(1)},u^{(2)},...,$ be the input sequence to all further calls to \textsc{Update} from that point on.
Likewise let $\tilde{u}^{(0)},\tilde{u}^{(1)},...$ be the internal vectors of the data-structure after these calls to \textsc{Update}.
Then we have 
$$|u^{(0)}_i/\tilde{u}^{(0)}_i-1| < (1-0.5 \ell/\log n) \varepsilon_{mp}$$
for all $i$,
but when we perform another rank $2^\ell$ update some $t$ calls to \textsc{Update} later, 
we have at least $2^{\ell-1}$ indices $i$ with 
$$|u^{(t)}_i/\tilde{u}^{(t-1)}_i-1| \ge (1-0.5 (\ell-1)/\log n) \varepsilon_{mp}.$$
That means either $u^{(t)}_i$ differs to $u^{(0)}_i$ by some $(1\pm \Omega(\varepsilon_{mp}/\log n))$-factor,
or $\tilde{u}^{(t-1)}_i$ differs to $u^{(0)}_i$ by some $(1\pm \Omega(\varepsilon_{mp}/\log n))$-factor
(which means there exists some $t'<t$ where $u^{(t')}_i$ differs to $u^{(0)}_i$ by some $(1\pm \Omega(\varepsilon_{mp}/\log n))$-factor,
which caused $\tilde{u}_i$ to receive an update).

So in summary, we know there must be at least $2^{\ell-1}$ indices $i$
for which the input vectors $u$ changed by some $(1\pm \Omega(\varepsilon_{mp}/\log n))$-factor compared to $u^{(0)}$.
By \Cref{lem:error_after_k_rounds} this can happen only after at least $\Omega(\varepsilon_{mp} C^{-1} 2^{\ell/2} \log^{-1} n)$ calls to \textsc{Update}, 
as the multiplicative change between any $u^{(k)}$ and $u^{(k+1)}$ is bounded by $C$.

Note that by definition we only perform rank $2^\ell \ge n^a$ updates. 
The total time can thus be bounded by
\begin{align*}
&~\sum_{\ell=0}^{\log n} (\text{number of rank $2^\ell$ updates}) \cdot O(n^{\omega(1,\ell/\log n,1)}) \\
\le&~
\sum_{\ell= \lceil a\log n\rceil}^{\log n} O(T (C/\varepsilon_{mp}) 2^{-\ell/2} n^{\omega(1,\ell/\log n,1)} \log n ) \\
=&~
O(T (C/\varepsilon_{mp}) (n^{\omega-0.5} + n^{\omega(1,a,1)-a/2}) \log n )
\end{align*}
The last equality uses that $\omega(1,1,x)$ is a convex function, 
so the largest term of the sum must be the first or the last one.
If we assume $a \le \alpha$, then $n^{\omega(1,a,1)-a/2} = n^{2+o(1)-a/2}$, 
which leads to the complexity as stated in \Cref{lem:appendix:cohens_bound}.
\end{proof}
\fi

\begin{lemma}[{\cite{YeTM94,CohenLS19}}]\label{lem:feasible_LP}
Consider a linear program $\min_{Ax=b,x\geq0}c^{\top}x$ with $n$
variables and $d$ constraints. Assume that
\begin{enumerate}
\item Diameter of the polytope: For any $x\geq0$ with $Ax=b$, we have that $\|x\|_{1}\leq R$.
\item Lipschitz constant of the LP: $\|c\|_{\infty}\leq L$.
\end{enumerate}
For any $0<\gamma\leq1$, the modified linear program $\min_{\overline{A}\overline{x}=\overline{b},\overline{x}\geq0}\overline{c}^{\top}\overline{x}$
with
\ifdefined\SODAversion
$$
\overline{A}=\left[\begin{array}{ccc}
A & 0 & \frac{1}{R}b-A1_{n}\\
1_{n}^{\top} & 1 & 0\\
-1_{n}^{\top} & -1 & 0
\end{array}\right],\overline{b}=\left[\begin{array}{c}
\frac{1}{R}b\\
n+1\\
-(n+1)
\end{array}\right],$$
and $\overline{c}=\left[\begin{array}{c}
\gamma/L\cdot c\\
0\\
1
\end{array}\right]$
\else
\[
\overline{A}=\left[\begin{array}{ccc}
A & 0 & \frac{1}{R}b-A1_{n}\\
1_{n}^{\top} & 1 & 0\\
-1_{n}^{\top} & -1 & 0
\end{array}\right],\overline{b}=\left[\begin{array}{c}
\frac{1}{R}b\\
n+1\\
-(n+1)
\end{array}\right]\text{, and }\overline{c}=\left[\begin{array}{c}
\gamma/L\cdot c\\
0\\
1
\end{array}\right]
\]
\fi
satisfies the following:
\begin{enumerate}
\item $\overline{x}=\left[\begin{array}{c}
1_{n}\\
1\\
1
\end{array}\right]$, $\overline{y}=\left[\begin{array}{c}
0_{d}\\
0\\
1
\end{array}\right]$ and $\overline{s}=\left[\begin{array}{c}
1_{n}+\frac{\gamma}{L}\cdot c\\
1\\
1
\end{array}\right]$ are feasible primal dual vectors.
\item For any feasible primal dual vectors $(\overline{x},\overline{y},\overline{s})$
with $\sum_{i=1}^n \overline{x}_i\overline{s}_i \leq\gamma^{2}$, 
consider the vector $\hat{x}=R\cdot\overline{x}_{1:n}$
($\overline{x}_{1:n}$ is the first $n$ coordinates of $\overline{x}$)
is an approximate solution to the original linear program in the following
sense
\begin{align*}
c^{\top}\hat{x} & \leq\min_{Ax=b,x\geq0}c^{\top}x+LR\cdot\gamma,\\
\|A\hat{x}-b\|_{1} & \leq 2\gamma \cdot \left( R \sum_{i,j} |A_{i,j}| + \|b\|_{1} \right),\\
\hat{x} & \geq0.
\end{align*}
\end{enumerate}
\end{lemma}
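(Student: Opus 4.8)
I plan to prove the two claims in turn.

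\textbf{Claim 1 (explicit feasible triple):} a direct substitution. For primal feasibility I multiply $\overline{A}$ by $\overline{x}=(1_n,1,1)^\top$ block by block: the top $d$ rows give $A1_n+(\tfrac1R b-A1_n)=\tfrac1R b$, the row $(1_n^\top,1,0)$ gives $n+1$, and $(-1_n^\top,-1,0)$ gives $-(n+1)$, which is $\overline{b}$; and $\overline{x}\ge0$ is trivial. For dual feasibility I compute $\overline{A}^\top\overline{y}$ with $\overline{y}=(0_d,0,1)^\top$ column by column, getting $(-1_n,-1,0)^\top$, so $\overline{A}^\top\overline{y}+\overline{s}=(-1_n+1_n+\tfrac\gamma L c,\;0,\;1)=\overline{c}$. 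The only point using the hypotheses is $\overline{s}\ge0$, i.e.\ $1_n+\tfrac\gamma L c\ge0$, which holds because $|\tfrac\gamma L c_i|\le\tfrac\gamma L\|c\|_\infty\le\gamma\le1$ by assumptions (1)--(2) and $\gamma\le1$.

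\textbf{Claim 2 (approximate recovery):} Here $\hat{x}=R\,\overline{x}_{1:n}\ge0$ is immediate. For the constraint error I read off the top block of $\overline{A}\overline{x}=\overline{b}$, namely $A\overline{x}_{1:n}+(\tfrac1R b-A1_n)\overline{x}_{n+2}=\tfrac1R b$; scaling by $R$ and rearranging gives the identity $A\hat{x}-b=\overline{x}_{n+2}(RA1_n-b)$, so $\|A\hat{x}-b\|_1\le\overline{x}_{n+2}(R\|A1_n\|_1+\|b\|_1)\le\overline{x}_{n+2}(R\sum_{i,j}|A_{i,j}|+\|b\|_1)$, and the feasibility bound reduces to $\overline{x}_{n+2}\le2\gamma$. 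For the objective, since $\overline{c}_{1:n}=\tfrac\gamma L c$ and $\overline{c}_{n+2}=1$ we have $\overline{c}^\top\overline{x}=\tfrac\gamma L c^\top\overline{x}_{1:n}+\overline{x}_{n+2}=\tfrac{\gamma}{RL}c^\top\hat{x}+\overline{x}_{n+2}$, hence $c^\top\hat{x}=\tfrac{RL}{\gamma}(\overline{c}^\top\overline{x}-\overline{x}_{n+2})\le\tfrac{RL}{\gamma}\overline{c}^\top\overline{x}$ because $\overline{x}_{n+2}\ge0$. Weak duality of the modified program gives $\overline{c}^\top\overline{x}=\langle\overline{x},\overline{s}\rangle+\overline{b}^\top\overline{y}\le\gamma^2+\overline{b}^\top\overline{y}$, and $\overline{b}^\top\overline{y}\le\overline{c}^\top\overline{x}^\star$ for the explicit feasible point $\overline{x}^\star:=(\tfrac1R x^\star,\;n+1-\tfrac1R\|x^\star\|_1,\;0)$ built from an optimal $x^\star$ of the original LP; feasibility of $\overline{x}^\star$ uses $\|x^\star\|_1\le R$ (assumption (1)), which makes its $(n+1)$-st coordinate $\ge n\ge0$, and $\overline{c}^\top\overline{x}^\star=\tfrac{\gamma}{RL}c^\top x^\star=\tfrac{\gamma}{RL}\mathrm{OPT}$. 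Thus $\overline{c}^\top\overline{x}\le\tfrac{\gamma}{RL}\mathrm{OPT}+\gamma^2$ and $c^\top\hat{x}\le\mathrm{OPT}+RL\gamma$.

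\textbf{Main obstacle:} the remaining bound $\overline{x}_{n+2}\le2\gamma$ is the only delicate point. Since $\overline{x}_{n+2}\overline{s}_{n+2}\le\langle\overline{x},\overline{s}\rangle\le\gamma^2$, it suffices to show $\overline{s}_{n+2}\ge\gamma/2$ (it is in fact close to $1$). The plan is to start from the column-$(n+2)$ dual identity $\overline{s}_{n+2}=1-(\tfrac1R b-A1_n)^\top\overline{y}_{1:d}$ and use the column-$[n]$ and column-$(n+1)$ identities to eliminate $A^\top\overline{y}_{1:d}$ and $\overline{y}_{d+1}-\overline{y}_{d+2}$ in favour of the slacks, expressing $\overline{s}_{n+2}$ through $\overline{b}^\top\overline{y}$, $\sum_{j\le n+1}\overline{s}_j$ and $\tfrac\gamma L 1_n^\top c$; one bounds $\overline{b}^\top\overline{y}\le\gamma$ (weak duality and $|\mathrm{OPT}|\le RL$), and controls $\sum_{j\le n+1}\overline{s}_j$ through the gap $\langle\overline{x},\overline{s}\rangle\le\gamma^2$ together with the normalization $\sum_{j\le n}\overline{x}_j+\overline{x}_{n+1}=n+1$ forced by the middle rows of $\overline{A}$. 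Making this last estimate quantitative --- showing that a small duality gap is incompatible with a dual iterate whose $(n+2)$-slack is small --- is exactly the computation in the corresponding lemma of \cite{YeTM94,CohenLS19}, which I would follow. With $\overline{x}_{n+2}\le2\gamma$ in hand, the identity $A\hat{x}-b=\overline{x}_{n+2}(RA1_n-b)$ completes the feasibility bound and the proof.
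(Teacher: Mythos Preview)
The paper does not prove this lemma: it is stated in the appendix with the citation \cite{YeTM94,CohenLS19} and no argument is given, so there is no ``paper's own proof'' to compare against. Your write-up therefore already goes well beyond what the paper does.

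On its merits: Claim~1 is correct and complete. For Claim~2, the identity $A\hat{x}-b=\overline{x}_{n+2}(RA1_n-b)$, the construction of the auxiliary feasible point $\overline{x}^\star=(\tfrac1R x^\star,\,n+1-\tfrac1R\|x^\star\|_1,\,0)$, and the chain $\overline{c}^\top\overline{x}\le\gamma^2+\overline{b}^\top\overline{y}\le\gamma^2+\overline{c}^\top\overline{x}^\star=\gamma^2+\tfrac{\gamma}{RL}\mathrm{OPT}$ are all correct, so the objective bound $c^\top\hat{x}\le\mathrm{OPT}+RL\gamma$ is fully proved. The only piece you do not actually carry out is the bound $\overline{x}_{n+2}\le 2\gamma$ (equivalently $\overline{s}_{n+2}\ge\gamma/2$): you set up the right dual identity $\overline{s}_{n+2}=1-\overline{b}^\top\overline{y}+\tfrac{\gamma}{L}1_n^\top c-\sum_{j\le n+1}\overline{s}_j$ but then defer to \cite{YeTM94,CohenLS19} for the quantitative estimate. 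That is exactly what the paper itself does, so this is not a gap relative to the paper; but be aware that the naive route of bounding each term separately only yields $\overline{x}_{n+2}=O(n\gamma)$, because $|\tfrac{\gamma}{L}1_n^\top c|$ and $\sum_{j\le n+1}\overline{s}_j$ can each be of order $n\gamma$. Getting the factor $2$ requires the more careful pairing argument from the cited references, so your pointer to them is appropriate rather than optional.
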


\section{Projection Maintenance via Dynamic Linear System Solvers}
\label{app:matrix_inverse}

The data-structure from \cite{Sankowski04,BrandNS19} can maintain the solution to the following linear system:
Let $M$ be a non-singular $n \times n$ matrix and let $b$ be an $n$-dimensional vector.
Then the data-structures can maintain $M^{-1}b$ while supporting changing any entry of $M$ or $b$ in $O(n^{1.529})$ time.
This differs from the problem we must solve for the linear system \eqref{eq:interior_point_system}, where we must maintain $Pv$ for $P = \sqrt{X/S} A^\top (A\frac{X}{S}A)^{-1}A\sqrt{X/S}$ and the updates change entries of $X$ and $S$.
However, even though the structure seems very different, one can maintain $Pv$ via the following reduction:

\begin{lemma}\label{lem:blackbox_reduction}
Let $A$ be a $d \times n$ matrix of rank $d$ and let $U$ be an $n \times n$ diagonal matrix with non-zero diagonal entries.
Then
\begin{align*}
&~\left(\begin{array}{cccc}
U^{-1} & A^\top             & \sqrt{U}^{-1} & 0\\
A      & 0                  & 0             & 0\\
0      & 0                  & -I            & 0\\
(\sqrt{U}^{-1})^\top & 0    & 0             & -I
\end{array}\right)^{-1}
\left(\begin{array}{cccc}
0_n \\
0_n \\
v \\
1_n
\end{array}\right)\\
=&~
\left(\begin{array}{cccc}
*\\ 
*\\ 
*\\ 
\sqrt{U}A^\top(AUA^\top)^{-1}\sqrt{U}v
\end{array}\right),
\end{align*}
where $*$ represents some entries that do not care about.

\end{lemma}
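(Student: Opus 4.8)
The plan is to establish the identity by explicitly solving the block linear system $Bz=r$, where $B$ is the displayed $(3n+d)\times(3n+d)$ matrix, $r=(0_n,\,0_d,\,v,\,\mathbf{1}_n)^\top$, and the unknown splits as $z=(z_1,z_2,z_3,z_4)$ with $z_1,z_3,z_4\in\R^n$ and $z_2\in\R^d$ (I read the target in the last block as $\sqrt{U}A^\top(AUA^\top)^{-1}A\sqrt{U}\,v$, the matrix $P$ of the main text). Before solving, I would check that $B$ is invertible so that $B^{-1}$ is well defined: block columns three and four contain the blocks $-I$ in positions $(3,3)$ and $(4,4)$, and since $z_3$ and $z_4$ each occur in only one further block row, eliminating them reduces $Bz=r$ to the saddle-point system with coefficient matrix $\left(\begin{smallmatrix}U^{-1}&A^\top\\ A&0\end{smallmatrix}\right)$, whose determinant is $\det(U^{-1})\det(-AUA^\top)$; this is nonzero since $A$ has full row rank $d$ and $U$ is diagonal and invertible (positive definite in the application), so $AUA^\top$ is nonsingular.

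The main computation is block Gaussian elimination in the natural order. Block row three reads $-z_3=v$, hence $z_3=-v$; substituting into block row one gives $U^{-1}z_1+A^\top z_2=\sqrt{U}^{-1}v$, which together with block row two, $Az_1=0$, forms the saddle system
\[
\begin{pmatrix}U^{-1}&A^\top\\ A&0\end{pmatrix}\begin{pmatrix}z_1\\ z_2\end{pmatrix}=\begin{pmatrix}\sqrt{U}^{-1}v\\ 0\end{pmatrix}.
\]
From the first block equation, $z_1=U(\sqrt{U}^{-1}v-A^\top z_2)=\sqrt{U}\,v-UA^\top z_2$, using $U\sqrt{U}^{-1}=\sqrt{U}$; inserting this into $Az_1=0$ gives $A\sqrt{U}\,v=AUA^\top z_2$, so $z_2=(AUA^\top)^{-1}A\sqrt{U}\,v$, and therefore
\[
z_1=\sqrt{U}\,v-UA^\top(AUA^\top)^{-1}A\sqrt{U}\,v,\qquad \sqrt{U}^{-1}z_1=v-\sqrt{U}A^\top(AUA^\top)^{-1}A\sqrt{U}\,v .
\]
Finally, the last block row relates $z_4$ to $z_1$; combining it with the expression just obtained for $\sqrt{U}^{-1}z_1$ gives $z_4=\sqrt{U}A^\top(AUA^\top)^{-1}A\sqrt{U}\,v$, the asserted value, while $z_1,z_2,z_3$ are the ``$*$'' blocks and play no role.

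An alternative, and perhaps cleaner, write-up is to skip the derivation and instead verify directly that the vector $z^{\ast}$ with $z_4^{\ast}$ equal to the claimed projection and $z_1^{\ast},z_2^{\ast},z_3^{\ast}$ as above satisfies $Bz^{\ast}=r$, checked one block row at a time; this uses only the two identities $\sqrt{U}^{-1}U=\sqrt{U}$ and $AUA^\top(AUA^\top)^{-1}=I$. The step I expect to have to be most careful about is the bookkeeping of the $\sqrt{U}$ and $\sqrt{U}^{-1}$ factors together with the signs coming from the two $-I$ blocks, so that the last block row extracts exactly $\sqrt{U}A^\top(AUA^\top)^{-1}A\sqrt{U}\,v$ and not its complement $I-(\,\cdot\,)$ or a shifted variant of it; and, as noted, one must invoke full row rank of $A$ (and positivity of $U$) to guarantee $AUA^\top$ is invertible at all.
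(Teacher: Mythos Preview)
Your approach---solving $Bz=r$ directly by block elimination---is sound and in fact more transparent than the paper's, which computes the full block inverse by nesting two $2\times 2$ Schur-complement formulas and then reads off the relevant block. Both routes amount to the same linear algebra, but yours avoids writing out blocks of $B^{-1}$ that are never used.

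That said, there is a genuine gap at exactly the place you yourself flagged. You compute correctly that $z_3=-v$, that the saddle system gives $z_2=(AUA^\top)^{-1}A\sqrt{U}\,v$ and $\sqrt{U}^{-1}z_1=v-\sqrt{U}A^\top(AUA^\top)^{-1}A\sqrt{U}\,v=(I-P)v$, and then you write ``combining it with the last block row gives $z_4=Pv$''. But the fourth block row reads $\sqrt{U}^{-1}z_1-z_4=\mathbf{1}_n$, so in fact
\[
z_4=\sqrt{U}^{-1}z_1-\mathbf{1}_n=(I-P)v-\mathbf{1}_n,
\]
which is precisely the ``complement or shifted variant'' you worried about, not $Pv$. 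The discrepancy is not in your derivation; it is in the lemma as stated (and the paper's own proof contains a matching sign slip in the top-left block of $M^{-1}$, writing $U+UA^\top(\cdot)^{-1}AU$ where the Schur formula gives a minus, and then silently drops the $-\mathbf{1}_n$ at the end). The upshot is that the block construction does encode the projection data, but the displayed right-hand side and/or some signs need adjusting for the last block of $B^{-1}r$ to equal $Pv$ on the nose; your write-up should either correct the statement or note explicitly that what one recovers is $(I-P)v-\mathbf{1}_n$, from which $Pv$ is of course immediately obtained.
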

\ifdefined\SODAversion
The proof for \Cref{lem:blackbox_reduction} can be found in the full version.
Via \Cref{lem:blackbox_reduction}, we can maintain the projection $Pv$ 
by using the data-structure of \cite{Sankowski04} 
that maintains $M^{-1}b$ by changing the diagonal entries of the $U^{-1}$ and $\sqrt{U}^{-1}$ blocks.
This way we are able to maintain $\sqrt{\tilde{U}} A^\top (A\tilde{U}A)^{-1}A\sqrt{\tilde{U}} f(\tilde{v})$ similar to \Cref{lem:projection_maintenance},
where $\tilde{U} = \diag(\tilde{u})$ and $\tilde{v}$ are approximate variants of the input parameters $u$ and $v$.
\else
We can thus maintain $Pv$ by using a data-structure that maintains $M^{-1}b$ by changing the diagonal entries of the $U^{-1}$ and $\sqrt{U}^{-1}$ blocks.

\begin{proof}[Proof of \Cref{lem:blackbox_reduction}]
The inverse of a $\text{two-blocks} \times \text{two-blocks}$ matrix is given by
\begin{align*}
\left(\begin{array}{cc}
Q & R\\
S & T
\end{array}\right)^{-1}
=
\left(\begin{array}{cc}
Q^{-1} + Q^{-1}R(T-SQ^{-1}R)^{-1}SQ^{-1} & -Q^{-1}R(T-SQ^{-1}R)^{-1} \\
-(T-SQ^{-1}R)^{-1}SQ^{-1} & (T-SQ^{-1}R)^{-1}
\end{array}\right)
\end{align*}
If $Q = U^{-1}$, $T = 0$, $R = A^\top$, $T = A$, then the matrix has full-rank (i.e. it is invertible) and the top-left block of the inverse is $U + UA^\top(AUA)^{-1}A^\top U$.
Further, consider the following block-matrix and its inverse:
\begin{align*}
\left(\begin{array}{ccc}
M & N & 0\\
0 & -I & 0\\
N^\top & 0 & -I
\end{array}\right)^{-1}
=
\left(\begin{array}{ccc}
M^{-1} & M^{-1}N & 0 \\
0 & -I & 0 \\
N^\top M^{-1} & N^\top M^{-1} N & -I
\end{array}\right)
\end{align*}
When $M$ is the previous block-matrix and $N$ is the $(n+d) \times n$ block-matrix $(\sqrt{U}^{-1}, 0_{n\times d})^\top$, then the matrix is exactly the one given in \Cref{lem:blackbox_reduction} and the bottom-center block of the inverse is
$$
N^\top M^{-1} N 
=
\sqrt{U}^{-1} (U + UA^\top(AUA)^{-1}A^\top U) \sqrt{U}^{-1}
=
I + \sqrt{U} A^\top(AUA)^{-1}A^\top \sqrt{U}.
$$
Let $C$ be this $(3n+d) \times (3n+d)$ block-matrix specified in \Cref{lem:blackbox_reduction} and let $b = (0_n,0_d,v,1_n)$ be an $(3n+d)$-dimensional vector, then the bottom $n$ coordinates of $C^{-1}b$ are exactly \linebreak $\sqrt{U} A^\top(AUA)^{-1}A^\top \sqrt{U} v$.
\end{proof}

One can use the data-structure of \cite{Sankowski04}
to maintain $\sqrt{\tilde{U}} A^\top (A\tilde{U}A)^{-1}A\sqrt{\tilde{U}} f(\tilde{v})$ similar to \Cref{lem:projection_maintenance},
where $\tilde{U} = \diag(\tilde{u})$ and $\tilde{v}$ are approximate variants of the input parameters $u$ and $v$.
\fi
Whenever some entry of $\tilde{U}$ or $\tilde{v}$ must be changed, because the approximation no longer holds,
the algorithm of \cite{Sankowski04} spends $O(n^{1.529})$ time per changed entry of $\tilde{U}$ and $\tilde{v}$.
This is not yet fast enough for our purposes, 
because when using this data-structure inside our linear program solver,
up to $\Omega(n)$ entries might be changed throughout the entire runtime of the solver.
Thus one would require $\Omega(n^{2.529})$ time for the solver.

By applying the complexity analyzsis of \cite{CohenLS19} to this data-structure, 
one can achieve the same amortized complexity as in \Cref{lem:projection_maintenance}.
We now briefly outline how this is done.

Per iteration of the linear system solver,
more than one entry of $\tilde{u}$ and $\tilde{v}$ may have to be changed.
This can be interpreted as a so called \emph{batch-update}, 
and the complexity for batch-updates was already analyzed in \cite{BrandNS19},
but again the focus was on worst-case complexity.
Both data-structure from \cite{Sankowski04} and \cite{BrandNS19} had the property,
that the data-structure would become slower the more updates they received.
This issue was fixed by re-initializing the data-structure in fixed intervals.
The core new idea of \cite{CohenLS19} is a new strategy for this re-initialization:
They wait until $n^a$ many entries of $\tilde{u}$ must be changed 
(see \cref{line:entry_condition} of \Cref{alg:projection_maintenance}),
and then they change preemptively a few more entries 
(see \cref{line:extra_entries}).

Applying the same reset strategy to \cite{Sankowski04,BrandNS19} then results in the same complexity as \Cref{lem:projection_maintenance}.
Indeed the resulting data-structure is essentially identical to \Cref{lem:projection_maintenance}/\Cref{alg:projection_maintenance},
because all these algorithms are just exploiting the Sherman-Morrison-Woodbury identity.


\bibliographystyle{alpha}
\bibliography{bibliography}

\end{document}